\documentclass[10pt,aps,pra,amsmath,amssymb,amsfonts,superscriptaddress,twocolumn]{revtex4-2}
\usepackage{graphicx} 
\usepackage{appendix}
\usepackage{amsfonts,amssymb}
\usepackage{physics}
\usepackage{tikz}
\usepackage{dsfont}
\usepackage{quantikz}
\usepackage{amsthm}

\newcommand{\sumint}{%
  \mathop{%
    \vcenter{%
      \offinterlineskip
      \halign{%
        \hfil##\hfil\cr
        $\sum$\cr
        \noalign{\kern-1.6ex}
        $\int$\cr
      }%
    }%
  }%
}

\newtheorem{theorem}{Theorem}[section]  
\newtheorem{lemma}[theorem]{Lemma}

\theoremstyle{definition}
\newtheorem{definition}[theorem]{Definition}

\begin{document}

\title{Resource-optimal quantum mode parameter estimation with multimode Gaussian states }

\author{Maximilian Reichert}
\email{maximilian.reichert@ehu.eus}
\affiliation{Department of Physical Chemistry, University of the Basque Country UPV/EHU, Apartado 644, 48080 Bilbao, Spain}
\affiliation{EHU Quantum Center, University of the Basque Country UPV/EHU, Apartado 644, 48080 Bilbao, Spain} 
\author{Mikel Sanz}
\affiliation{Department of Physical Chemistry, University of the Basque Country UPV/EHU, Apartado 644, 48080 Bilbao, Spain}
\affiliation{EHU Quantum Center, University of the Basque Country UPV/EHU, Apartado 644, 48080 Bilbao, Spain}
\affiliation{Basque Center for Applied Mathematics (BCAM), Alameda de Mazarredo 14, 48009 Bilbao, Spain}
\author{Nicolas Fabre}
\email{nicolas.fabre@telecom-paris.fr}
\affiliation{Telecom Paris, Institut Polytechnique de Paris, 19 Place Marguerite Perey, 91120 Palaiseau, France}

\date{\today}

\begin{abstract}
Quantum mode parameter estimation addresses the problem of determining parameters that govern the shape of electromagnetic modes occupied by a quantum state of radiation. Canonical examples — estimating time delays and frequency shifts — underpin technologies such as radar, lidar, and optical clocks. A comprehensive framework for this class of problems was recently established, revealing that a broad family of quantum state designs can attain the Heisenberg limit and thereby surpass the precision of any classical strategy. This naturally raises the fundamental question: among all quantum-enhanced strategies, which is truly optimal? Answering this question requires identifying the physically meaningful resources that govern a given mode parameter estimation task, so that quantum states can be compared on equal footing. We show that these resources are intimately connected to the eigenmode basis of the generator of the relevant mode transformation. For time-shift estimation, whose generator is diagonal in the frequency domain, the pertinent resources are the mean frequency and bandwidth of the quantum state; analogous quantities emerge for other transformations. Crucially, our framework unifies two perspectives that have historically been treated separately: the particle-number (photon-counting) aspect and the mode-structure perspective of quantum light, providing a single coherent picture of quantum-enhanced sensing with multimode radiation. Within this unified framework, we derive a tight upper bound on the quantum Fisher information for Gaussian states, expressed in terms of these natural resources, and analytically identify the optimal Gaussian states saturating it. Strikingly, these optimal states take a particularly transparent form when expressed in the eigenbasis of the generator, which is a structural simplicity that reflects the deep connection between the geometry of the mode transformation and the architecture of the optimal probe. We further demonstrate that multimode homodyne detection constitutes the optimal measurement, achieving the quantum Fisher information bound and thus, completing the end-to-end characterization of optimal quantum metrology strategies for mode parameter estimation.

\end{abstract}

\maketitle

\section{Introduction}
Quantum light carrying information of interest -- whether in coherent, squeezed, thermal, or Fock states -- is characterized not only by its particle-number statistics but also by the specific electromagnetic modes involved. When used as a probe for parameter estimation, both degrees of freedom jointly determine the achievable precision. Modes are orthonormal solutions of Maxwell's equations \cite{fabre2020modes}, and the parameters that characterize their shape, such as central frequency, bandwidth, temporal duration, or spatial profile, are called mode parameters. Canonical examples arise throughout science and technology: the time delay of a returned pulse in lidar or radar, the frequency shift induced by the Doppler effect, and the transverse displacement of a beam in quantum microscopy. The natural framework for comparing classical and quantum strategies in such tasks is quantum metrology \cite{1976297,PhysRevD.23.1693}, which establishes the ultimate precision limits of any parameter estimation protocol regardless of the measurement performed. These limits are set by the quantum Cramér–Rao bound (QCRB), a fundamental lower bound on the variance of any unbiased estimator expressed in terms of the quantum Fisher information (QFI) of the probe state, and asymptotically saturable in the limit of many identical repetitions. A central theoretical goal is then to identify the optimal quantum probe states, that is, the best combination of mode engineering and particle-number statistics  that minimizes the QCRB for a given estimation task, and to determine the measurement that saturates it.

A broad consensus in the literature is to quantify resources by the mean photon number $N_S$, comparing only protocols that share the same value of this resource, while other state parameters such as bandwidth are often treated as classical resources. Within this framework, classical strategies are limited by the standard quantum limit (SQL), with estimator variance scaling as $N_S^{-1}$, whereas quantum-enhanced protocols  attain the Heisenberg limit, with variance scaling as $N_S^{-2}$. This paradigm has been demonstrated across a wide range of mode-parameter estimation problems: squeezed states in gravitational-wave interferometry \cite{PhysRevLett.123.231107,PhysRevLett.123.231108}, NOON-state phase estimation \cite{Lee01112002,dowling_quantum_2008} — experimentally demanding in both photonic and atomic platforms —, nonlinear interferometers based on stimulated emission of squeezed light \cite{qin_unconditional_2023}, which offer greater robustness to losses, time-frequency cluster states for time estimation \cite{descamps_quantum_2023}, and phase measurements using vibrational modes of trapped ions \cite{doi:10.1126/science.1097576,roos_designer_2006} or spin-squeezed states \cite{PhysRevA.46.R6797,sinatra_spin-squeezed_2022}. In all these cases, however, decoherence and losses ultimately prevent reaching the full Heisenberg limit at large photon number \cite{dowling_quantum_2008,descamps_quantum_2023}.
Using quantum resources are not only about beating the SQL: quantum metrology framework allows to highlight other quantum advantages in the low photon-number regime such as same resolution over the measurement of a given parameter using an entangled photon pair and low-classical field with phase stabilization, a higher signal-to-noise ratio with quantum states, that has impact on scanning fragile samples such as biological ones \cite{taylor_quantum_2016, lyons_attosecond-resolution_2018, chen_hong-ou-mandel_2019, fabre_parameter_2021,ndagano_quantum_2022}. In \cite{fabre_parameter_2021}, it was shown that two-photon spectral engineering at constant level of bandwidth allows reaching higher resolution on time and frequency estimation.  \\

More recently, Ref.~\cite{gessner2023b,sorelli2024,boeschoten2025}  developed a unified framework for mode-parameter estimation, providing a general recipe for designing quantum states that surpass the SQL and showing that populating at least two modes is generally required. These works  led to the identification of a broad class of quantum states that beat the classical limit, including coherent states with added squeezing, multimode squeezed vacuum states, and Fock states.  Quantum advantage arises not only from particle-number statistics of the probe but also from the modal structure of the field \cite{descamps_quantum_2023}, as illustrated for instance by the interplay between time and energy \cite{PRXQuantum.6.020351}. Recent work has shown that adding a phase reference to multimode entangled probes enables one to unambiguously identify the origin of quantum-enhanced precision, whether it stems from modal structure or from particle-number statistics \cite{saharyan_resources_2025}. Another important class of mode parameter estimation problems arises in lidar and radar \cite{Woodward1964,helstrom,reichert_quantum-enhanced_2022}, whose main tasks are to estimate the distance and velocity of a remote target by transmitting electromagnetic radiation and measuring the reflected echo. The distance is determined by measuring the time delay between the emitted and received signals, while the velocity is inferred from the Doppler-induced frequency shift. Various quantum protocols have been proposed for these tasks \cite{giovannetti2001quantum,giovannetti2002positioning,shapiro2007quantum,maccone2023gaussian,huang2021quantum,zhuang2017entanglement}, some of which demonstrate that quantum enhancement beyond the SQL is achievable.

In this manuscript, we study arbitrary mode parameter estimation protocols for a single unknown parameter $\lambda$, as represented in Fig.~\ref{introducfigure}. We first consider estimation with a finite set of bosonic modes, and show that the ultimate accuracy limit depends on the signal photon number  $N_S$ and two \textit{modal resource parameters}, $\bar{g}$ and $\Delta g$, which quantify characteristic features of the intensity distribution associated with a mode basis naturally induced by the mode transformation, for pure Gaussian quantum states. We derive an upper bound on the QFI of multimode pure Gaussian states expressed as a simple function of these three resources, exhibiting Heisenberg-limited scaling, and identify an optimal squeezed-vacuum state — constructed as a product of two single-mode squeezers in the mode basis induced by the transformation — whose QFI saturates this bound. This yields a resource-normalized framework in which quantum states sharing the same values of $\bar{g}$, $\Delta g$, and $N_S$ can be compared directly, with their QFIs serving as a benchmark for how effectively each state exploits the available resources. In particular, we show that differently spectrally engineered quantum states exhibit varying degrees of resolution enhancement through increased prefactors, while achieving Heisenberg scaling. We further show that multimode homodyne detection on these two modes constitutes an optimal measurement for achieving this bound. In the case of parameter shifts transformations, the two modal resources play physically distinct roles : achieving the precision associated with $\bar{g}$ requires phase-sensitive interferometric measurements that rely on prior knowledge of the parameter, whereas the precision enabled by $\Delta g$ can be attained with phase-insensitive measurements requiring no such prior knowledge.

We then apply our general framework to several concrete estimation scenarios, including time delays and frequency shifts \cite{fabre_parameter_2021,descamps_quantum_2023} as well as beam displacement and beam tilt \cite{delaubert2006tem,delaubert2006quantum,tsang_quantum_2016}. This requires starting from a continuous, infinite set of modes \cite{blow_continuum_1990} — the setting that arises naturally in the estimation of time delays and frequency shifts — which is subsequently discretized and truncated to make contact with the finite-mode results established in the first part. In these cases, the modal resource parameters $\bar{g}$ and $\Delta g$ reduce to familiar physical quantities: for time-shift (resp. frequency-shift) estimation, they correspond respectively to the mean frequency (mean time) and the spectral  bandwidth (time duration) of the probe state. This distinction between phase-sensitive and phase-insensitive measurements has direct implications for the practical implementation of optimal protocols across the scenarios we consider. The framework applies broadly, encompassing superresolution imaging protocols as well as lidar and radar scenarios, and provides a unified basis for identifying and comparing optimal quantum strategies across these diverse applications. We also discuss realistic experimental implementations of the optimal protocols identified throughout this work.

The article is organized as follows. In Sec.~\ref{sec:Preliminaries}, we introduce the necessary background on mode transformations, Gaussian states, and quantum estimation theory.  Sec.~\ref{sec:ModeParamterEstimationGaussian} presents the general expression for the QFI of a Gaussian state undergoing a mode transformation. There, we also derive a tight upper bound, identify a state that saturates it, and show that homodyne detection constitutes an optimal measurement. In Sec.~\ref{sec:TimeFrequencyEstimation} and Sec.~\ref{sec:BeamDisplacementAndTiltEstimation}, we apply our framework to several representative mode-parameter estimation problems, time and frequency parameters as well as beam-displacement and beam-tilt estimation, illustrating the practical relevance and usefulness of our results. Finally, in Sec.~\ref{conclusion}, we will conclude and open new perspectives.

\begin{figure}[htbp]
\centering
 \includegraphics[width=0.48\textwidth]{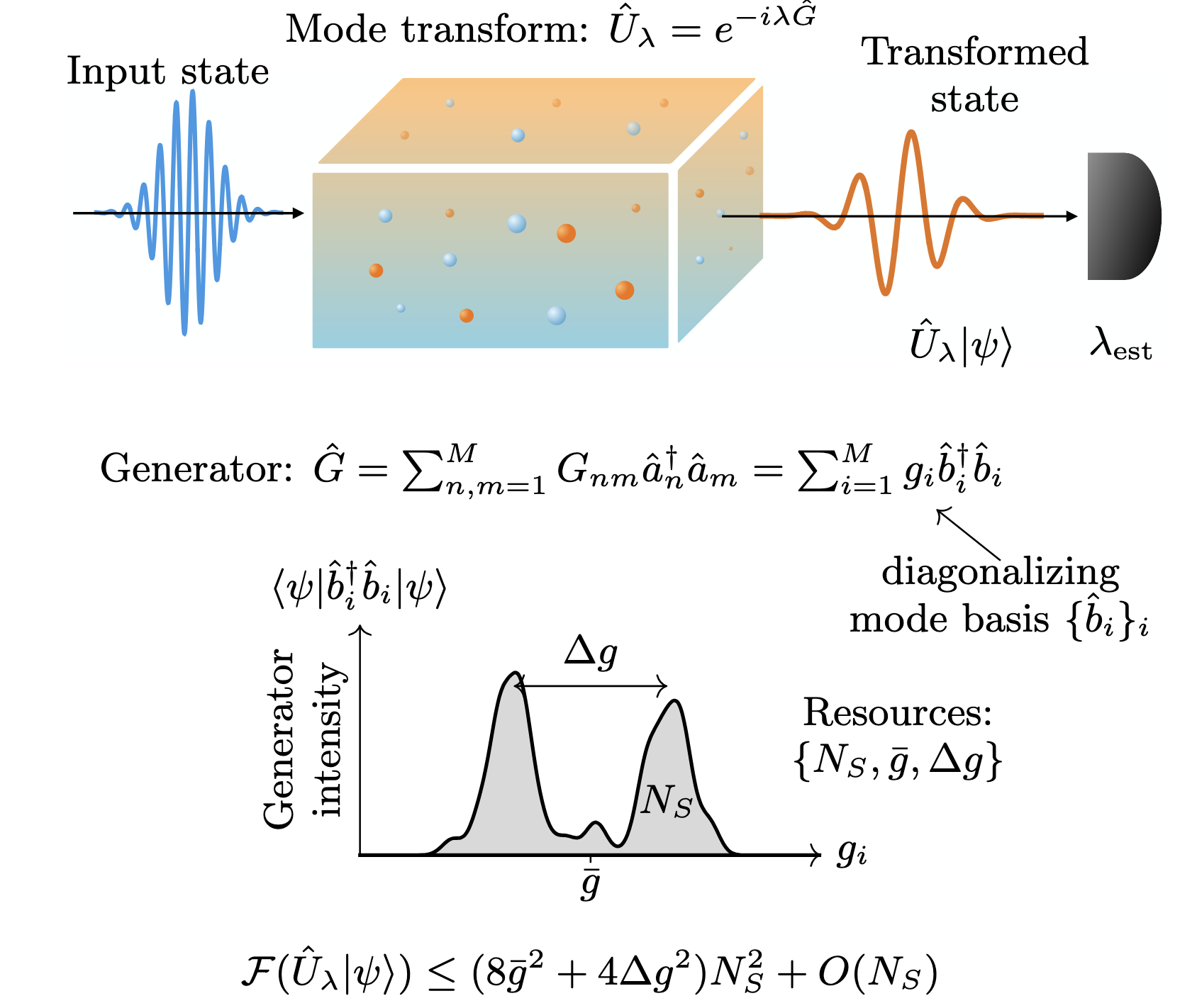}
\caption{ \label{introducfigure} 
(Top) General mode parameter estimation protocol. An input state $\ket{\psi}$ passes through an optical medium that induces a unitary mode transformation $\hat{U}_\lambda$, encoding the parameter of interest $\lambda$ into the state. $M$ being the number of modes. (Bottom) The relevant resources for the mode parameter estimation task are the signal photon number $N_S$, and the two modal resource parameters $\bar{g}$ and $\Delta g$, corresponding to the mean and variance of the normalized generator intensity distribution, as defined in the main text. }
\end{figure}

\section{Preliminaries}\label{sec:Preliminaries}

\subsection{Mode transforms}\label{sub:ModeTransform}
Let us introduce the modes and mode transformations that encode a parameter $\lambda$ into the field, where more details are given in Appendix~\ref{app:modetransformation}.
We introduce the set $\{\hat a_n\}_{n=1}^M$ of bosonic annihilation operators satisfying the commutation relations $[\hat a_n,\hat a_m^\dag]=\delta_{nm}$. For now, the number of modes $M$ is taken to be finite; in Sections~\ref{sec:TimeFrequencyEstimation} and \ref{sec:BeamDisplacementAndTiltEstimation} we show that our results extend to the case of infinitely many modes.
We note that our results apply to any physical system described by bosonic mode operators, but here we focus on quantum light.
The positive-frequency part of the electromagnetic field in a given quantization volume $\mathcal{V}$ can be decomposed as \cite{walschaers2021non,fabre2020modes}
\begin{align}
    \hat{\mathbf{E}}^{(+)}(\mathbf{r},t) = \sum_{n=1}^M \mathcal{E}_n\,\mathbf{v}_n(\mathbf{r},t)\, \hat a_n,
\end{align}
where $\mathcal{E}_n$ is the electric field per photon and $\{\mathbf{v}_n(\mathbf{r},t)\}_{n=1}^M$ is a set of $M$ mode functions that are solutions of Maxwell's equations, orthonormal under the scalar product $\int_{\mathcal{V}} d^3r\, \mathbf{v}_n^*(\mathbf{r},t)\, \mathbf{v}_m(\mathbf{r},t)/\mathcal{V} = \delta_{nm}$. The mode set can be chosen arbitrarily, though in many cases a preferred choice exists, adapted to the problem at hand. Here, $\hat a_n$ ($\hat a_n^\dag$) is the annihilation (creation) operator associated with the mode $\mathbf{v}_n(\mathbf{r},t)$. A mode transformation
\begin{align}
    \hat U_\lambda = \exp\!\left(-i\lambda \sum_{n,m=1}^M G_{nm}\, \hat a_n^\dag \hat a_m\right)\label{eq:UnitaryModeTransform}
\end{align}
is a unitary transformation generated by the Hermitian operator $\hat G=\sum_{n,m=1}^M G_{nm}\, \hat a_n^\dag \hat a_m$, where $G_{nm}$ are the elements of the Hermitian $M\times M$ generator matrix $G$. One could consider more general transformations with a parameter-dependent generator  $G_{nm}(\lambda)$. However, in local parameter estimation, this effectively reduces to our case given sufficiently accurate prior knowledge. For clarity, we therefore focus on this setting.  The action of $\hat U_\lambda$ on the mode operators is
\begin{align}
    \hat a_n(-\lambda) :=\hat U_\lambda^\dag\, \hat a_{n}\,\hat U_\lambda = \sum_{i=1}^{M} U_{ni}(\lambda)\,\hat a_i, \label{eq:HeisenbergTranformModeOperators}
\end{align}
where $U_{ni}(\lambda) = (e^{-i\lambda G})_{ni}$ are the elements of a unitary $M\times M$ matrix, and one verifies that $[\hat a_n(\lambda),\hat a_m^\dag(\lambda)]=\delta_{nm}$. This induces the following transformation of the electromagnetic field in the Heisenberg picture:
\begin{align}
     \hat{\mathbf{E}}^{(+)}(\mathbf{r},t;\lambda) = \sum_{n=1}^M \mathcal{E}_n\,\mathbf{v}_n(\mathbf{r},t)\,\hat a_n(-\lambda) = \sum_{i=1}^M \mathcal{E}_i\,\mathbf{v}_i(\mathbf{r},t;\lambda)\,\hat a_i,
\end{align}
where $\mathbf{v}_i(\mathbf{r},t;\lambda) = \sum_{n=1}^M U_{ni}(\lambda)\,\mathbf{v}_n(\mathbf{r},t)$ is a new set of parameter-dependent orthonormal modes.
The modes discussed above are cavity modes. In Sec.~\ref{sec:TimeFrequencyEstimation} and \ref{sec:BeamDisplacementAndTiltEstimation} we will also introduce spectro-temporal and spatial-momentum transverse modes.

\subsection{Gaussian states}
The quantum states of radiation we primarily consider in this paper are pure Gaussian states. In a given mode basis $\{\hat a_n\}_n$, any pure Gaussian state can be written as \cite{serafini2017}:
\begin{multline}\label{eq:GeneralGaussianArbitraryBasis}
    \vert\psi\rangle = \exp \left( \sum_{n=1}^M \beta_n \hat a_n^\dag - \mathrm{h.c.} \right) \\
     \times\exp \left( \frac{1}{2} \sum_{n,m=1}^M f_{nm}\,\hat a_n^\dag \hat a_m^\dag - \mathrm{h.c.} \right) \vert 0\rangle.
\end{multline}
 Here, $\vert 0\rangle$ is the vacuum state satisfying $\hat a_n\vert 0\rangle = 0$ for all $n$, $\beta_n\in\mathbb{C}$ is the displacement amplitude in mode $\hat a_n$, and $f_{nm}$ are the elements of the complex $M\times M$ squeezing matrix $f$ in the basis $\{\hat a_n\}_n$. Without loss of generality, $f$ can be taken to be symmetric, since its antisymmetric part cancels identically. The displacement vector $\boldsymbol{\beta}=(\beta_1,\ldots,\beta_M)^T$ and the squeezing matrix $f$ thus fully characterize the state. These quantities are straightforwardly related to the more commonly used mean vector and covariance matrix, as shown in Appendix.~\ref{app:CovarianceMatrix}.

By applying the Takagi factorization of the squeezing matrix, which is analogous to the Bloch–Messiah decomposition of the covariance matrix; see Appendix.~\ref{app:subsec:GaussianStateDisentangled}, any pure Gaussian state can be brought into a product form of independent single-mode displacement and squeezing operations:
\begin{align} \label{eq:GeneralGaussianState}
    \vert \psi\rangle = \hat V \left[ \bigotimes_{n=1}^M \hat D_{\hat a_n}(\alpha_n)\, \hat S_{\hat a_n}(r_n) \right] \vert 0\rangle.
\end{align}
Here, $\hat D_{\hat a_n}(\alpha_n) = \exp(\alpha_n \hat a_n^\dag - \alpha_n^* \hat a_n)$ is the single-mode displacement operator with $\alpha_n \in \mathbb{C}$, and $\hat S_{\hat a_n}(r_n) = \exp\!\left(\frac{r_n}{2}(\hat a_n^{\dag 2} - \hat a_n^2)\right)$ is the single-mode squeezing operator with $r_n \geq 0$. The operator $\hat V$ is a mode transformation of the same form as $\hat U_\lambda$, mixing the modes linearly as $\hat V^\dag \hat a_n \hat V = \sum_{i=1}^M V_{ni}\, \hat a_i$, where $V_{ni}$ are the elements of a unitary $M\times M$ matrix.

The Eq.~\eqref{eq:GeneralGaussianState} represents the same state as Eq.~\eqref{eq:GeneralGaussianArbitraryBasis}, but expressed in the mode basis $\{\hat c_n\}_n$ defined by $\hat c_n = \hat V \hat a_n \hat V^\dag$, in which the state takes the disentangled form
\begin{align}
\vert\psi\rangle = \bigotimes_{n=1}^M \hat D_{\hat c_n}(\alpha_n)\,\hat S_{\hat c_n}(r_n)\,\vert 0\rangle,
\end{align}
where we have used the identity $\hat V e^{\hat A}\hat V^\dag = e^{\hat V\hat A\hat V^\dag}$. In this basis, the displacement and squeezing parameters transform as $\alpha_n = \sum_{i=1}^M (V^{-1})_{ni}\,\beta_i$ and $\mathrm{diag}(r_1,\ldots,r_M) = V^{-1}f\,(V^T)^{-1}$, so that the squeezing matrix — and correspondingly the covariance matrix (see Appendix.~\ref{app:CovarianceMatrix}) — becomes diagonal.

Now, the Gaussian state transforms under the mode transform in the  Schr\"odinger picture as
\begin{align}
    \vert \psi_\lambda\rangle = \hat U_\lambda\vert\psi\rangle = \hat U_\lambda\hat V \hat D \hat S \vert 0\rangle ,
\end{align}
where we defined $\hat D= \otimes_{n=1}^M \hat D_{\hat a_n}(\alpha_n)$ and $\hat S = \otimes_{n=1}^M \hat S_{\hat a_n}(r_n)$.
As the mode transform $\hat U_\lambda$ is a passive Gaussian unitary, the parameter-imprinted state $\vert \psi_\lambda\rangle$ remains Gaussian after the interaction.

\subsection{Quantum estimation theory} 

To determine the ultimate accuracy limit for unbiased estimation of $\lambda$ encoded in the state $\vert\psi_\lambda\rangle$ for a particular positive-operator valued measure (POVM) $\{\hat\Pi_u\}_u$ with $\sum_u \hat \Pi_u = \mathds{1}$, we introduce the   Cram\'er-Rao bound (CRB) \cite{liu2020quantum}
\begin{equation}
\mathrm{Var}[\lambda_{\text{est}}] \geq \frac{1}{  F(\vert\psi_\lambda\rangle,\{\hat\Pi_u\}_u)} ,
\end{equation}
which is a lower bound on the variance for any unbiased estimator $\lambda_{\text{est}}$, given by the inverse of the   Fisher information (FI) 
\begin{align}
    F(\vert\psi_\lambda\rangle,\{\hat\Pi_u\}_u) = \sum_u \frac{(\partial_\lambda p(u\vert\lambda))^2}{p(u\vert \lambda)} ,
\end{align}
where $p(u\vert\lambda) = \langle \psi_\lambda \vert\hat \Pi_u\vert\psi_\lambda\rangle$ is the probability of measurement outcomes indexed as $u$.

By optimizing over all possible measurements, we obtain the quantum Cram\'er-Rao bound (QCRB)\cite{liu2020quantum}
\begin{equation}
\mathrm{Var}[\lambda_{\text{est}}] \geq \frac{1}{  \max_{\{\Pi_u\}_u}F(\vert\psi_\lambda\rangle,\{\hat\Pi_u\}_u)}  = \frac{1}{\mathcal{F}(\vert\psi_\lambda\rangle)},
\end{equation}
where the quantum Fisher information (QFI) for pure states is given by
\begin{equation} \label{eq:QFIdefinition}
{\cal{F}}(\vert\psi_\lambda\rangle)=4\left[ \langle \partial_{\lambda} \psi_\lambda |  \partial_{\lambda}\psi_\lambda \rangle - |\langle \partial_{\lambda} \psi_\lambda | \psi_\lambda \rangle|^{2} \right].
\end{equation}
As the QFI determines the ultimate accuracy limit for unbiased estimation, we use it as a  metric to compare the performance of different quantum states.

\section{Mode parameter estimation for Gaussian states}\label{sec:ModeParamterEstimationGaussian}
In this section, we present a resource-normalized framework applicable to any physical system described by a finite number of bosonic modes. The corresponding quantum circuit for mode parameter estimation is shown in Fig.~\ref{fig:circuit}.

\subsection{QFI for mode parameter estimation with Gaussian states}
We now determine the ultimate accuracy limits for mode parameter estimation with multimode pure Gaussian quantum states. First, let us note that the derivative of the parameter imprinted state can be written as~\cite{Gessner:23}: 
\begin{align}
    \vert \partial_\lambda \psi_\lambda \rangle 
    &= -i \hat U_\lambda \sum_{n,m=1}^M  G_{nm}     \hat a_n^\dag  \hat a_m   \vert \psi \rangle .
\end{align}
Using this together with the transformation rules $\hat V^\dag a_m \hat V = \sum_{j} V_{mj} \hat a_j$, $\hat D^\dag \hat a_j\hat D = \hat a_j +\alpha_j$, $\hat S^\dag \hat a_j \hat S = c_j \hat a_j + s_j \hat a_i^\dag$, 
we obtain, as shown in Appendix.~\ref{app:QFIderivation}, the following expression for the derivative state
\begin{align}
    \vert\partial_\lambda \psi_\lambda\rangle &= -i\hat U_\lambda \hat V \hat D \hat S \sum_{i,j=1}^M  \tilde G_{ij}  \Big[   c_i s_j \hat a_i^\dag \hat a_j^\dag \vert 0\rangle \nonumber\\ &+ (s_j \alpha_i^* \hat a_j^\dag + c_i \alpha_j \hat a_i^\dag)\vert 0\rangle +(s_i s_j \delta_{ij} +\alpha_i^* \alpha_j)\vert 0\rangle \Big] \nonumber\\
     &=:  \vert\partial_\lambda \psi\rangle_2 + \vert\partial_\lambda \psi\rangle_1 + \vert\partial_\lambda \psi\rangle_0,
\end{align}
  where we denote $\tilde G := V^\dag G V$ and  $c_j := \cosh (r_j)$ and $s_j := \sinh (r_j)$.
Note that $\langle\partial_\lambda\psi\vert \partial_\lambda\psi_\lambda\rangle = \, _2\langle\partial_\lambda\psi\vert \partial_\lambda\psi_\lambda\rangle_2 +\, _1\langle\partial_\lambda\psi\vert \partial_\lambda\psi_\lambda\rangle_1 +\, _0\langle\partial_\lambda\psi\vert \partial_\lambda\psi_\lambda\rangle_0$.
Putting the pieces together, we arrive at the mode parameter QFI for Gaussian states
\begin{equation}\label{eq:QFIcomponents}
\begin{aligned} 
    &\mathcal{F}(\vert \psi_\lambda\rangle) = 4\sum_{i,j=1}^M \tilde G_{ij} c_j s_j \tilde G_{ji}^* c_is_i  +4\sum_{i,j=1}^M  \tilde G_{ij} s_j s_j \tilde G_{ji} c_ic_i \\
    &\quad\quad+\sum_{i,i',j=1}^M 4\Big[  \tilde G_{ij} s_j s_j \tilde G_{j i'} \alpha_{i'}\alpha_i^* +  \tilde G_{i' j} c_j c_j \tilde G_{ji}\alpha_i\alpha_{i'}^*   
     \\&\quad\quad\quad\quad\quad+ \tilde G_{i' j} c_j s_j \tilde G_{ji}^* \alpha_i^*\alpha_{i'}^*  + \tilde G_{i' j}^* s_j c_j \tilde G_{ji}\alpha_i \alpha_{i'} \Big].
\end{aligned} 
\end{equation}
We refer to Appendix.~\ref{app:QFIderivation} for the detailed calculation.
  This formula is rather cumbersome and we can put it into a more convenient form
\begin{align}
    \mathcal{F}(\vert \psi_\lambda\rangle) &=4\Big ( \text{Tr}[\tilde G C S \tilde G^*  C S] + \text{Tr}[\tilde G S ^2 \tilde G C^2]\\ 
     &\quad+ \text{Tr}[\tilde G (S^2+C^2) \tilde G \mathcal A] + 2\mathrm{Re} \left[\text{Tr}[\tilde G^* S C \tilde G \mathcal B]\right]\Big) ,\label{sec4:eq:QFIexplicitFormula}
\end{align}
where  $C= \text{diag} (c_1,\ldots ,c_M)$, $S = \text{diag}(s_1,\ldots, s_M)$, and $\mathcal A$ and $\mathcal B$ are matrices with components $\mathcal A_{ij}= \alpha_i \alpha_j^*$ and $\mathcal B_{ij} = \alpha_i \alpha_j$. Ref.~\cite{sorelli2024} has found an alternative expression for the QFI that is also valid for mixed Gaussian states given in terms of symplectic eigenvalues.

\subsection{Resources and optimality for coherent states}\label{sub:ResourcesOptimalityCoherent}
As the general expression for the QFI for Gaussian states is quite complicated, let us first build some intuition by considering the subclass of coherent states, which are often considered to be classical states of radiation. 

The coherent states are of the form $\vert\psi^{\text{coh}}\rangle = \hat V \hat D\vert0\rangle$, which implies no squeezing, \textit{i.e}., $r_i =0$.  The QFI in this case takes the simple form
\begin{align}
    \mathcal{F}(\vert\psi^{\text{coh}}\rangle ) = 4\sum_{i,i',j=1}^M \alpha_{i'}^* \tilde G_{i' j}  \tilde G_{ji}\alpha_i   = 4 \boldsymbol{\alpha}^\dag \tilde G^2  \boldsymbol{\alpha} .
\end{align}
This is already a much simpler expression than the general case. To gain even deeper insight, let us rewrite the above expression. For that, note that $\tilde G = V^\dag G V$ is Hermitian, so that it can be diagonalized by a unitary $W$, i.e., $W  \tilde G W^\dag = D=\text{diag}(g_1,\ldots,g_M)$, where $\{g_n\}_n$ are the real eigenvalues of the generator matrix. We assume throughout the paper that $g_1\leq g_2\leq\ldots \leq g_M$. We find in Appendix.~\ref{app:QFIcoherent}
\begin{align}\label{eq:QFIcoherentDiag}
     \mathcal{F}(\vert\psi^{\text{coh}}\rangle ) = 4 \boldsymbol{\alpha}^\dag W^\dag D^2 W\boldsymbol{\alpha} = \sum_{n=1}^{M} g_n^2 \langle \psi^\text{coh}\vert \hat b_n^\dag \hat b_n\vert\psi^{\text{coh}}\rangle .
\end{align}
The bosonic operators  $\hat b_n = \sum_{i} (B^\dag)_{ni} \hat a_i$ with unitary $M\times M$ matrix $B^\dag =WV^\dag$ and $[\hat b_n,\hat b_m^\dag]=\delta_{nm}$  here correspond to the special mode basis in which the generator appears diagonal, that is, 
\begin{align}
    \hat G= \sum_{i,j=1}^M G_{ij}\hat a_i^\dag \hat a_j = \sum_{n=1}^M g_n \hat b_n^\dag \hat b_n.
\end{align}
Let us seek the optimal coherent state that maximizes the QFI. We see that by, for example, letting the photon number of some mode $j$ with $g_j\ne 0$ grow without bound, i.e., $\langle \psi^\text{coh}\vert \hat b_j^\dag \hat b_j\vert\psi^{\text{coh}}\rangle  \rightarrow \infty$, we have  $\mathcal{F}(\vert\psi^{\text{coh}}\rangle ) \rightarrow \infty$. Hence, an unbounded increase in photon number leads to an arbitrarily large QFI. In this naive sense, the “optimal’’ coherent state would correspond to one with an infinite photon number. However, in practice, one has only access to a finite amount of photons, and thus it makes sense to consider the number of photons a (constraint) resource. With this in mind, one can seek the optimal coherent state that maximizes the QFI for a constrained amount of photons.

In many physical settings, the number of modes is infinite, $M = \infty$, and the generator matrix $G$ may likewise be unbounded. For mathematical clarity, we consider $M$ to be finite but arbitrarily large, interpreting it as a truncation cutoff, and allow the generator eigenvalues $\{g_1, \ldots, g_M\}$ to grow without bound as $M$ increases. This situation arises in the examples considered in Secs.~\ref{sec:TimeFrequencyEstimation} and \ref{sec:BeamDisplacementAndTiltEstimation}. In this regime, Eq.~\eqref{eq:QFIcoherentDiag} shows that the QFI can grow without bound even at fixed photon number, which underscores the necessity of introducing additional resource parameters associated with the modal distribution of the state.

We introduce two additional resources $\bar g$ and $\Delta g$ related to how the photon intensity is distributed among the modes $\{\hat b_n\}_n$ that diagonalize the generator $\hat G= \sum_{n}g_n\hat b_n^\dag \hat b_n$.
We define the normalized generator intensity distribution as
\begin{align}
    I_n = \frac{\langle\psi^\text{coh}\vert \hat b_n^\dag \hat b_n\vert\psi^{\text{coh}}\rangle}{N_S},
\end{align} 
where
 \begin{align}\label{eq:DefResourcePhot}
     N_S = \sum_{n\in \mathcal{I}_S} \langle\psi^\text{coh}\vert \hat b_n^\dag \hat b_n\vert\psi^{\text{coh}}\rangle
 \end{align}
 is the signal photon number.  The sum runs over all indices in  $\mathcal{I}_S := \{\, i = 1,\ldots,M \mid g_i \neq 0 \,\}$. The modes that belong to $\mathcal{I}_S$ are referred to   the signal modes. The complement $\mathcal{I}_I := \{\, i = 1,\ldots,M \mid g_i = 0 \,\}$ of $\mathcal{I}_S$  is the set of idler modes, as these modes are unaffected by the action of the mode transform $\hat U_\lambda$.
 We further define
 \begin{align}\label{eq:DefResourceMean}
     \bar g = \sum_{n=1}^M g_n I_n
 \end{align}
as the mean value of the normalized generator intensity and 
\begin{align}\label{eq:DefResourceVariance}
    \Delta g^2 = \sum_{n=1}^M (g_n-\bar g)^2 I_n
\end{align}
as the variance. With this, we see that the QFI takes the form
\begin{align}
      \mathcal{F}(\vert\psi^{\text{coh}}\rangle )  = 4 \left( \bar g^2 +\Delta g^2 \right) N_S . \label{eq:QFIcoherentGeneral}
\end{align}
In this form, it is clear that $\{N_S, \bar{g}, \Delta g\}$ play the role of resource parameters: increasing any one of them raises the QFI and thus improves metrological performance. Notably, any two coherent states sharing the same values of these parameters yield identical QFI, regardless of the detailed structure of the underlying modal intensity distribution. The metrological performance is therefore determined entirely by the first two moments $\bar{g}$ and $\Delta g$ of that distribution, together with the signal photon number $N_S$. As we show below, the full modal photon distribution becomes relevant once squeezing is introduced.

\subsection{Resources for Gaussian states}\label{sec:Resources}
Let us now generalize the insights from the previous section for coherent states to general pure Gaussian states. The  parameters $\{ \bar g,\Delta g, N_S\}$ defined in Eqs.\eqref{eq:DefResourcePhot}-\eqref{eq:DefResourceVariance} can be straightforwardly generalized by using 
\begin{align}
     I_n = \frac{\langle\psi \vert \hat b_n^\dag \hat b_n\vert\psi\rangle}{N_S},
\end{align}
now taking the expectation value of the photon number operator with respect to a general Gaussian state $\vert\psi\rangle$. Note that  $\bar g  =\langle \psi  \vert \hat G\vert\psi\rangle/N_S$, but importantly $\Delta g^2 \not\propto \langle \psi\vert  \hat G^2\vert\psi\rangle -\langle \psi\vert \hat G\vert\psi\rangle^2$. The variance $\Delta g^2$ is not a variance with respect to a quantum state, but a variance of the normalized generator intensity distribution $I_n$. We can also write the resource parameters in a basis-independent way 
\begin{align}
N_S &= \text{Tr}[\tilde P_S (S^2+\mathcal A)] \label{eq:DefResourceMeanPhotIndependent}\\
    \bar g &=\frac{\text{Tr}[ \tilde G (S^2 + \mathcal A)]}{N_S} \label{eq:DefResourceMeanBasisIndependent}\\
    \Delta g ^2 &= \frac{\text{Tr}[ \tilde G^2 (S^2 + \mathcal A)] }{N_S} -\bar g^2 \label{eq:DefResourceVarianceBasisIndependent}
\end{align}
as we show in Appendix.~\ref{app:ResourceParameters}. Here, $\tilde P_S = V^\dag P_S V$, where $P_S$ is the projector onto the support of $G$,  that is, onto the subspace spanned by eigenvectors of $G$ associated with all non-zero eigenvalues. Note that, for any pure state, the QFI takes the simple form $\mathcal F = 4\,\mathrm{Var}(\hat G)$. In general, $\mathrm{Var}(\hat G)$ depends on the particle-number statistics of the state (e.g., coherent, squeezed, or Fock-like, including its total photon number $N_S$) with the modal distribution of light over the eigenmodes of $\hat G$. For non-Gaussian states these two contributions need not factorize (see \cite{descamps_quantum_2023}). For pure Gaussian states, however, this factorization does hold, which is precisely why $\mathrm{Var}(\hat G)$ — and hence the QFI — can be written in terms of the three independent resources used throughout this work: the total photon number $N_S$, the mean generator eigenvalue $\bar g$, and the eigenvalue spread $\Delta g$. This separation makes explicit which contributions to the QFI scale at the Heisenberg limit ($N_S^2$) and which scale at the shot-noise limit ($N_S$). One of the central findings of this work is:
\begin{theorem} \label{TheoremBound}
The QFI $\mathcal{F}(\vert\psi_\lambda\rangle)$ of a general pure Gaussian quantum state $\vert \psi_\lambda\rangle = \hat U_\lambda \vert\psi\rangle$ subject to a mode transform   given in Eq.~\eqref{eq:UnitaryModeTransform} is upper bounded by
\begin{align}
    \mathcal{F}(\vert \psi_\lambda\rangle) \leq \mathcal{F}^{\mathrm{UB}}= \left( 8 \bar g^2  + 4 \Delta g^2  \right) N_S^2+  O(N_S) ,\label{eq:QFIbound}
\end{align}
where $\{N_S,\bar g,\Delta g\}$ are the resource parameters as defined in Eqs.~\eqref{eq:DefResourceMeanPhotIndependent}-\eqref{eq:DefResourceVarianceBasisIndependent}  of the input state $\vert\psi\rangle$.
\end{theorem}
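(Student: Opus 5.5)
The plan is to start from the explicit formula \eqref{sec4:eq:QFIexplicitFormula}, or equivalently from $\mathcal F = 4\,\mathrm{Var}(\hat G)$. I would work in the Takagi basis $\{\hat c_n\}$, where the state is a product of displaced squeezed vacua and $\tilde G$ is a general Hermitian matrix. Since $\tilde G$ is Hermitian, $\tilde G^*_{ji}=\tilde G_{ij}$. Using $C^2=\mathds 1+S^2$, the two squeezing-only traces become
\begin{align}
T_{\rm sq} = \sum_{i,j}\Big[\tilde G_{ij}^2\, c_is_ic_js_j + |\tilde G_{ij}|^2 s_j^2(1+s_i^2)\Big].
\end{align}
I would bound the first sum in absolute value, using $c_is_i\le s_i^2+\tfrac12$. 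This splits $T_{\rm sq}$ into a quartic part $2\sum_{ij}|\tilde G_{ij}|^2 s_i^2 s_j^2 = 2\,\mathrm{Tr}[(S\tilde GS)^2]$ plus quadratic remainders such as $\mathrm{Tr}[\tilde G^2 S^2]$. The remainders are proportional to $N_S(\bar g^2+\Delta g^2)$ by \eqref{eq:DefResourceVarianceBasisIndependent}, so they go into the $O(N_S)$ term.

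The displacement terms are handled the same way. I would write $\mathcal A=\boldsymbol\alpha\boldsymbol\alpha^\dag$ and $\mathcal B=\boldsymbol\alpha\boldsymbol\alpha^T$, and use Cauchy--Schwarz, $|\boldsymbol\alpha^T \tilde G^* SC\tilde G\boldsymbol\alpha| \le \|\sqrt{SC}\,\tilde G\boldsymbol\alpha\|^2$. Together with $\mathrm{Tr}[\tilde G(S^2+C^2)\tilde G\mathcal A]$, this gives a leading contribution of order $4\,\boldsymbol\alpha^\dag\tilde G S^2\tilde G\boldsymbol\alpha$ plus the coherent piece $\boldsymbol\alpha^\dag\tilde G^2\boldsymbol\alpha$. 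The latter is again $O(N_S)$, consistent with \eqref{eq:QFIcoherentGeneral}. Everything therefore reduces to controlling the quartic form
\begin{align}
Q = 2\,\mathrm{Tr}[(S\tilde GS)^2] + 4\,\boldsymbol\alpha^\dag \tilde G S^2\tilde G\boldsymbol\alpha
\end{align}
by $\big(2\bar g^2+\Delta g^2\big)N_S^2$, where $\bar g N_S=\mathrm{Tr}[\tilde G(S^2+\mathcal A)]$ and $(\bar g^2+\Delta g^2)N_S=\mathrm{Tr}[\tilde G^2(S^2+\mathcal A)]$.

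To bound $Q$, I would pass to the eigenbasis $\hat b_n$ of $\hat G$, where $\tilde G\to D=\mathrm{diag}(g_n)$. The state's second moments are encoded there by the Gram-type matrix $\rho=B^\dag(S^2+\mathcal A)B\ge0$ and the pairing matrix coming from $SC$. The distribution $I_n=\rho_{nn}/N_S$ is then just the diagonal of a positive matrix. At leading order $Q$ becomes $2\sum_{nm}g_ng_m|\rho_{nm}|^2+\dots$ (plus the pairing contribution). The positivity constraint $|\rho_{nm}|^2\le\rho_{nn}\rho_{mm}$ then gives $\sum_{nm}g_ng_m|\rho_{nm}|^2\le\sum_{nm}|g_n||g_m|\rho_{nn}\rho_{mm}$. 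I would like to compare this with the first two moments of $I_n$. I expect the correct split to be the following: the part $\sum_n g_n^2\rho_{nn}^2\le N_S^2(\bar g^2+\Delta g^2)$ contributes with weight $1$, and the mean part $(\sum_n g_n\rho_{nn})^2=\bar g^2N_S^2$ contributes with an extra factor. Together these would give $8\bar g^2+4\Delta g^2$ after multiplying by $4$.

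The main obstacle is exactly this last inequality. A crude Cauchy--Schwarz, $\mathrm{Tr}[(S\tilde GS)^2]\le \mathrm{Tr}S^2\,\mathrm{Tr}[S\tilde G^2S]$, only yields $8(\bar g^2+\Delta g^2)N_S^2$. That overcounts the spread by a factor of two, and mixed-sign $g_n$ make $|g_n|$ versus $g_n$ delicate. My first attempt would be to treat it as a constrained maximization. I would fix $N_S$, $\bar g$ and $\Delta g$ and maximize the leading $N_S^2$ coefficient over admissible $(\rho, \text{pairing})$. I would argue, by convexity in the weights $I_n$ and the positivity constraints, that the maximum is attained on two-mode configurations with both weights concentrated on two eigenmodes $g_a,g_b$. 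Then a direct two-mode computation, which is also the natural candidate for the saturating state mentioned in the introduction, should give exactly $(8\bar g^2+4\Delta g^2)N_S^2$.
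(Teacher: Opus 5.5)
\textbf{There is a genuine gap: the key inequality is never proved.} Your reduction to the quartic form is sound and coincides with the paper's Lemma~\ref{sec4:lemma1}, with one repair. Applying $c_is_i\le s_i^2+\tfrac12$ to both factors also produces the constant $\tfrac14\mathrm{Tr}[\tilde G^2]=\tfrac14\sum_n g_n^2$. This constant is not controlled by the resources and grows with the mode cutoff. Use $c_is_ic_js_j\le\tfrac12(c_i^2s_j^2+s_i^2c_j^2)=s_i^2s_j^2+\tfrac12(s_i^2+s_j^2)$ instead.

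The gap is the step you flag yourself. The bound $2\,\mathrm{Tr}[\tilde G S^2\tilde G S^2]+4\,\mathrm{Tr}[\tilde G S^2\tilde G\mathcal A]\le(2\bar g^2+\Delta g^2)N_S^2$ is the whole content of the theorem, and neither route you sketch establishes it.
\begin{itemize}
\item \emph{Absolute values.} Replacing $g_ng_m$ by $|g_n||g_m|$ before using $|\rho_{nm}|^2\le\rho_{nn}\rho_{mm}$ gives at best $(\sum_n|g_n|\rho_{nn})^2\le(\bar g^2+\Delta g^2)N_S^2$, i.e.\ again $8(\bar g^2+\Delta g^2)N_S^2$ for $\mathcal F$. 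This intermediate quantity really is as large as $(\bar g^2+\Delta g^2)N_S^2$ when the populated eigenvalues are $\pm g$. So this route cannot recover the factor $\tfrac12$ on $\Delta g^2$.
\item \emph{Convexity.} In the $\hat b$ basis the three resource constraints depend only on the diagonal $\rho_{nn}$. The map $\rho\mapsto\sum_{n,m}g_ng_m|\rho_{nm}|^2$ is an indefinite quadratic form when the spectrum has both signs. For two eigenmodes with $g_ag_b<0$, $\mathrm{diag}(\rho_{aa},\rho_{bb})$ is the midpoint of $v_\pm v_\pm^\dagger$ with $v_\pm=(\sqrt{\rho_{aa}},\pm\sqrt{\rho_{bb}})$. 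Both endpoints have the same diagonal and give $2\,\mathrm{Tr}[(D\rho)^2]=2\bar g^2N_S^2$. The midpoint gives the strictly larger value $2\bar g^2N_S^2-4g_ag_b\rho_{aa}\rho_{bb}$. So the maximum is not at extreme points, and the reduction to ``two-mode configurations'' is unsupported. In fact the bound is attained only when the populated eigenvalues are $\pm g$; the paper's optimal state has $g_{i,j}=\mp\sqrt{\bar g^2+\Delta g^2}$.
\item \emph{Pairing.} No pairing data survive your reduction, so there is nothing to optimise over besides $\rho$.
\end{itemize}

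\textbf{The missing idea and how to close it.} What is missing is the trace inequality the paper isolates as Lemma~\ref{sec4:lemma2}: for Hermitian $H$ and $Q\ge0$, $2\,\mathrm{Tr}[HQHQ]\le\mathrm{Tr}[HQ]^2+\mathrm{Tr}[H^2Q]\,\mathrm{Tr}[Q]$. It is applied with $H=\tilde G$ and $Q=\tilde P_S(S^2+\mathcal A)\tilde P_S$. The projector ensures $\mathrm{Tr}\,Q=N_S$ rather than the total photon number when idler modes are populated, which your sketch does not address. One also needs the identity $2\,\mathrm{Tr}[\tilde GS^2\tilde GS^2]+4\,\mathrm{Tr}[\tilde GS^2\tilde G\mathcal A]=2\,\mathrm{Tr}[(\tilde GQ)^2]-2(\boldsymbol\alpha^\dagger\tilde G\boldsymbol\alpha)^2$. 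Dropping the last nonpositive term then yields exactly $(2\bar g^2+\Delta g^2)N_S^2$.

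The paper proves the lemma in the eigenbasis of $Q$ by a lengthy regrouping. In your $\tilde G$-eigenbasis it is short, provided positivity is used with attention to signs. Write $\rho$ for $Q$ in that basis. Then $\mathrm{Tr}[D\rho]^2+\mathrm{Tr}[D^2\rho]\,\mathrm{Tr}[\rho]=\tfrac12\sum_{n,m}\rho_{nn}\rho_{mm}(g_n+g_m)^2$. Its $n=m$ terms coincide with those of $2\sum_{n,m}g_ng_m|\rho_{nm}|^2$. For $n\ne m$ one has $4g_ng_m|\rho_{nm}|^2\le\rho_{nn}\rho_{mm}(g_n+g_m)^2$: trivially if $g_ng_m\le0$, and otherwise via $|\rho_{nm}|^2\le\rho_{nn}\rho_{mm}$ and $4g_ng_m\le(g_n+g_m)^2$. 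With this inserted, your outline becomes a complete proof.
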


 The proof is provided in Appendix.~\ref{App:ProofTheorem}. Establishing the above bound and demonstrating that it can be saturated confirms that the quantities $\{N_S, \bar{g}, \Delta g\}$ defined in Eqs.~\eqref{eq:DefResourceMeanPhotIndependent}--\eqref{eq:DefResourceVarianceBasisIndependent} are genuine resources for mode parameter estimation: larger values of any one of them raise the ultimate precision limit and thus directly enhance the ultimate metrological performance.

Let us give some context to interpret this result. 
In the following, we focus on the regime $N_S \gg 1$ and primarily consider the
$N_S^2$ contributions, which exhibit HL scaling.
In many cases of interest, the QFI of a Gaussian state takes the form 
\begin{align}
    \mathcal{F}(\vert\psi_\lambda\rangle ) = (c_{\bar g} \bar g^2 +c_{\Delta g}\Delta g^2)N_S^2+O(N_S) \label{eq:QFIoptimalityDef},
\end{align}
 where $c_{\bar g} \geq 0$ and $c_{\Delta g} \geq 0$ are dimensionless constants. Provided $c_{\bar g},c_{\Delta g}>0$, it is clear that for large enough $N_S$, such states will always outperform any coherent state whose QFI only scales as $\sim N_S$. For this reason, one often focuses only on the scaling with $N_S$, rather than on the prefactors. The latter become important when comparing two Gaussian states whose QFIs both scale as $N_S^2$.
Traditionally, Gaussian states with the same photon number $N_S$ have been compared by examining their total QFI values. However, such comparisons can be misleading because the QFI also depends on the modal resource parameters $\bar{g}$ and $\Delta g$: a state with larger $\bar{g}$ may appear superior even if it exploits its resources less efficiently. Our approach overcomes this limitation by introducing a resource-normalized framework in which Gaussian states can be compared on equal footing. Fixing the three resources $\{N_S, \bar{g}, \Delta g\}$, the coefficients $c_{\bar{g}}$ and $c_{\Delta g}$ encapsulate how the modal distribution of squeezing and displacement in a given Gaussian state affects its metrological performance, providing a direct and physically meaningful figure of merit for comparing competing strategies. Within this framework, the optimal states — those saturating the upper bound in the large-photon-number regime $N_S \gg 1$ — are characterized by $c_{\bar{g}} = 8$ and $c_{\Delta g} = 4$.\\

Note that, we restrict our analysis to pure Gaussian states without loss of generality. This bound can be derived explicitly and non-asymptotically for \emph{any} mixed Gaussian state $\hat\rho$ via purification. For mixed states $\hat\rho$, we can simply append $M$ auxiliary modes on which $\hat G$ does not act, i.e., $g_n=0$ for all auxiliary modes. A Gaussian purification $\vert\psi\rangle_A$ of $\hat\rho$ with $\mathrm{Tr}_A[\vert\psi\rangle_A\,{}_A\langle\psi\vert]= \hat\rho$ always exists. Note that $\vert\psi\rangle_A\,{}_A\langle\psi\vert$ and $\hat\rho$ have the same resources $\{N_S,\bar g, \Delta g\}$ which are calculated using the natural generalizations $I_n ={}_A\langle\psi\vert \hat b_n^\dag \hat b_n\vert\psi\rangle_A/N_S$ and $I_n = \mathrm{Tr}[\hat b_n^\dag \hat b_n \hat \rho]/N_S$ in Eqs.\eqref{eq:DefResourcePhot}-\eqref{eq:DefResourceVariance}.
Using the monotonicity of the QFI under completely positive trace-preserving maps we obtain  $\mathcal F (\hat\rho_\lambda) = \mathcal F (\mathrm{Tr}_A[\vert\psi\rangle_A\,{}_A\langle\psi\vert])  \leq \mathcal F (\vert\psi_\lambda\rangle_A)\leq (8\bar g^2+4 \Delta g^2)N_S^2+O(N_S)$.

\subsection{Optimal states}\label{subsec:OptimalStates}
We now turn to the discussion of states that achieve different forms of optimality as defined by Eq.~\eqref{eq:QFIbound}. As we demonstrate in Secs.~\ref{sec:TimeFrequencyEstimation} and \ref{sec:BeamDisplacementAndTiltEstimation}, the two modal resources $\bar{g}$ and $\Delta g$ play physically distinct roles in the concrete estimation scenarios we consider. The quantity $\bar{g}$ is associated with interferometric sensitivity: achieving the corresponding QFI scaling $\sim \bar{g}^2 N_S^2$ requires a phase-sensitive measurement, which typically performs best when substantial prior knowledge of the parameter is available. In contrast, the contribution $\sim \Delta g^2 N_S^2$ is accessible through phase-insensitive measurements, making $\Delta g$ the operationally relevant resource when such prior knowledge is unavailable. The relative importance of $\bar{g}$ and $\Delta g$ therefore depends on both the measurement strategy employed and the degree of prior information about the parameter.\\

We classify quantum states according to the type of optimality they achieve in the asymptotic scaling of the quantum Fisher information (QFI). 
\begin{definition}[Optimality classes of states]\label{def:StateOptimality}
Let $c_{\bar g}$ and $c_{\Delta g}$ be the coefficients appearing in the asymptotic expansion of the QFI in Eq.~\eqref{eq:QFIoptimalityDef}.
\begin{itemize}
    \item A state is said to be \emph{optimal} if $c_{\bar g} = 8$ and $c_{\Delta g} = 4$.
    \item A state is said to be \emph{mean optimal} if $c_{\bar g} = 8$ and $c_{\Delta g} < 4$.
    \item A state is said to be \emph{variance optimal} if $c_{\bar g} < 8$ and $c_{\Delta g} = 4$.
\end{itemize}
\end{definition}
Optimal states simultaneously achieve the maximal asymptotic scaling associated with both the $\bar g$ and the  $\Delta g$ contributions.
Mean-optimal (variance-optimal) states saturate only the former (latter) contribution.

\subsubsection{Optimal state}\label{subsubsec:Optimal}
Let us now present an optimal state. It is a two-mode state and takes on a simple form in the $\{ \hat b_n\}_n$ mode basis, which is the mode basis that diagonalizes the generator, i.e., $\hat G= \sum_n g_n \hat b_n^\dag \hat b_n$. It takes the form
\begin{align}
    \vert\psi\rangle = \hat S_{e^{-i\phi_i/2}\hat b_{i}}(r_i) \hat S_{e^{-i\phi_j/2}\hat b_{j}}(r_j) \vert 0\rangle , \label{eq:CompletlyOptimalState}
\end{align}
where $i\neq j$ are some fixed mode indices, $r_i,r_j$ are the squeezing strengths and  $\phi_n$ and $\phi_m$ are  freely adjustable squeezing angles. The different mode basis are represented in Fig.~\ref{fig:circuit}.

For desired mean and variance resource parameters $\bar g$ and $\Delta g$, respectively, we choose the squeezing parameters $r_i,r_j$, such that
\begin{align}
    s_i^2 &= \frac{N_S}{2} \left(1 - \frac{ \bar g}{\sqrt{\bar g^2+\Delta g^2}} \right) \label{eq:OptSqueezing1}\\
     s_j^2 &= \frac{N_S}{2} \left(1 + \frac{ \bar g}{\sqrt{\bar g^2+\Delta g^2}}\right)  \label{eq:OptSqueezing2}
\end{align}
holds.
We find that this state achieves an optimal QFI of
\begin{align}
    \mathcal{F}(\vert\psi_\lambda\rangle) = (\bar 8 g^2 + 4\Delta g^2)N_S^2+\underbrace{8 (\bar g^2 +\Delta g^2)N_S}_{=O(N_S)} 
\end{align}
for arbitrary squeezing angles $\phi_i,\phi_j$, if we can find suitable modes with indices $i,j$, such that the corresponding generator eigenvalues $g_i,g_j$ (with $g_i < g_j$ without loss of generality) satisfy
\begin{align}
     g_i&= \bar g -    \Delta g  \sqrt{ \frac{s_j^2}{s_i^2}}  \label{eq:opteigenval1}\\
     g_j &= \bar g  +    \Delta g  \sqrt{ \frac{s_i^2}{s_j^2}}, \label{eq:opteigenval2}
\end{align}
and  note that $\Delta g > 0$.

For given resource parameters $\{\bar{g}, \Delta g, N_S\}$ and a given generator $\hat G$ whose generator matrix has spectrum $\{g_1, \ldots, g_M\}$, it may not always be possible to find indices $i, j$ such that the corresponding eigenvalues $g_i, g_j$ satisfy Eqs.~\eqref{eq:opteigenval1} and \eqref{eq:opteigenval2}. In such cases, a strictly optimal state may not exist. However, in many physically relevant situations — including those discussed in Secs.~\ref{sec:TimeFrequencyEstimation} and \ref{sec:BeamDisplacementAndTiltEstimation} — the number of modes is large, $M \gg 1$, and the generator eigenvalues are densely, and possibly uniformly, distributed or even form a continuum. In such cases, indices $i, j$ can always be found for which $g_i$ and $g_j$ satisfy Eqs.~\eqref{eq:opteigenval1} and \eqref{eq:opteigenval2} to an excellent approximation, and the resulting state is therefore essentially optimal.

We also note that a state similar state to Eq.~\eqref{eq:CompletlyOptimalState} that makes use of two additional idler modes is also optimal as we discuss in Appendix.~\ref{app:IdlerAssistedOptimalState}. Additional optimal states may exist beyond those discussed here; however, we are not aware of any others.

\subsubsection{Variance-optimal state}\label{subsubsec:VarianceOptimal}
Let us now present the two variance-optimal states we found. The first variance-optimal state has the same form as the fully optimal state in Eq.~\eqref{eq:CompletlyOptimalState}, but employs the squeezing equally in both modes, \textit{i.e}., $r_i=r_j$.
We calculate the QFI in Appendix.~\ref{subsub:NoDisplacement} and find
\begin{align}\label{eq:VarianceOptimalState}
    \mathcal{F}(\vert\psi_\lambda\rangle) = 4(\bar g^2 + \Delta g^2)N_S^2+\underbrace{8 (\bar g^2 +\Delta g^2)N_S}_{=O(N_S)} .
\end{align}
We see that the prefactor of the variance term $\Delta g^2 N_S^2$ takes on its maximum value $4$, and thus the state is variance optimal. The prefactor of the mean term $\bar g^2N_S^2$ is $4$, only one half of its maximum value. Note that the state is optimal for arbitrary squeezing angles $\phi_i,\phi_j$. Also note that even in the limit $N_S\ll 1$, the state outperforms the coherent state by a factor of $2$.

We also identify another two-mode states, in which half of the energy budget is used to displace one mode and the other half used for squeezing the other one. The first mode is $\hat u_0= (\hat b_i+\hat b_j)/\sqrt{2}$, the second is $\hat u_1= (\hat b_i-\hat b_j)/\sqrt{2}$. We have $  \partial_\lambda\hat u_0^\dag (\lambda) = -i [ G_{00} \hat u_0^\dag (\lambda) +  G_{10} \hat u_1^\dag (\lambda) ]$ and $  \partial_\lambda\hat u_1^\dag (\lambda) = -i [ G_{10} \hat u_0^\dag (\lambda) +  G_{11} \hat u_1^\dag (\lambda) ]$, i.e., the derivative modes are again a linear combination of the two modes $\hat u_0,\hat u_1$. The state is
\begin{align}
    \vert\psi\rangle = \hat D_{e^{-i\phi_0/2}\hat u_{0}}(\alpha_0) \hat S_{e^{-i\phi_1/2}\hat u_{1}}(r_1) \vert 0\rangle , \label{eq:VarianceOptimalDisplaced}
\end{align}
with $\vert\alpha_i\vert^2=\sinh^2(r_j)$. For the correct phase setting, we find a QFI of $  2\bar g^2 + 4\Delta g^2N_S^2+O(N_S)$, which is also variance optimal.
We provide the detailed QFI calculation in Appendix.~\ref{subsub:WithDisplacement}. In fact, our calculation applies to general states with one mode displaced and the derivative of the mode squeezed, similar to the states discussed in Ref.~\cite{pinel2012ultimate}. In Appendix.~\ref{subsub:WithDisplacement}, we derive explicit conditions for when these states are variance optimal.
 In Sec.~\ref{subsec:VarianceOptimalStateTimeFrequency} we  further discuss these types of state for the example of time shift estimation.

\subsubsection{Mean-optimal state}\label{subsub:MeanOptimal}
Let us now present a mean-optimal state according to our definition above. We consider a single-mode squeezed state
\begin{align}\label{eq:MeanOptimalState}
    \vert\psi\rangle = \hat V \hat S_{\hat a_j}(r_j)\vert 0\rangle = \hat S_{\hat V\hat a_j\hat V^\dag} (r_j) \vert 0\rangle,
\end{align}
that is squeezed in a completely arbitrary mode $\hat V  \hat a_j \hat V^\dag$ with $j\in\{1,\ldots,M\}$, where $\{\hat a_n\}_n$ are the operators corresponding to some initial mode basis. It is straightforward to show that
\begin{align}
    \mathcal{F}(\vert\psi_\lambda\rangle) = (8\bar g^2 +0\cdot \Delta g^2)N_S^2+\underbrace{4 (\bar g^2 +\Delta g^2)N_S}_{=O(N_S)} .
\end{align}
Therefore, the state is mean optimal according to the definition given above. Interestingly,  the explicit shape of the squeezed mode $\hat V \hat a_j\hat V^\dag$ (with arbitrary $\hat V$) becomes unimportant, similar to the coherent state case discussed in Sec.~\ref{sub:ResourcesOptimalityCoherent}, whereas for the optimal and variance-optimal states, the squeezed modes must correspond to the specific modes $\hat{b}_i,\hat b_j$.
Note that the prefactor of the variance term is $c_{\Delta g} = 0$; consequently, single-mode squeezing cannot be fully optimal when $\Delta g > 0$. Note that the fully optimal state presented in Sec.~\ref{subsubsec:Optimal} reduces to a single-mode squeezed state in the regime $\bar g^2\gg \Delta g^2$.

\subsection{Optimal measurements}\label{sec:OptimalMeasurements}
Let us now discuss two classes of optimal measurement strategies, that is represented at the right of the quantum circuit in Fig.~\ref{fig:circuit}. 
\begin{definition}[Optimal measurements]\label{def:OptimalMeasurements}
Let $c_{\bar g}$ and $c_{\Delta g}$ be the coefficients appearing in the asymptotic expansion of the QFI in Eq.~\eqref{eq:QFIoptimalityDef}.
\begin{itemize}
    \item A measurement $\{ \hat \Pi_a \}_a$ is said to be \emph{optimal} for a particular state $\hat U_\lambda\vert\psi\rangle$, if $F(\hat U_\lambda\vert\psi\rangle,\{ \hat \Pi_a \}_a)=\mathcal{F}(\hat U_\lambda\vert\psi\rangle)$ .
     \item A measurement $\{ \hat \Pi_a \}_a$ is said to be \emph{variance optimal} for a particular state $\hat U_\lambda\vert\psi\rangle$ whose QFI can be expressed as in Eq.~\eqref{eq:QFIoptimalityDef}, if $F(\hat U_\lambda\vert\psi\rangle,\{ \hat \Pi_a \}_a)= c_{\Delta g} \Delta g^2 N_S^2 + c_{\bar g}' \bar g^2 N_S^2 + O(N_S)$, where $c_{\bar g}'\geq 0$ with $c_{\bar g}'\leq c_{\bar g}$.
\end{itemize}
\end{definition}

\subsubsection{Phase-sensitive optimal measurements}
We refer to multimode homodyne detection as a phase-sensitive measurement, since the detection scheme depends on adjustable phase parameters that can significantly affect its performance.

\begin{figure}[htbp]
\centering
\begin{tikzpicture}
  \node[inner sep=0] (img) {\includegraphics[width=\linewidth]{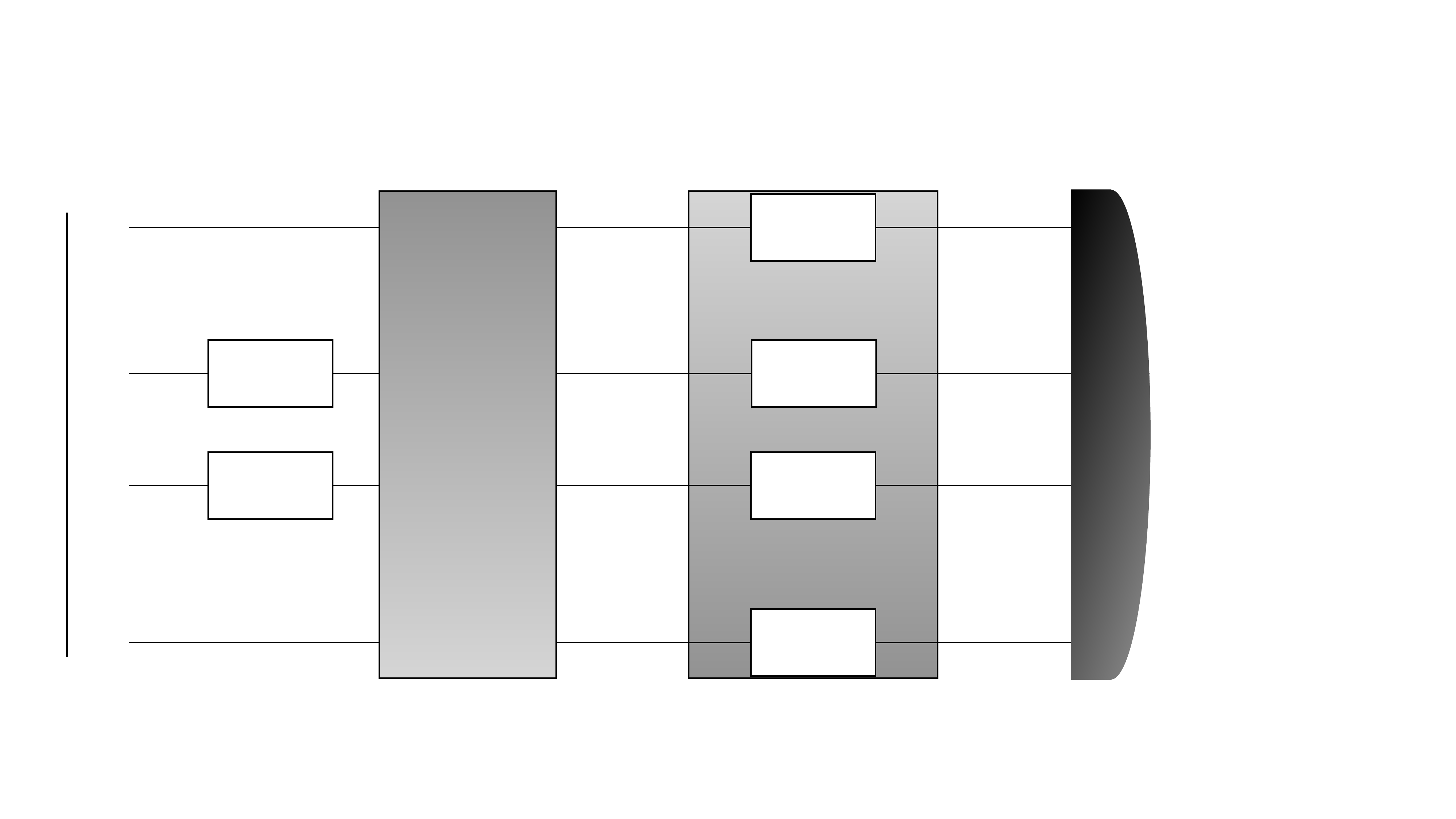}};
    \node at (-4,1.67) {$ \vert 0\rangle$};
    \node at (-3.6,1.87) {$  \hat a_1$};
    \node at (-0.25,1.87) {$  \hat b_1$};
    \node at (-4,1) {$\vdots$};
    \node at (-4,1.67) {$ \vert 0\rangle$};
     \node at (-4,0.5) {$ \vert 0\rangle$};
    \node at (-3.6,0.7) {$  \hat a_i$};
    \node at (-0.25,0.7) {$  \hat b_i$};
    \node at (-4,0.2) {$\vdots$};
    \node at (-4,-0.4) {$ \vert 0\rangle$};
    \node at (-3.6,-0.2) {$  \hat a_j$};
    \node at (-0.25,-0.2) {$  \hat b_j$};
    \node at (-4,-0.9) {$\vdots$};
    \node at (-4,-1.65) {$ \vert 0\rangle$};
    \node at (-3.6,-1.45) {$  \hat a_M$};
    \node at (-0.21,-1.45) {$  \hat b_M$};
    \node at (-1.2,-2.2) {$ \hat V=  \hat B$};
    \node at (1.7,-2.2) {$  \hat U_\lambda = e^{-i\lambda \sum_n g_n\hat b_n^\dag \hat b_n} $};
    \node at (-2.7,0.5) {$  \hat S_{\hat a_i}(r_i)$};
    \node at (-2.71,-0.4) {$  \hat S_{\hat a_j}(r_j)$};
    \node[scale=0.4] at (1.63,1.67) {$\exp({-i \lambda g_1  \hat b_1^\dag\hat b_1})$};
    \node[scale=0.4] at (1.63,0.5) {$\exp({-i \lambda g_i  \hat b_i^\dag\hat b_i})$};
     \node[scale=0.4] at (1.63,-0.4) {$\exp({-i \lambda g_j  \hat b_j^\dag\hat b_j})$};
     \node[scale=0.36] at (1.62,-1.65) {$\exp({-i \lambda g_M  \hat b_M^\dag\hat b_M})$};
\end{tikzpicture}
\caption{\label{fig:circuit}Quantum circuit representing the optimal mode parameter  estimation protocol. Two single-mode squeezed states are prepared in modes $\hat a_i,\hat a_j$ with squeezing angles $\phi_i=\phi_j=0$ (for illustrative clarity) after which the passive Gaussian unitary $\hat V =\hat B$ transfers this squeezing to modes $\hat b_i,\hat b_j$. In that mode basis, the mode parameter transform $\hat U_\lambda$ acts as a simple phase shift on each individual mode. After crossing the sample, the state is measured, with either phase-sensitive or insensitive detection.}
\end{figure}

For the optimal and variance-optimal two-mode squeezed states presented in Secs.~\ref{subsubsec:Optimal} and \ref{subsubsec:VarianceOptimal}, we consider homodyne detection on both populated modes $i$ and $j$, corresponding to measuring the quadrature operators $\hat b_i e^{-i\varphi_i} + \hat b_i^\dag e^{+i\varphi_i}$ and $\hat b_j e^{-i\varphi_j} + \hat b_j^\dag e^{+i\varphi_j}$. Optimizing the measurement phases $\varphi_i, \varphi_j$ — to be distinguished from the squeezing angles $\phi_i, \phi_j$ — for the optimal state of Sec.~\ref{subsubsec:Optimal} yields the maximal Fisher information
\begin{align}
    \max_{\varphi_{i},\varphi_{j}} F_{\text{hom}} (\vert\psi_\lambda\rangle) = (8\bar{g}^2 + 4\Delta g^2)N_S^2 + O(N_S),
\end{align}
as derived in Appendix.~\ref{app:OptimalMeasurementVarianceCompletly}. For the variance-optimal state, the  optimization gives $\max_{\varphi_{i},\varphi_{j}} F_{\text{hom}} (\vert\psi_\lambda\rangle) = 4(\bar{g}^2 + \Delta g^2)N_S^2 + O(N_S)$. In both cases, two-mode homodyne detection is therefore an optimal measurement.

The optimal phase settings $\varphi_i^{\text{opt}}(\lambda)$ and $\varphi_j^{\text{opt}}(\lambda)$ depend on the parameter $\lambda$ to be estimated, so that achieving near-optimal performance in practice requires sufficiently accurate prior knowledge of $\lambda$, in the form of a prior estimate $\lambda_{\text{prior}}$. When such prior information is unavailable, a phase-insensitive measurement strategy, as discussed in Sec.~\ref{subsub:VarianceOptimalMeasurement}, is preferable.

 In Appendix.~\ref{app:Loss}, we consider the effect of photon loss and detector inefficiency, modeled by beam splitters of transmissivity $\eta_i, \eta_j$ placed immediately before the detectors. To obtain a compact result, we assume $\eta := \eta_i = \eta_j$ and work in the limit $s_i^2, s_j^2 \gg 1$, yielding
\begin{align}\label{homodyne}
      \max_{\varphi_{i},\varphi_{j}} F_{\text{hom}}(\vert\psi_\lambda\rangle) \approx \frac{2\eta}{1-\eta} \left[ \bar{g}^2 + \Delta g^2 \right] N_S.
\end{align}
This exceeds the QFI of a coherent state, $4\eta \left[ \bar{\omega}^2 + \Delta \omega^2 \right] N_S$, for transmissivities $\eta > 1/2$. We further show in Appendix.~\ref{app:OptimalMeasurementMean} that, for the mean-optimal state presented in Sec.~\ref{subsub:MeanOptimal}, single-mode homodyne detection is an optimal measurement; this analysis is restricted to the lossless case $\eta = 1$.

\subsubsection{Phase-insensitive variance-optimal measurements}\label{subsub:VarianceOptimalMeasurement}
We now turn to mode- and photon-number–resolved photon counting, a measurement that is \emph{phase-insensitive} in the sense that it requires no phase control, in contrast to homodyne detection. For the time and frequency estimation examples of Sec.~\ref{sec:TimeFrequencyEstimation}, as well as the beam displacement and tilt estimation of Sec.~\ref{sec:BeamDisplacementAndTiltEstimation}, the generators share the property that their eigenvalues $\{g_n\}_n$ are equally spaced. For these cases, we show that photon-number–resolved photon counting in a suitably chosen mode basis — the Fourier-dual of the generator eigenbasis $\{\hat b_i\}_i$ — constitutes a variance-optimal measurement in the sense of Definition~\ref{def:OptimalMeasurements}, provided the input state satisfies the conditions of Lemma~\ref{lemma:VarianceOptimal}.

Such phase-insensitive measurements are of significant practical value in low-information scenarios, such as lidar-based range (time) and velocity (frequency) estimation. In realistic lidar settings, the precision associated with the mean contribution $\bar{g}^2$ is generally unattainable due to phase ambiguity \cite{helstrom}, leaving only the variance contribution $\Delta g^2$ accessible. Identifying robust, phase-insensitive measurement strategies that are variance-optimal is therefore essential for such protocols.

\subsection{Previous works on optimal metrology and optimal mode parameter estimation}\label{sec:PreviousLiterature}
Let us first review optimal quantum metrology for systems in a finite-dimensional Hilbert space. Firstly in Ref.~\cite{giovannetti2006quantum}, it was shown that for estimating a parameter $\lambda$ encoded by a unitary $\hat U_\lambda = e^{-i\lambda \hat G}$ with a finite-dimensional generator $\hat G = \sum_{n\in\{n_{\text{min}},\ldots,n_{\text{max}}\}} g_n \vert n\rangle\langle n\vert$, the optimal state takes the form $(\vert n_{\text{min}}\rangle + \vert n_{\text{max}}\rangle)/\sqrt{2}$, where $n_{\text{min}}$ and $n_{\text{max}}$ are the indices corresponding to the minimal and maximal eigenvalues $g_{\text{min}}$ and $g_{\text{max}}$ of the generator. This result does not directly apply to our mode parameter estimation task, since each mode is associated with an \emph{infinite}-dimensional Fock space. If one truncates the Fock space of each mode at a finite photon number $N_{\text{cut}}$, the optimal state according to Ref.~\cite{giovannetti2006quantum} takes the form $\sim\bigl[(\hat a_{n_{\text{min}}}^\dag)^{N_{\text{cut}}} + (\hat a_{n_{\text{max}}}^\dag)^{N_{\text{cut}}}\bigr]\vert 0\rangle$. This state achieves a QFI of $4\Delta g^2 N_S^2$, as shown in Appendix.~\ref{app:PreviousLiterature}. It is therefore variance-optimal, but lacks the mean contribution $\bar{g}^2 N_S^2$, rendering it suboptimal within our framework. Our approach thus offers a genuine advantage: by avoiding any truncation of the infinite Fock space, it yields the true ultimate precision limit for Gaussian states.

Secondly, we discuss Ref.~\cite{matsubara2019} on optimal Gaussian metrology for generic multimode interferometric circuits. This work is relevant because an interferometric circuit can be interpreted as a mode transformation. The authors maximize the QFI for a fixed interferometric circuit under a constraint on the average photon number. Their analysis yields an optimal QFI of
$
8 g_{\max}^{2} N_S^{2} + O(N_S),
$
where $g_{\max}$ is the largest eigenvalue of the generator matrix. This performance is achieved with a single-mode squeezed state and is consistent with our results.
Unlike the previously discussed Ref.~\cite{giovannetti2006quantum}, this approach accommodates the infinite-dimensional  Fock spaces associated with each bosonic mode. However, in many mode-parameter-estimation tasks, such as time and frequency estimation (Sec.~\ref{sec:TimeFrequencyEstimation}) or beam-displacement and beam-tilt estimation (Sec.~\ref{sec:BeamDisplacementAndTiltEstimation}), one must consider infinitely many modes. The method of Ref.~\cite{matsubara2019} requires truncating the mode space, which leads to optimal states that depend on the chosen cutoff, as can be seen in Ref.~\cite{he2025b}. Such cutoff dependence is undesirable.
By contrast, our approach yields optimal states that are independent of any mode-space truncation. Combined with our resource-based interpretation, our framework constitutes a clear advantage.

Third, we review the literature that explicitly addresses mode-parameter–estimation tasks. Ref.~\cite{gessner2023b} presents a general framework for designing quantum-enhanced protocols in this setting. Ref.~\cite{boeschoten2025} extended this framework to multiparameter estimation, and Ref.~\cite{sorelli2024} gave explicit formulas for Gaussian states in terms of symplectic eigenvalues. The authors demonstrate that achieving Heisenberg scaling requires populating an initial mode together with an additional mode that has nonzero overlap with the derivative of the initial mode.     While there exist many quantum-enhanced probe states that provide a quantum advantage over classical strategies through superior photon-number scaling, Ref.~\cite{gessner2023b} leaves open the question of how to meaningfully compare different quantum-enhanced states with one another and what is the optimal state. In this work, we resolve this question for the subclass of pure Gaussian states by introducing the resource parameters $\bar g$ and $\Delta g$.

Finally, Ref.~\cite{pinel2012ultimate} investigates optimal Gaussian probe states for mode-parameter estimation in the regime where most signal photons are allocated to the displacement of a single mode $\hat a(\lambda)$. A small amount of squeezing, satisfying $\sinh^{2}(r)\ll |\alpha|^{2}$, is distributed between this mode and its derivative mode $\partial_\lambda \hat a(\lambda)$. When the squeezing is equally distributed, the resulting state achieves a QFI of
$ 
\mathcal{F} = 4 e^{2r} (\bar{g}^{2} + \Delta g^{2}) N_{S}.
$
The QFI thus scales linearly with $N_S$ (\textit{i.e}., SQL scaling), but with a constant prefactor enhancement $e^{2r} \ll N_S$ over a purely coherent-state strategy. This state is therefore suboptimal within our framework. However, as discussed in Sec.~\ref{subsubsec:VarianceOptimal} and demonstrated in Sec.~\ref{subsec:VarianceOptimalStateTimeFrequency}, when half of the signal photons are allocated to squeezing the derivative mode, the state of Ref.~\cite{pinel2012ultimate} can be variance-optimal, provided certain conditions are satisfied, as we show in Appendix.~\ref{subsub:WithDisplacement}. To achieve full optimality, however, the only state we are aware of is the one in Eq.~\eqref{eq:CompletlyOptimalState}, which allocates the entire signal photon budget to squeezing.
We note that Ref.~\cite{pinel2012ultimate} neither introduced $\bar{g}$ and $\Delta g$ as resource parameters nor considered general Gaussian states, focusing instead on a specific subclass. Our framework applies to arbitrary Gaussian states and, consistent with their findings, confirms that two modes are generally required to achieve optimal performance.

\section{First application of optimal mode estimation: Time-frequency shift estimation}\label{sec:TimeFrequencyEstimation}

In the previous section, we focused on a general estimation protocol involving a finite, discrete set of bosonic modes. We now turn to intrinsically continuous modes—such as time–frequency and transverse position–momentum—which are first decomposed into an infinite set of discrete modes and then truncated so that the results of the previous section can be applied.

\subsection{The model and the generators}\label{sec:TimeFrequencyModelGenerators}
We now consider the estimation of time and frequency shifts of quantum light. These two paradigmatic examples will be analyzed in detail in order to clearly demonstrate the capabilities and versatility of our framework. Note that we work exclusively within the framework of single-parameter estimation.

The estimation of time and frequency shifts is of central importance across a wide range of applications. For instance, in radar and lidar systems, a time delay encodes information about the distance to a target, while a Doppler-induced frequency shift provides access to the target’s velocity.

In Sec.~\ref{sub:ModeTransform}, we introduced so-called cavity modes whose extent is the entire quantization volume. Here, we consider spatio-temporal modes that are defined in the $xy$ plane at $z=0$ for some time interval (in our case we assume this interval to be infinite).
Under the paraxial and quasi-monochromatic approximation \cite{shapiro2009} a single spatial mode of the positive-part of the EM field in the  $xy$ plane at $z=0$ may be written as 
\begin{align}\label{eq:PositivePart}
    \hat E(t) = \frac{1}{\sqrt{2\pi}} \int_{-\infty}^\infty\mathrm{d}\omega \, e^{-i\omega t} \hat A(\omega) ,
\end{align}
where $[\hat E(t),\hat E(t')] = \delta (t-t')$ and $[\hat A(\omega),\hat A^\dag (\omega')] = \delta (\omega-\omega')$ are the commutation relations of continuous mode operators $\hat E(t)$ and $\hat A(\omega)$.
$\hat{E}^\dagger(t)$ may be interpreted as the creation operator for a photon localized at time $t$ in the $z=0$ plane, whereas $\hat{A}^\dagger(\omega)$ creates a photon of frequency $\omega$ that is not localized in time. Although negative frequency is not physical, we have extended the integration range to minus infinity. In practice, every considered quantum state will have a compact spectrum far away the zero frequency, typically in the GHz or THz range.\\

Now, the mode transforms $\hat U_\tau = e^{-i\tau \hat G_\tau}$ and $\hat U_\nu = e^{-i\nu \hat G_\nu}$ with generators
\begin{align}
    \hat G_\tau &= \int_{-\infty}^\infty \mathrm{d}\omega\, \omega \hat{A}^\dag (\omega) \hat A(\omega) \label{eq:GeneratorTimeShfit}\\
    \hat G_\nu &= \int_{-\infty}^\infty \mathrm{d} t\, t \hat E^\dag (t) \hat E(t) \label{eq:GeneratorFrequencyShfit}
\end{align}
induce a time shift $\tau$ and a frequency shift $\nu$ in the input state $\vert \psi\rangle$. Note that both of the generators are written in their continuous diagonal basis. Such generators have been called time and frequency operators \cite{fabre_time-frequency_2022}.Such generators are often referred to as time and frequency operators \cite{fabre_time-frequency_2022}. Both possess a continuous spectrum, with $\omega, t \in \mathds{R}$, as reflected by the presence of integrals rather than discrete sums. Consequently, this system corresponds to one with infinitely many modes. However, our formalism for optimal mode parameter estimation was developed for a finite number $M$ of discrete modes. To bridge this gap, we discretize the continuous degrees of freedom and approximate the system as finite-dimensional by introducing mode truncations.

As we show in  Appendix.~\ref{app:subsec:TruncationProcedure}, this procedure leads to the discritized generators of finitely many modes
\begin{align}
    \hat G_\tau &\simeq \sum_{n=1}^M \omega_n \hat a_n^\dag \hat a_n \label{eq:GenTauDiscrete}\\
    \hat G_\nu &\simeq \sum_{n=1}^M t_n \hat e_n^\dag \hat e_n \label{eq:GenNuDiscrete} ,
\end{align}
where $\hat e_n$ and $\hat a_n$ are discretized versions of $\hat E(t)$ and $\hat A(\omega)$ with $[\hat a_n,\hat a_m^\dag]=[\hat e_n,\hat e_m^\dag]=\delta_{nm}$.
Here, $\omega_n = \omega_{\text{min}} +  (n-1)\delta \omega$ and $t_n = t_{\text{min}}+ (n-1)\delta t$ are the  eigenvalues of the discretized generator, where $\delta\omega$ and $\delta t$ are the bin widths in the discitized frequency and time domain.
This discretization constitutes a good approximation as long as the states we consider can be well approximated using this discrete mode basis.
This can always be achieved (at least theoretically) by simply choosing the binning small enough and choosing the intervals $[\omega_{\text{min}},\omega_{\text{max}}]$ and $[t_{\text{min}},t_{\text{max}}]$ large enough, where $\omega_{\text{max}}= \omega_M$ and $t_{\text{max}}= t_M$. For numerical calculations, the truncation choice may be guided by Refs.~\cite{gessner2023b,sorelli2024,boeschoten2025}, which showed that is sufficient to consider a subspace spanned by finitely many modes to calculate the QFI.

Knowing that we can comfortably switch from a continuous to a discrete description, we will present the following results in the continuous description. 

\subsection{Resources for time and frequency estimation}
Now, according to the formalism in Sec.~\ref{sec:Resources}, for a general input state, the resource signal photon number is given by
\begin{align}
    N_S = \int_{-\infty}^\infty\mathrm{d}t \,  \langle \psi\vert \hat E^\dag (t)\hat E(t)\vert\psi\rangle = \int_{-\infty}^\infty\mathrm{d}\omega \,  \langle \psi\vert \hat A^\dag (\omega)\hat A(\omega)\vert\psi\rangle 
\end{align}
and the resources for time estimation are mean frequency $\bar\omega$ and bandwidth $\Delta\omega^2$ of the input state:
\begin{align}
    \bar \omega &= \int_{-\infty}^\infty\mathrm{d}\omega \, \omega \frac{\langle \psi\vert \hat A^\dag (\omega)\hat A(\omega)\vert\psi\rangle}{N_S}, \\
    \Delta\omega^2 &= \int_{-\infty}^\infty\mathrm{d}\omega \, (\omega-\bar\omega)^2 \frac{\langle \psi\vert \hat A^\dag (\omega)\hat A(\omega)\vert\psi\rangle}{N_S} .
\end{align}
They are the mean and variance of the normalized spectral intensity $\langle \psi\vert \hat A^\dag (\omega)\hat A(\omega)\vert\psi\rangle/N_S$. The resources for frequency estimation are central time $\bar t$ and time duration $\Delta t$, defined by
\begin{align}
    \bar t &= \int_{-\infty}^\infty\mathrm{d}t \, t \frac{\langle \psi\vert \hat E^\dag (t)\hat E(t)\vert\psi\rangle}{N_S} \\
    \Delta t^2 &= \int_{-\infty}^\infty\mathrm{d}t \, (t- \bar t)^2 \frac{\langle \psi\vert \hat E^\dag (t)\hat E(t)\vert\psi\rangle}{N_S} .
\end{align}
They are the mean and variance of the normalized temporal intensity $\langle \psi\vert \hat E^\dag (t)\hat E(t)\vert\psi\rangle/N_S$.

\subsection{Multimode Gaussian states for time and frequency estimation}

\subsubsection{Coherent states for time and frequency estimation}\label{subsec:TimeFrequencyCoherent}
Let us first consider the bosonic coherent state \cite{blow_continuum_1990,combescure_coherent_2012, roux_wigner_2020}, which is the quantum state describing by a phase-stabilized laser that can be written both in the time and frequency domains
\begin{align}
    \vert\psi^{\text{coh}}\rangle &= \exp\left(+ \alpha\int_{-\infty}^\infty\mathrm{d}t \, \Psi(t)\hat E^\dag(t)-h.c. \right) \vert0\rangle \\
    &= \exp\left( +\alpha\int_{-\infty}^\infty\mathrm{d}\omega \, \tilde \Psi(\omega)\hat A^\dag(\omega)-h.c. \right) \vert0\rangle
\end{align}
with eigenvalue relation $\hat E(t) \vert\psi^{\text{coh}}\rangle = \Psi(t)\vert\psi^{\text{coh}}\rangle$ and $\hat A(\omega) \vert\psi^{\text{coh}}\rangle = \tilde \Psi(\omega)\vert\psi^{\text{coh}}\rangle$.  Recall that multi-mode displaced states can always be written as single-mode displaced states in an appropriate mode basis \cite{fabre2020modes}, and here, the  creation operator of that single displaced mode is  $\int_{-\infty}^\infty \mathrm{d}t \, \Psi (t) \hat E^\dag (t) = \int_{-\infty}^\infty \mathrm{d}\omega\,\tilde \Psi (\omega) \hat A^\dag(\omega)$, where $\Psi(t)$ is an arbitrary normalized  complex function. The complex $\alpha$ is the displacement parameter.
We can easily verify that $\hat U_\tau\vert \psi^{\text{coh}}\rangle$ shifts the signal in time $\Psi(t)\rightarrow \Psi(t +\tau)$ and $\hat U_\nu\vert \psi^{\text{coh}}\rangle$ causes a frequency shift  $\tilde \Psi(\omega)\rightarrow \tilde \Psi(\omega -\nu)$.

To make this more concrete, consider a Gaussian mode function $\Psi(t) = \left(\frac{1}{2\sigma_t^2\pi}\right)^{1/4} e^{-\frac{1}{2}\left(\frac{t-t_0}{\sqrt{2}\,\sigma_t}\right)^2} e^{-i\omega_0(t-t_0)}$, plotted in Fig.~\ref{fig:CoherentState}. The Gaussian is characterized by the parameters $t_0$, $\omega_0$, and $\sigma_t$, which are directly related to the resource parameters via $\bar{t} = t_0$, $\bar{\omega} = \omega_0$, $\Delta t^2 = \sigma_t^2$, and $\Delta\omega^2 = \sigma_\omega^2 = 1/(4\sigma_t^2)$. Furthermore, we have $N_S = |\alpha|^2$.

\begin{figure}[htb]
    \centering
    \includegraphics[width=0.48\textwidth]{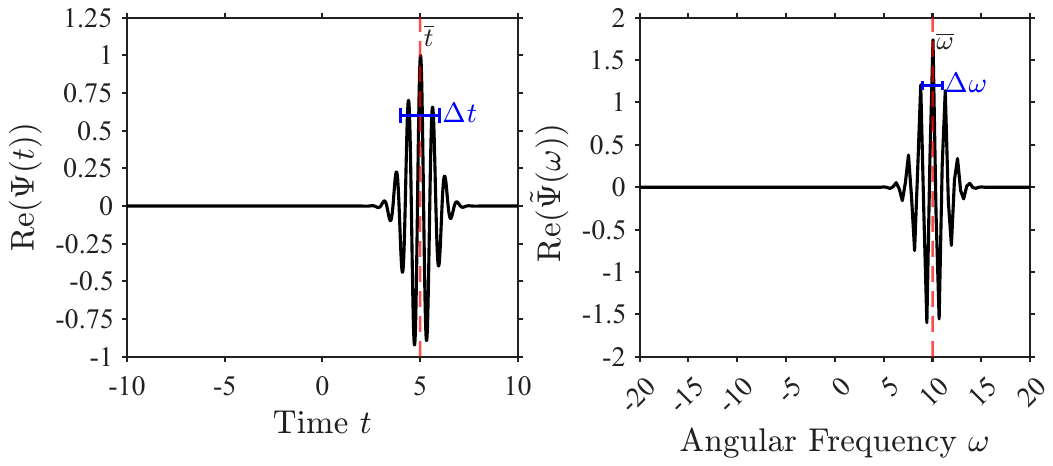}
    \caption{Real part of the coherent states signal $\Psi(t) = e^{-(t-t_{0})^{2}/\sigma_{t}^{2}}\,e^{i\omega_{0}t}$ (left) and its Fourier transform  $\tilde{\Psi}(\omega)$ (right). This figure illustrates a convenient representation of the time and frequency resources for classical and coherent states. The resources are the temporal and spectral bandwidths $\sigma_{t}$ and $\sigma_{\omega}$, and the central frequency and mean arrival time $\omega_{0}$ and $t_{0}$, here shown for a Fourier-limited pulse.}
    \label{fig:CoherentState} 
\end{figure}
Using Eq.~\eqref{eq:QFIcoherentGeneral}, we find the expression of the QFI for bosonic coherent states:
\begin{align}
    \mathcal{F}(\vert \psi_\tau^{\text{coh}}\rangle) &= 4 (\bar\omega^2+\Delta\omega^2) N_S \\
    \mathcal{F}(\vert \psi_\nu^{\text{coh}}\rangle) &= 4 (\bar t^2+\Delta t^2) N_S
\end{align}
which recovers well-known results from classical radar and lidar literature \cite{helstrom}, and are also presented in \cite{descamps_quantum_2023}. \\

In the radar and lidar literature, the quantities $\bar{\omega}$, $\Delta\omega$ and $\bar{ t}$, $\Delta t$ are well-established as the fundamental resources governing
the estimation of time delay and Doppler frequency shift, respectively~\cite{richards2005fundamentals,helstrom}. This decomposition traces back to Woodward's foundational analysis of the radar ambiguity function~\cite{Woodward1964}, which showed that the mean and spread of a
waveform's time-frequency support jointly determine the limits of range and velocity discrimination. In the quantum setting, these quantities enter the quantum Fisher information (QFI) as two structurally distinct contributions: a mean term ($4\bar{\omega}^2 N_S$ for time estimation, $4\bar{t}^2 N_S$ for frequency estimation) and a variance term ($4\Delta\omega^2 N_S$ for time estimation, $4\Delta t^2 N_S$ for frequency estimation)~\cite{giovannetti_quantum-enhanced_2004, zhuang_ultimate_2022}. These two terms are of different significance.\\

Consider time estimation, \textit{i.e}., target ranging. In any realistic narrowband waveform, the carrier frequency dominates the spectral spread: $\bar{\omega} \gg \Delta\omega$. Consequently, the mean term $4\bar{\omega}^2 N_S$ overwhelmingly dominates the total QFI. In principle, the Cram\'{e}r--Rao bound associated with this term is achievable, but only
via a phase-sensitive measurement, such as homodyne detection with a phase-locked local oscillator. Attaining this bound requires prior knowledge
of the time delay $\tau$ to within a phase cycle of the carrier,
$|\tau - \tau_{\mathrm{prior}}| \lesssim 1/\bar{\omega}$, imposing sub-wavelength accuracy on the prior. Such knowledge is unavailable in typical uncooperative-target scenarios and is only achievable in highly controlled interferometric setups.

By contrast, the variance term $4\Delta\omega^2 N_S$ is governed by the root-mean-square
bandwidth $\Delta\omega$ and accessible through incoherent or envelope-based
measurements which is precisely the time-of-flight paradigm underlying most practical ranging
systems~\cite{richards2005fundamentals}. Direct intensity detection, matched filtering of
the envelope, or photon counting all respond to the pulse shape rather than its absolute
carrier phase, and therefore require no phase prior. The Cram\'{e}r--Rao bound associated
with $4\Delta\omega^2 N_S$ is achievable with these prior-free
measurements and in radar waveform design the use of
high time-bandwidth product waveforms is motivated precisely by the need to maximize
$\Delta\omega$ without access to a phase reference~\cite{Woodward1964}.\\

Similar conclusions can be drawn for general Gaussian states. This observation highlights that the mean parameter $\bar g$ and the variance parameter $\Delta g$ play fundamentally different roles, motivating their treatment as distinct resources.

\subsubsection{Gaussian states with product structure in the Hermite-Gauss mode basis}\label{subsub:SuboptimalState}
Before discussing optimal states, let us examine some states that are commonly discussed in the literature — states that appear in product form in the Hermite–Gauss mode basis.

An arbitrary multimode squeezed vacuum state takes the form
\begin{align}\label{eq:squeezedStateTime}
\vert\psi^{\text{sq}}\rangle &= \exp\!\left(\frac{1}{2}\iint\mathrm{d}t \mathrm{d}t'\, f(t,t')\,\hat E^\dag(t)\hat E^\dag(t') - \mathrm{h.c.}\right)\vert 0\rangle \\
&= \exp\!\left(\frac{1}{2}\iint\mathrm{d}\omega\mathrm{d}\omega'\, \tilde{f}(\omega,\omega')\,\hat A^\dag(\omega)\hat A^\dag(\omega') - \mathrm{h.c.}\right)\vert 0\rangle, \label{eq:squeezedStateFrequency}
\end{align}
in the time and frequency domains, respectively. Here, $f(t,t')$ and $\tilde{f}(\omega,\omega')$ are the continuous analogues of the squeezing matrix in the time-domain and frequency-domain mode bases, respectively, and are known as the joint temporal amplitude (JTA) and joint spectral amplitude (JSA). Their modulus squares are known as the joint temporal intensity (JTI) and joint spectral intensity (JSI), respectively.

Now, similarly to the coherent state case, we can find a mode basis $\{\Psi_n \}_n$ in which the squeezed state takes on a simple (disentangled) form. 

Since   $f(t,t')$ is symmetric, it admits the expansion
\[
f(t,t')=\sum_{n=0}^{\infty} r_n\,\Psi_n(t)\Psi_n(t'),
\]
where  $r_n\ge   0$. 
Interpreting $f(t,t')$ as a bipartite wavefunction, this expansion corresponds to its Schmidt decomposition \cite{lamata_dealing_2005}. 
In finite dimensions, the analogous decomposition of a complex symmetric matrix is given by the Takagi factorization, see Appendix.~\ref{app:subsec:GaussianStateDisentangled}.
Note that in the frequency domain the Schmidt decomposition is $\tilde f(\omega,\omega ') =\sum_{n=0}^\infty r_n \tilde \Psi_n (\omega) \tilde \Psi_n (\omega')$ with $\tilde \Psi_n(\omega) = \int\mathrm{d}t \, \Psi_n(t) e^{+i\omega t}/\sqrt{2\pi}$. 

In this  basis $\{\Psi_n\}_n$, we can write 
\begin{align}\label{eq:SqueezedStateSchmidtModes}
    \vert\psi^{\text{sq}}\rangle = [\otimes_{n=1}^M \hat S_{\hat c_n} (r_n)]\vert 0\rangle .
\end{align}
with $ \hat c_n^\dag = \int\mathrm{d}t  \,\Psi_n (t)\hat E^\dag(t) $ being the mode basis in which the state appears disentangled. We  refer to $\{\hat c_n\}_n$ as the Schmidt basis.

To make it concrete, we consider the set of squeezed states that appear disentangled in the Hermite-Gauss (HG) mode basis
\begin{multline}
    \Psi_n (t) = \Phi_n (t;t_0,\omega_0,\sigma_t,\theta) 
      = \left( \frac{1}{2\sigma^2 \pi}\right)^{1/4} \sqrt{\frac{2^{-n}}{n!}} \\ \times H_n\left ( \frac{t-t_0}{\sqrt{2}\sigma_t} \right) e^{-\frac{1}{2}\left( \frac{t-t_0}{\sqrt{2}\sigma_t}\right)^2} e^{-i\omega_0 (t-t_0)} e^{-i\theta} \label{eq:DefinitionHGmode}.
\end{multline}
where $H_n (t) = (-1)^n e^{t^2} \frac{\mathrm{d}^n}{\mathrm{d}t^n} e^{-t^2}$ are the physicist's Hermite polynomials. The mode functions are centered around $t_0$ in the time domain and centered around $\omega_0$ in the frequency domain. Their width is  is characterized by $\sigma_t$ in the time domain and by $\sigma_\omega^2 = 1/(4\sigma_t^2)$  in the frequency domain. $\theta$ is a phase that will not impact what follows.

Let us now consider the QFI   for time estimation  of a state $  \vert\psi^{\text{sq}}\rangle = \otimes_n \hat S_{\hat c_n e^{-i\phi_n/2}} (r_n)\vert 0\rangle $, where we additionally introduced mode dependent squeezing angles $\phi_n$.  Let us assume the first two modes are equally populated, i.e., $r_0=r_1$ and $r_n = 0$ for $n\geq 2$ and that the squeezing angles satisfy $\phi_1-\phi_0 =\pi$. With this, we find in Appendix.~\ref{app:subsec:QFIofHGstate} the QFI for time estimation $\tau$: 
\begin{align}\label{eq:QFIHG}
   \mathcal{F}(\hat U_\tau\vert\psi ^{\text{sq}}\rangle) = (4 \bar \omega^2 + 2\Delta \omega^2 ) N_S^2 + O(N_S) .
\end{align}
It is evident that this specific state is suboptimal, as indicated by the prefactors of 4 and 2 preceding the mean and variance terms, respectively. Notably, both prefactors are only half of their maximal possible values. Since the Hermite–Gaussian modes are eigenfunctions of the Fourier transform, analogous results hold for frequency estimation.

\subsubsection{Optimal states for time and frequency estimation}\label{sec:OptimalStatesTimeFrequency}

We now turn to the problem of optimal multimode Gaussian states for time and frequency estimation. As before, we consider multimode squeezed vacuum states of the form given in Eq.~\eqref{eq:squeezedStateTime}, or equivalently in Eq.~\eqref{eq:squeezedStateFrequency}. In Sec.~\ref{subsec:OptimalStates}, we derived the optimal states analytically under the assumption that the generator possesses a quasi-continuous spectrum. Since this condition is satisfied for the discretized versions of the generators $\hat G_\tau$ and $\hat G_\nu$ in Eqs.~\eqref{eq:GenTauDiscrete} and \eqref{eq:GenNuDiscrete}, the general results of Sec.~\ref{subsec:OptimalStates} apply directly to the specific tasks of time and frequency estimation. As shown in Sec.~\ref{subsec:OptimalStates}, the optimal states take a particularly simple form in the basis $\{\hat b_n\}$ that diagonalizes the generator,
$\hat G = \sum_{n} g_n\, \hat b_n^\dag \hat b_n$,
in which the squeezing matrix $f$ is diagonal.

Let us focus on time estimation; the analysis for frequency estimation is entirely analogous by Fourier duality.

By analogy with the discrete case, we expect the squeezing matrix to be diagonal in the generator eigenbasis $\{\hat A(\omega)\}$, recalling that $\hat G_\tau = \int\mathrm{d}\omega\, \omega\, \hat A^\dag(\omega)\,\hat A(\omega).$
Accordingly, we make the following Ansatz for the squeezing function in the frequency domain:
\begin{align}
\tilde{f}(\omega,\omega') = \sum_{n\in\{i,j\}} r_n\, \delta(\omega-\omega_n)\,\delta(\omega'-\omega_n),
\end{align}
for some fixed modes $i, j \in \{1,2,\ldots,M\}$. However, this state is unphysical (non-normalizable) due to the presence of the Dirac delta functions. To obtain a physical state, we regularize the Dirac deltas by replacing them with Gaussians: $\delta(\omega-\omega_n) \to \tilde{\Phi}_0(\omega;\,t_n,\omega_n,\sigma_t,\theta)$, where $\tilde{\Phi}_0$ is the Fourier transform of the zeroth HG mode defined in Eq.~\eqref{eq:DefinitionHGmode}, which is a Gaussian in both the time and frequency domains.

We thus propose the following physical Ansatz, with squeezing matrix
\begin{align}
\tilde{f}(\omega,\omega') = \sum_{n\in\{i,j\}} r_n\,\tilde{\Phi}_0(\omega;\,t_n,\omega_n,\sigma_\omega,\theta)\,\tilde{\Phi}_0(\omega';\,t_n,\omega_n,\sigma_\omega,\theta). \label{eq:OptimalRegularizedState}
\end{align}
To achieve a good regularization, the width $\sigma_\omega$ of the Gaussians in the frequency domain should be much smaller than the separation between the two Gaussians, \textit{i.e}., $\sigma_\omega \ll |\omega_j - \omega_i|$. The remaining parameters $r_n$, $\omega_n$, and $t_n$ with $n \in \{i,j\}$ are to be chosen according to the type of optimality one wishes to attain.

Let us first consider a fully optimal state (see Sec.~\ref{subsec:OptimalStates} for the definition). In such a case, we have to choose the squeezing parameters $r_i,r_j$ and the central frequencies $\omega_i,\omega_j$ according to Eqs.~\eqref{eq:OptSqueezing1}-\eqref{eq:opteigenval2}.  The QFI of the corresponding state can be calculated by using the Schmidt decomposition of $\tilde f(\omega,\omega')$ given in Eq.~\eqref{eq:OptimalRegularizedState}. We show in Appendix. ~\ref{app:QFIregularized}  that the first two populated Schmidt modes $\tilde f(\omega,\omega')$ are approximately equal to $\tilde \Phi_0(\omega;t_i,\omega_i,\sigma_t,\theta)$ and $\tilde \Phi_0(\omega;t_j,\omega_j,\sigma_t,\theta)$. The QFI for temporal estimation can then be calculated, and we find
\begin{align}
    \mathcal{F}(\vert\psi_\tau\rangle) = 8\bar \omega^2 N_S^2 + 4 (\Delta \omega^2 - \sigma_\omega^2) N_S^2 + O(N_S) \label{eq:QFIofRegularizedStateTime}
\end{align}
which is optimal in the good regularization limit of $\sigma_\omega \ll \vert\omega_j-\omega_i\vert \sim \Delta\omega$. Note that when estimating a time parameter, we can freely  choose $t_i,t_j$ while $\omega_i,\omega_j$ have to be chosen according to Eqs.\eqref{eq:OptSqueezing1}-\eqref{eq:opteigenval2}. This freedom may be exploited for having a variance optimal state for the estimation of $\tau$ and $\nu$, as we later discuss in the next section.

\begin{figure*}[htbp]
    \centering
    \includegraphics[width=\linewidth]{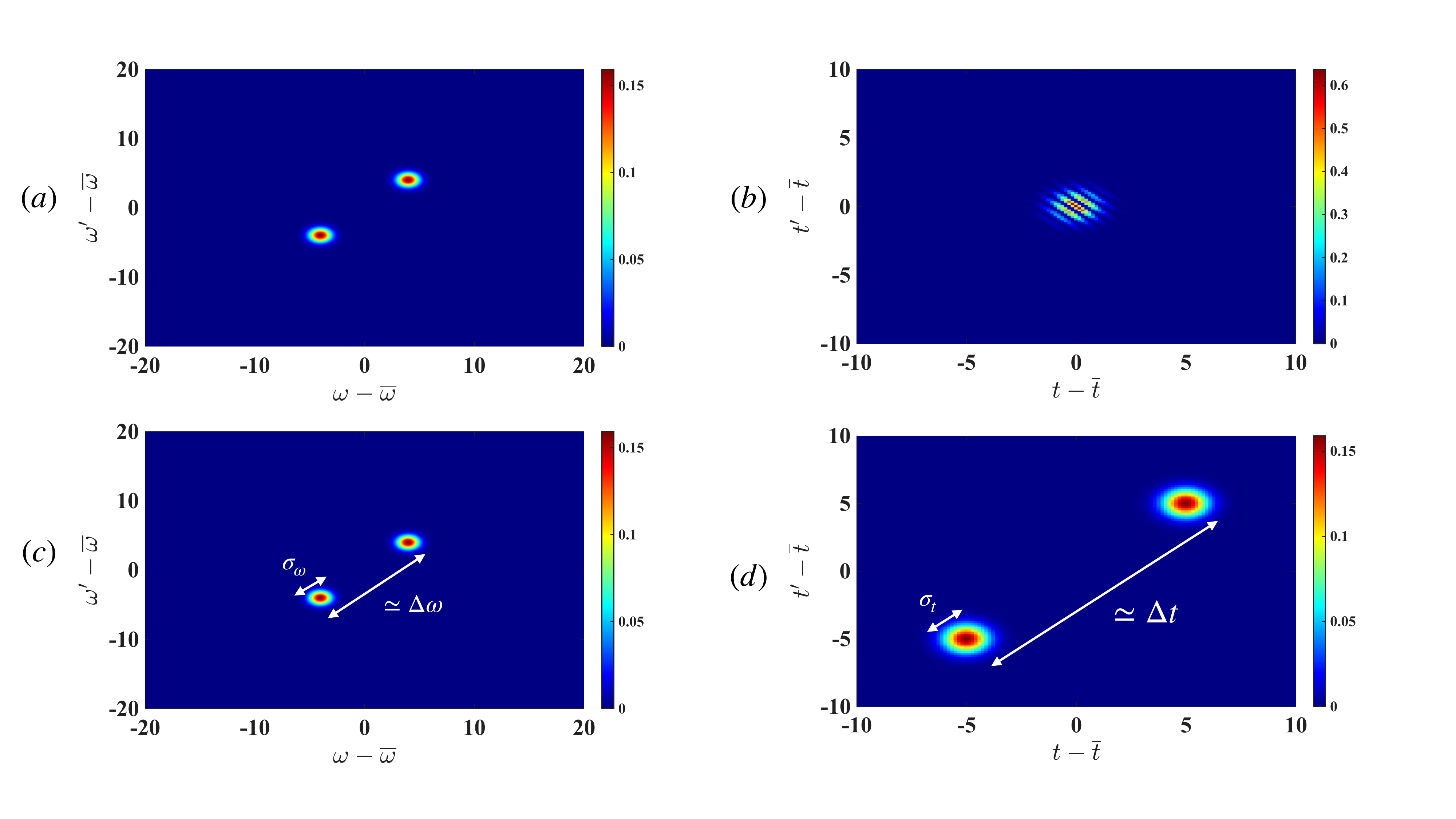}
    \caption{(a), (b) JSI and JTI for variance optimal states for time estimation, with equal squeezing parameters $r_{i}=r_{j}$, central time-of-arrival $t_i=t_j$, and central frequency such as $\vert\omega_j-\omega_i\vert\gg \sigma_\omega$. We remind that the resource $\Delta\omega$ is related to other parameters such as $\frac{\vert \omega_j-\omega_i\vert^2}{2} =  \Delta\omega^2-  \sigma_\omega^2\approx \Delta\omega^2$. (c), (d) JSI and JTI for simultaneous variance optimal state for time and frequency estimation, that corresponds to the parameters: equal squeezing parameter $r_{i}=r_{j}$,  with $\abs{t_{i}-t_{j}}\gg \sigma_{t}$, and  $\vert\omega_j-\omega_i\vert\gg \sigma_\omega$. We have chosen arbitrary central frequency and time $\overline{\omega},\overline{t}$ for the two distributions.}
    \label{fig:fplots1}
\end{figure*}

\subsubsection{Variance-optimal state for time and frequency estimation}\label{subsec:VarianceOptimalStateTimeFrequency}

We now focus on finding the variance optimal state. Let us consider first a time-shift estimation scenario. If we only strive for variance optimality, we choose an equal weighting of the squeezing parameters, \textit{i.e}.,  $r_i = r_j$, and the Gaussian centers in the frequency domain are chosen such that $\vert\omega_j-\omega_i\vert\gg \sigma_\omega$.
This choice yields $\frac{\vert \omega_j-\omega_i\vert^2}{2} =  \Delta\omega^2-  \sigma_\omega^2\approx \Delta\omega^2$ and $\frac{\omega_i+\omega_j}{2}= \bar \omega$. The state's QFI is 
\begin{equation}\label{QFItime}
    \mathcal{F}(\vert\psi_\tau\rangle) = 4\bar \omega^2 N_S^2   +   4(\Delta \omega^2 - \sigma_\omega^2) N_S^2 + O(N_S) ,
\end{equation}
and thus the state is  variance optimal in the good regularization limit $\Delta\omega^2-  \sigma_\omega^2\approx \Delta\omega^2$. As in the case discussed in the previous paragraph, the parameters $t_i,t_j$ can be freely chosen, and is not detrimental for temporal estimation. In the case of $t_i=t_j$, the state can only be optimal  for time estimation, but not for frequency estimation, as the squeezing matrix takes on diagonal structure only in the frequency domain and not the time domain. Similar observations for time-frequency shifts estimation with two-photon states were made in \cite{fabre_parameter_2021}. The JSI and JTI of the variance optimal state is represented in Fig.~\ref{fig:fplots1}, when $t_i=t_j$, and  $\vert\omega_j-\omega_i\vert\gg \sigma_\omega$.
If $\vert t_j-t_i\vert\gg \sigma_t$, the squeezing function takes on a diagonal structure  also in the temporal domain (JTI) and the state's QFI with respect to frequency estimation is
\begin{equation}\label{QFIfrequncy}
    \mathcal{F}(\vert\psi_\nu\rangle) = 4\bar t^2 N_S^2   +   4(\Delta t^2 - \sigma_t^2) N_S^2 + O(N_S) 
\end{equation}
and thus the state is  variance optimal in the good regularization limit $\Delta t^2-  \sigma_t^2\approx \Delta t^2$.
Thus, the state can be variance optimal for both time and frequency estimation. However, we are not aware of a measurement scheme that can jointly estimate both parameters.
We also point out that for specific resource parameter configurations it is possible to find states that are fully optimal for both time and frequency estimation.

The above states are two-mode squeezed states that do not make use of displacement. In Sec.~\ref{subsubsec:VarianceOptimal} we have discussed that it is also possible to find variance optimal states that make use of displacement. In fact, if we take the state regularized version of Eq.~\eqref{eq:VarianceOptimalDisplaced}  for time estimation,   with the energy being equally distributed between displacement and squeezing, we find
$
    \mathcal{F}_\tau = 2\bar \omega^2 N_S^2   +  4 (\Delta \omega^2 - \sigma_\omega^2) N_S^2 + O(N_S) .
$
That is, we have variance optimality, and the state's mean prefactor is reduced by a factor of $\frac12$ compared to the case in which both modes are squeezed equally.
We may also wonder if there are other possible mode choices that could attain variance optimality. We investigated a state being displaced in the first HG mode and squeezed in the second, and found that at best, the state achieves a QFI of $ 2\bar \omega^2 N_S^2   +  2 (\Delta \omega^2 - \sigma_\omega^2) N_S^2 + O(N_S)$, which is a factor $\frac12$ off of variance optimality. Another state based on high-order HG modes was discussed in Ref.~\cite{Reichert2026}.

\subsection{Optimal measurements for time and frequency estimation} \label{sec:OptimalMeasurementsTimeFrequency}
Let us discuss the measurements that can be deployed for time and frequency estimation using the optimal (or variance-optimal) regularized state with squeezing matrix given in Eq.~\eqref{eq:OptimalRegularizedState}. We focus on time estimation, from which the frequency estimation case follows directly by analogy.

\subsubsection{Phase-sensitive detection}
From Sec.~\ref{sec:OptimalMeasurements} and Appendix.~\ref{app:OptimalMeasurementVarianceCompletly}, phase-sensitive homodyne measurements in the mode basis in which the generator appears diagonally are optimal. In the case of a regularized state, the same analysis as in  Appendix.~\ref{app:OptimalMeasurementVarianceCompletly} can be used to show that homodyning the regularized modes $  \Phi_0(t+\tau_{\text{pr}};t_i,\omega_i,\sigma_t,\theta)$ and $  \Phi_0(t +\tau_{\text{pr}};t_j,\omega_j,\sigma_t,\theta)$ is optimal for time estimation.  The local oscillator (LO) modes must be shaped to match those of the two-mode squeezed vacuum state under investigation. For two-mode states with a significant frequency separation, a bichromatic local oscillator (BLO) — composed of two frequency components that match the signal modes — provides a critical practical advantage. This technique translates the two-mode squeezing information to a single, lower electronic frequency band, circumventing the need for high-bandwidth shot-noise-limited electronics and enabling the use of more sensitive, low-frequency detection systems. \cite{Marino:07}. Other schemes employ multipixel \cite{ferrini_compact_2013} or multiplexed \cite{eldan_multiplexed_2024} homodyne detection which necessitate shaping the temporal or spectral profile of a monochromatic LO to match an arbitrary mode can be achieved using a programmable waveshaper, as it was achieved for quantum frequency comb \cite{araujo_full_2014}. 

Finally, although phase-sensitive homodyne detection is fully optimal in principle for pure states, its implementation requires a substantial amount of prior information, owing to the need for precise tuning of the measurement apparatus and the presence of phase ambiguities. For instance, in gravitational wave interferometry, while the wave distorts spacetime transversally, optimal filtering and signal extraction depend critically on prior assumptions about source parameters to define the template bank and analysis bandwidth.

As discussed in Sec.~\ref{subsec:TimeFrequencyCoherent}, these practical lack of knowledge about the prior information often render the contribution to the QFI arising from the mean term, scaling as $\sim \bar g^{2} N_S^{2}$, effectively inaccessible. By contrast, for time-of-flight estimation in lidar systems, the contribution associated with the variance term, scaling as $\sim \Delta g^{2} N_S^{2}$, is not subject to these constraints. Consequently, without a prior estimate $\tau_{\text{pr}}$ sufficiently close to $\tau$, one must choose a relative phase of the homodyne detection $\phi$ suboptimally, leading to a direct and severe reduction in estimation precision.

\subsubsection{Phase-insensitive detection}
We now aim to identify variance-optimal measurements according to Definition~\ref{def:OptimalMeasurements}, namely measurements whose FI attains the contribution $\sim \Delta g^{2} N_S^{2}$ of the QFI optimally, without necessarily achieving the mean contribution $\bar g^2 N_S^2$ optimally. We identify a phase-insensitive measurement scheme, or direct detection, that is variance optimal if the state under consideration satisfies the variance-optimal measurement condition given in Appendix.~\ref{app:VarianceOptimalMeasurementCondition}. Specifically, we propose time-resolved and frequency-resolved photon counting as phase-insensitive measurements for time and frequency estimation, respectively. Mathematically, these measurements correspond to projections onto the states $(K!)^{-1/2}\bigotimes_{n=1}^{K} \hat E^\dagger(t_n)\lvert 0\rangle$ and $(K!)^{-1/2}\bigotimes_{n=1}^{K} \hat A^\dagger(\omega_n)\lvert 0\rangle$, respectively. We will focus on the sensitivity achievable with time-resolved photon counting for time estimation, but the discussion is similar for frequency estimation with frequency resolved photon counting. \\

In Appendix.~\ref{app:VarianceOptimalMeasurementForVarianceOptimalState}, we show that, for the variance-optimal state with squeezing matrix given in Eq.~\eqref{eq:OptimalRegularizedState} and parameters satisfying $r_i = r_j$, $\sigma_\omega\ll \vert\omega_j-\omega_i\vert$,
 time-resolved photon counting yields a FI for time estimation of
 \begin{equation}
  F_{\text{direct}}(\vert\psi_\tau\rangle)=  4 \Delta \omega^{2} N_S^{2} + O(N_S) 
 \end{equation}
when $t_i=t_j$. The condition of the Gaussian centers in the time domain being equal, \textit{i.e}., $t_i=t_j$, is needed so that the variance-optimal state satisfies the variance-optimal measurement condition given in Appendix.~\ref{app:VarianceOptimalMeasurementCondition}. 
Notably, the mean contribution $\sim \bar \omega^{2} N_S^{2}$ is absent, and only the variance term contributes. While this may appear to represent a substantial reduction in precision, in many realistic scenarios the  precision associated to the variance term is the only component that can be robustly achieved, as we have mentioned in Sec.~\ref{subsec:TimeFrequencyCoherent}.

Furthermore, we show in Appendix.~\ref{app:VarianceOptimalMeasurementForHGState} that the state  of the form given in Eq.~\eqref{eq:SqueezedStateSchmidtModes}, whose squeezing matrix appears diagonally in the HG mode basis, satisfies the variance-optimal measurement condition in Appendix.~\ref{app:VarianceOptimalMeasurementCondition}, and thus, time-resolved photon counting attains an FI of $2 \Delta \omega^{2} N_S^{2} + O(N_S)$.
 This establishes time-resolved photon counting as a variance-optimal measurement for a broad class of pure Gaussian quantum states.\\

In practice, this detection scheme relies on the implementation of photon counters. Photon-number-resolving~(PNR) detectors directly measure the photon-count distribution of the optical field incident upon them, giving access to the full statistics of the probe state at the detection stage. The practical construction of PNR detectors has followed two main routes, both achieving photon-number resolution of up to 100 photons: multiplexing multiple SNSPD detectors so that simultaneous arrival events are spatially or temporally separated~\cite{cheng_unveiling_2022}, or employing transition-edge sensors~(TES) \cite{eaton_resolution_2023}, which exhibit a timing jitter of order nanoseconds. By contrast, the sub-picosecond timing jitter achieved by optimised single-pixel SNSPDs---demonstrated at $2.6$~ps system jitter in ~\cite{korzh_demonstration_2020}. Therefore, the use of two-mode squeezed states of light as probes for measuring time and frequency shifts—combined with photon-number-resolving detectors with a resolution of 100 photons and low timing jitter—enables performance beyond the shot-noise scaling and is accessible with current technology.

\subsection{Literature on time-frequency parameter estimation}

A substantial body of literature is dedicated to quantum-enhanced estimation of time delays and frequency shifts, exploiting non-classical states of light including two-photon and squeezed states. These protocols are of particular relevance to quantum sensing applications such as radar and lidar. In these contexts, an estimated temporal delay directly corresponds to the range of a target, while a measured frequency shift, induced by the Doppler effect, provides information about its relative velocity.

\subsubsection{Two-photon states for time and frequency shifts estimation}

Two-photon states, which are non‑Gaussian and typically exhibit very low squeezing, offer valuable insights into the spectral and temporal engineering required for quantum metrological protocols.  Two‑photon states provide a clear advantage for probing sensitive samples \cite{nasr_quantum_2009,taylor_quantum_2016}. The measurement of time delays, classically performed using Optical Coherence Tomography– a Michelson interferometer – found its quantum counterpart in the Hong‑Ou‑Mandel (HOM) experiment. The HOM interferometer demonstrated the ability to measure sub‑picosecond time intervals \cite{PhysRevLett.59.2044}, laying the foundation for Quantum Optical Coherence Tomography \cite{nasr_demonstration_2003}. In \cite{lyons_attosecond-resolution_2018}  formally applied quantum metrology tools to the HOM interferometer, demonstrating  ten of attosecond‑scale resolution with entangled photon pairs. Subsequent spectral engineering of the biphoton state further enhanced precision. For instance, using a two‑mode frequency‑entangled state led to a factor‑of‑two improvement in resolution \cite{chen_hong-ou-mandel_2019} for a fixed total bandwidth. Theoretical work then showed that engineered time-frequency grid states could push resolution to the sub‑attosecond regime \cite{fabre_parameter_2021}.
The estimation of frequency shifts has been proposed in \cite{fabre_parameter_2021}
using generalized HOM interferometry with spectrally engineered two‑photon states. The framework was further generalized by \cite{descamps_time-frequency_2023}, who established the ultimate bounds for estimating a broader class of time‑frequency parameters, such as shear and rotation in chronocyclic phase space.\\

These results connect naturally to the findings of the present manuscript. In particular, the factor-of-four improvement in time-delay resolution achieved with a two-color frequency-entangled biphoton state \cite{chen_hong-ou-mandel_2019,fabre_parameter_2021} has a direct analogue in our framework: the same enhancement arises for two-mode squeezed vacuum states with the same spectral separation. This highlights that spectral engineering yields consistent metrological gains across different photon-number statistics — whether in the low-flux two-photon regime or the bright squeezed-light regime.

\subsubsection{Multimode squeezed states for time-frequency estimation}
The time or frequency accuracy enhancement achieved through spectral engineering of two-photon states, for a fixed amount of resources, is also demonstrably attainable using multimode squeezed states, as shown in this manuscript. This could be understood as two-mode squeezed state as a "Gaussification" of two-photon (two-mode) states, but such an explicit transformation will be let for further work. Let us now exhibit related works, and additional challenges to implement our idea. \\

The Gaussian state considered in Ref.~\cite{reichert24}, for example, is similar to the one discussed in Sec.~\ref{subsub:SuboptimalState} and is suboptimal: its variance-term prefactor is only "1" for both time and frequency estimation, which is below the maximal value of "4".  Ref.~\cite{reichert_quantum-enhanced_2022} studied an idler-assisted state with a Gaussian-shaped squeezing function, whose performance depends strongly on the specific form of the squeezing function and is likewise suboptimal in all cases. 

Strong coherent states with a small amount of squeezing in the derivative mode, that were shown to be optimal in  the limit $\vert \alpha\vert\gg \sinh^2 (r)$ Ref.~\cite{pinel2012ultimate}, were also considered theoretically for time estimation \cite{lamine2008quantum} and experimentally for frequency estimation \cite{cai2021quantum}. According to our framework, these approaches are suboptimal due to the linear scaling with signal photon number, but can be made variance optimal as shown in Sec.~\ref{subsec:VarianceOptimalStateTimeFrequency} by allocating half of the energy to squeezing. Another scenario where squeezing in the derivative mode is beneficial is the estimation of the temporal separation between two coherent or thermal pulses, which can be described in terms of symmetric and anti-symmetric temporal modes~\cite{sorelli2024}. In the regime of large temporal separations, where the pulse overlap is negligible, the estimation becomes primarily sensitive to variations in the mode shapes; accordingly, enhanced performance can be achieved by introducing squeezing in the derivatives of these symmetric and anti-symmetric modes.

A complementary perspective is provided by the bosonic metrology framework of Ref.~\cite{saharyan_resources_2025}, which disentangles the contributions of particle-number statistics and modal entanglement to the quantum Fisher information by making the phase reference explicit. In the continuous-variable regime, precision enhancement arises from quadrature squeezing of the appropriate collective mode, and for uncorrelated multimode probes, squeezing is the \emph{sole} source of quantum advantage — mode entanglement introduced by beam splitters provides no additional gain.

We should precise finally that achieving a resolution enhancement with squeezed states, as compared to classical states, requires a substantial degree of squeezing. In practice, this is challenging to realize due to multiple sources of noise, including injection losses, thermal effects, and other decoherence mechanisms. These factors, particularly photon losses over extended distances, rapidly degrade squeezing and induce thermalization, thereby limiting the feasibility of our current proposal for long-range applications such as radar and lidar. 

We provide now a simple estimate of the sensing distance. In such experimental settings, the total homodyne detection efficiency—accounting for photodiode quantum efficiencies, mode imbalance, optical losses, and equivalent electronic noise—can reach values as high as approximately 90\% (see for instance \cite{wang_provably-secure_2023}). Using a channel transmission under the form $\eta=10^{-\alpha L/10}$ where $L$ is the propagation length, and $\alpha\sim 0.2$ dB/km in optical fibers at telecom wavelength 1550nm, the total transmission is limited to $\sim 12$km (equivalent to $\eta>1/2$) for beating coherent states estimation for instance (see Eq.~(\ref{homodyne})).

\section{Second application of optimal mode estimation: Beam displacement and tilt estimation}\label{sec:BeamDisplacementAndTiltEstimation}

A further illustrative example is the estimation of beam displacement and tilt, as represented in Fig.~\ref{beamdisplacement}. Although its mathematical formulation is essentially equivalent to that of time and frequency estimation — and therefore does not warrant a repetition of the analysis presented in Sec.~\ref{sec:TimeFrequencyEstimation}, we emphasize that our framework offers a new perspective on this problem when compared to previous literature.

\subsection{Beam displacement estimation}
Let us consider a monochromatic aperture field $\hat{E}(x,y)\,e^{-i\omega t}$ of angular frequency $\omega$ in the detection plane $z=0$, which can be decomposed into spatial plane-wave modes as
\begin{align}
    \hat{E}(x,y) = \frac{1}{2\pi}\iint_{k_x^2+k_y^2<\omega^2}\mathrm{d}k_x\,\mathrm{d}k_y\, \hat{A}(k_x,k_y)\,e^{-i(k_x x + k_y y)}. \label{eq:SpatialField}
\end{align}
In the paraxial regime, the mode operators $\hat{E}(x,y)$ and $\hat{A}(k_x,k_y)$ in the spatial and spectral domains satisfy the commutation relations $[\hat{E}(x,y),\hat{E}^\dag(x',y')] = \delta(x-x')\delta(y-y')$ and $[\hat{A}(k_x,k_y),\hat{A}^\dag(k_x',k_y')] = \delta(k_x-k_x')\delta(k_y-k_y')$, and the integration domain in Eq.~\eqref{eq:SpatialField} may be extended to $\mathbb{R}^2$. A beam displacement is a transverse displacement $d$ of the beam in the aperture plane, corresponding to $\hat{E}(x,y) \to \hat{E}(x+d,y)$, which induces the transformation $\hat{A}(k_x,k_y) \to \hat{A}(k_x,k_y)\,e^{-ik_x d}$ on the spectral mode operators.

From this, one can readily deduce the unitary operator $\hat{U}_d = e^{-id\hat{G}_d}$ of the beam displacement transformation, with generator
\begin{align}
    \hat{G}_d = \iint_{\mathbb{R}^2}\mathrm{d}k_x\,\mathrm{d}k_y\, k_x\,\hat{A}^\dag(k_x,k_y)\,\hat{A}(k_x,k_y).
\end{align}
This generator is already expressed in diagonal form; in an arbitrary basis, one would require four integrals (or sums in the discrete case) instead of two.

Small tilts of the beam by an angle $\vartheta$ in the $xz$ plane about the origin correspond to the transformation $\hat{E}(x,y) \to \hat{E}(x,y)\,e^{-i\vartheta\frac{\omega}{c}x}$, from which one deduces the unitary operator $\hat{U}_\vartheta = e^{-i\vartheta\hat{G}_\vartheta}$ of the tilt transformation, with generator
\begin{align}
    \hat{G}_\vartheta = \frac{\omega}{c}\iint_{\mathbb{R}^2}\mathrm{d}x\,\mathrm{d}y\, x\,\hat{E}^\dag(x,y)\,\hat{E}(x,y),
\end{align}
which is likewise already expressed in the basis in which it takes diagonal form.

\begin{figure}[htbp]
\centering
\begin{tikzpicture}
  \node[inner sep=0] (img) {\includegraphics[width=0.8\linewidth]{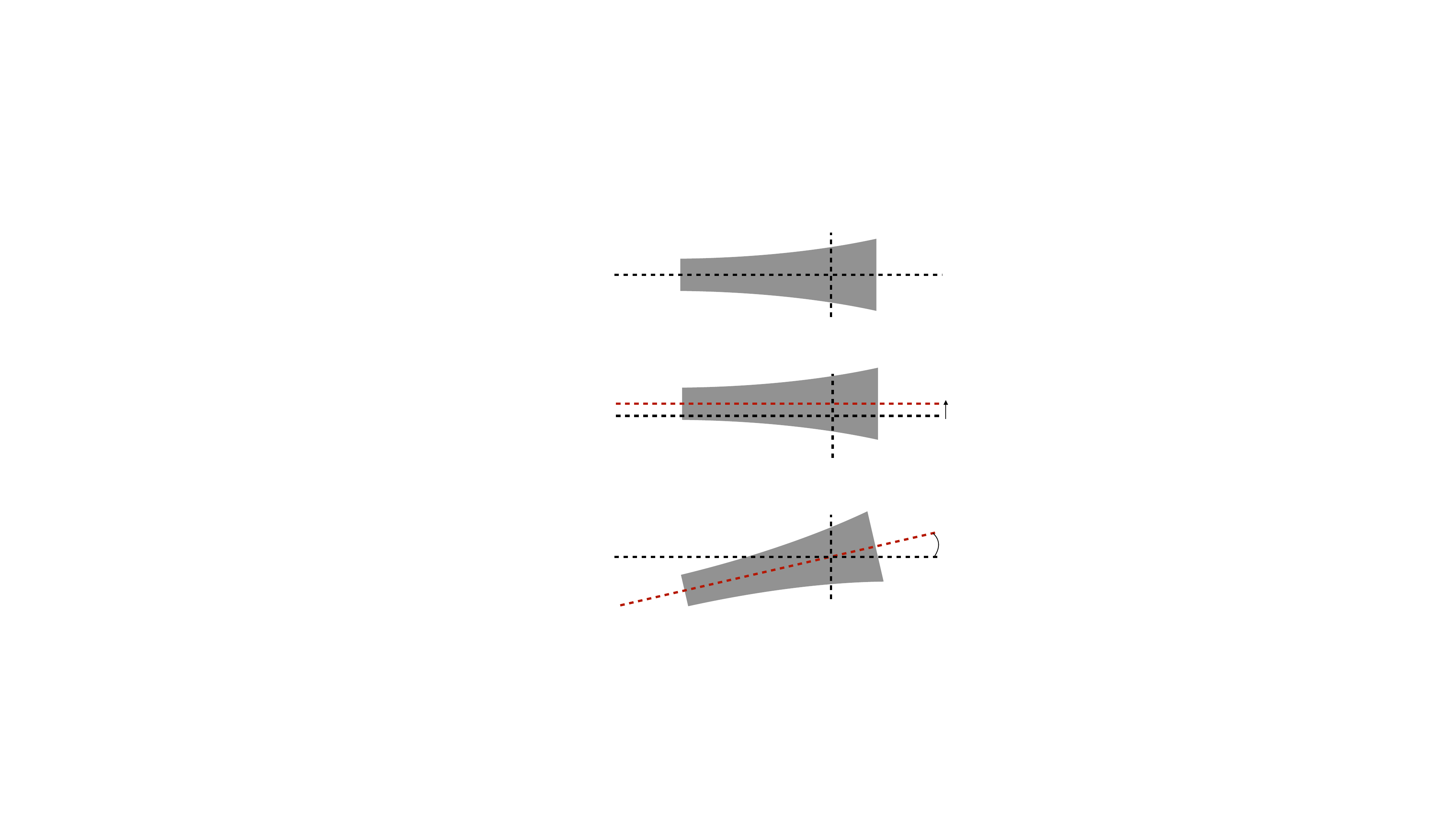}};
  \node at (-3.1,2.1) {$a)$};
  \node at (-3.1,-1.1) {$b)$};
  \node at (3.55,1.6) {$d$};
   \node at (3,-1.2) {$\vartheta$};
   \node at (1.2,0.1) {detection plane};
   \node at (-2.5,0.1) {reference axis};
   \draw[->] (-2.7,0.3) -- (-2.7,+1.4);
   \draw[->] (-2.7,-0.2) -- (-2.7,-1.3);
\end{tikzpicture}
\caption{\label{beamdisplacement}Two mode transforms are illustrated: $a)$ beam displacement, where the beam is shifted by a distance $d$ in   direction transverse to the detection plane, and $b)$, beam tilt, where the beam is rotated by an angle $\vartheta$ around an axis perpendicular to the reference axis. This schematic is widely inspired from \cite{delaubert2006quantum}. }
\end{figure}

\subsection{Resources and optimal states for beam displacement and tilt estimation}
Let us now discuss the resources for beam displacement and tilt estimation. The common resource is the signal photon number
\begin{align}
    N_S &= \iint_{\mathbb{R}^2}\mathrm{d}k_x\,\mathrm{d}k_y\, \langle\psi\vert\hat{A}^\dag(k_x,k_y)\,\hat{A}(k_x,k_y)\vert\psi\rangle \\
    &= \iint_{\mathbb{R}^2}\mathrm{d}x\,\mathrm{d}y\, \langle\psi\vert\hat{E}^\dag(x,y)\,\hat{E}(x,y)\vert\psi\rangle.
\end{align}
We define $I^{\text{spec}}(k_x,k_y) = \langle\psi\vert\hat{A}^\dag(k_x,k_y)\,\hat{A}(k_x,k_y)\vert\psi\rangle/N_S$ as the normalized spectral photon-number intensity, and $I^{\text{spat}}(x,y) = \langle\psi\vert\hat{E}^\dag(x,y)\,\hat{E}(x,y)\vert\psi\rangle/N_S$ as the normalized spatial photon-number intensity. The resources for beam displacement estimation are then
\begin{align}
   \bar{k}_x &= \iint_{\mathbb{R}^2}\mathrm{d}k_x\,\mathrm{d}k_y\, k_x\, I^{\text{spec}}(k_x,k_y),\\
   \Delta k_x^2 &= \iint_{\mathbb{R}^2}\mathrm{d}k_x\,\mathrm{d}k_y\, (k_x-\bar{k}_x)^2\, I^{\text{spec}}(k_x,k_y),
\end{align}
corresponding to the mean and variance of the transverse wave vector in the $x$ direction. The resources for beam tilt estimation are
\begin{align}
   \bar{x}\,\frac{\omega}{c} &= \iint_{\mathbb{R}^2}\mathrm{d}x\,\mathrm{d}y\, x\, I^{\text{spat}}(x,y),\\
   \Delta x^2\,\frac{\omega^2}{c^2} &= \iint_{\mathbb{R}^2}\mathrm{d}x\,\mathrm{d}y\, (x-\bar{x})^2\, I^{\text{spat}}(x,y),
\end{align}
corresponding to the mean and variance of the spatial intensity distribution in the $x$ direction, each multiplied by the factor $\omega/c$.

Applying the results of Sec.~\ref{sec:ModeParamterEstimationGaussian}, the optimal QFI for displacement and tilt estimation is $(8\bar{k}_x^2 + 4\Delta k_x^2)N_S^2 + O(N_S)$ and $(\omega^2/c^2)(8\bar{x}^2 + 4\Delta x^2)N_S^2 + O(N_S)$, respectively. Completely analogously to Sec.~\ref{sec:OptimalStatesTimeFrequency}, an optimal state can be constructed by squeezing two regularized modes of the basis in which the generator is diagonal. For displacement estimation one may, for example, appropriately squeeze regularized versions of $\hat{A}(k_{x_i},0)$ and $\hat{A}(k_{x_j},0)$, and for tilt estimation, $\hat{E}(x_i,0)$ and $\hat{E}(x_j,0)$. The discussion of phase-sensitive and phase-insensitive measurements is entirely analogous to that of Sec.~\ref{sec:OptimalMeasurementsTimeFrequency}. The phase-insensitive measurements that achieve variance-optimal performance are spatially resolved photon counting for beam displacement estimation, and $k_x$-$k_y$-resolved photon counting for beam tilt estimation.

\subsection{Comparison with related works}
The recent work of Ref.~\cite{he2025b} also considered the problem of beam displacement sensing  using a quantum metrology framework. Their approach was to describe the problem in the spatial Hermite-Gaussian (HG) mode basis, compute the generator matrix in this basis, and evaluate the 
QFI via a formula analogous to Eq.~\eqref{eq:QFIcomponents} for a general single-mode squeezed state. Because the problem is intrinsically infinite-dimensional, they introduce a mode cutoff and then maximize the QFI subject to a photon-number constraint, without invoking additional 
resources. They find that the optimal state is related to the basis in which the truncated generator matrix appears diagonal, a result similar to our manuscript. Specifically, their optimal state is a single-mode squeezed state squeezed in the mode belonging to the generator-diagonalizing mode basis corresponding to the largest generator eigenvalue $k_{x,\text{max}}$, achieving a QFI of $8k_{x,\text{max}}^2 N_S^2 + O(N_S)$. This is consistent with our framework, as this state corresponds to the mean-optimal state discussed in Sec.~\ref{subsub:MeanOptimal}. The authors have also found that the phase-sensitive homodyne measurement of the populated mode is optimal \cite{he2025b}.

Our approach extends this analysis by introducing two additional resources, $\bar{k}_x$ and $\Delta k_x$, alongside $N_S$. Although we likewise truncate the mode space, a key advantage of our framework is that the optimal states we identify are \emph{independent} of the mode 
cutoff. More importantly, our analysis reveals that the state found in Ref.~\cite{he2025b} is not the only optimal state but belongs to a broader family: any state squeezed in a single arbitrary mode makes optimal use of the mean resource $\bar{k}_x$, with the prefactor 
of the mean term in the QFI universally equal to~$8$. Furthermore, by appropriately squeezing two modes in the generator-diagonalizing basis, one can also saturate the variance resource $\Delta k_x$. A particularly attractive special case is the variance-optimal state with 
$\bar{k}_x = 0$, for which phase-insensitive spatially-resolved photon counting achieves the optimal Fisher information $4\Delta k_x^2 N_S^2 + O(N_S)$, as shown in Sec.~\ref{sec:OptimalMeasurementsTimeFrequency}. Such a phase-insensitive measurement strategy may be experimentally more accessible than the  phase-sensitive homodyne detection proposed in~\cite{he2025b}: intensity measurements on HG modes via spatial mode demultiplexing with a multi-plane light conversion system~\cite{delaubert2006tem, boucher2020spatial, rouviere_ultra-sensitive_2024} already constitute a well-established practical toolbox that could be directly repurposed as a near-optimal platform for beam displacement sensing with non-classical states.

There are various related experimental works \cite{delaubert2006tem,delaubert2006quantum,treps2005quantum,boucher2020spatial} on beam displacement and tilt estimation using strong coherent states with squeezing in its derivative mode. These are examples of the general framework in Ref.~\cite{pinel2012ultimate} discussed in Sec.~\ref{sec:PreviousLiterature}. These approaches are practical due to the use of coherent light and homodyne measurements, which are readily available. However, according to our framework, they are suboptimal due to their linear scaling with photon number. However, they can be made variance optimal as shown in Sec.~\ref{subsec:VarianceOptimalStateTimeFrequency} by allocating half of the energy to squeezing. \\

Finally, this beam-displacement setting provides also a natural physical illustration of the results of Sec.~\ref{subsec:OptimalStates}: a relevant example is super-resolution imaging, where no scaling improvement over the SQL is possible whenever the initially populated mode is orthogonal to its own derivative [19]. In our framework this condition is exactly $\bar g=0$, since $\bar g N_S=\langle\psi|\hat G|\psi\rangle$ and $\langle\psi|\partial_\lambda\psi\rangle=-i\bar g$. For a symmetric input mode such as the fundamental Gaussian HG$_0$ — typical of a diffraction-limited point-spread function — the mean transverse wavevector vanishes by symmetry, $\bar k_x=\bar g=0$, so the derivative of HG$_0$ is precisely the orthogonal HG$_1$ mode. Recovering Heisenberg scaling requires accessing the $\Delta g^2$ channel via the two-mode squeezed construction of Sec.~\ref{subsec:OptimalStates}, which correlates HG$_0$ with the derivative mode HG$_1$ — reproducing, within our resource framework, the well-known necessity of accessing modes beyond the direct image to beat the Rayleigh limit in super-resolution imaging \cite{tsang_quantum_2016,gessner2023b}.

\section{Outlook and Discussion}\label{conclusion}

In this paper, we have established that, for pure Gaussian states, the QFI of any mode parameter estimation protocol is upper-bounded by a function of three resource parameters: the mean and variance of the generator intensity distribution over the natural mode basis induced by the transformation, denoted $\bar{g}$ and $\Delta g$, and the mean signal photon number $N_S$. We have shown that an appropriately configured two-mode squeezed vacuum state is optimal, in the sense that its QFI saturates this upper bound for any fixed values of $\bar{g}$, $\Delta g$, and $N_S$. Crucially, we have further demonstrated that multimode homodyne detection on these two modes constitutes an optimal measurement, in the sense that it achieves the QCRB and thus extracts all the information available in the optimal probe state. This resource formulation applies across estimation tasks: we have demonstrated its optimality explicitly for time delays and frequency shifts, and for tilt  and beam displacement. We have summarized the sensitivities that can be reached with different quantum states in Table~\ref{metrology}, that provide a complete characterization of optimal quantum strategies for mode parameter estimation.

\begin{table}[h]
\centering
\begin{tabular}{l  c  r}
\hline\hline
\textbf{State} & \textbf{QFI} $\mathcal F(\vert\psi_\lambda\rangle)$   \\ 
\hline
Coherent state $[\otimes_n\hat D_{\hat a_n}(\alpha_n)]\vert 0\rangle$ 
    &  $(4 \bar g^2 + 4 \Delta g^2) N_S$ \\
    Single-mode squeezed state $\hat S_{\hat c_i}\vert 0\rangle$ & $(8\bar g^2 +0\Delta g^2)N_S^2 + O(N_S)$ \\ 
    Variance-optimal state  Eq.~\eqref{eq:VarianceOptimalDisplaced} & $(2\bar g^2  + 4\Delta g^2) N_S^2 + O(N_S)$ \\
    Variance-optimal state  Eq.~\eqref{eq:VarianceOptimalState} & $(4\bar g^2  + 4\Delta g^2)
    N_S^2 + O(N_S)$ \\
    Optimal state in Eq.~\eqref{eq:CompletlyOptimalState} & $(8\bar g^2  + 4\Delta g^2) N_S^2 + O(N_S)$ \\
    
\hline\hline
\end{tabular}
\caption{\label{metrology}Quantum Fisher Information (QFI) for mode parameter estimation with various quantum states. The definition of the various types of optimality are given in Sec.~\ref{def:StateOptimality}.}
\end{table}

As perspectives, performing mode estimation protocol, at constant level resources, for non-Gaussian states, is a natural step as the investigation started for $N$-frequency entangled single photons \cite{descamps_quantum_2023} and multimode photon substracted state \cite{lopetegui_detection_2025}. The difficulty that seems to arise however, when dealing with considering the modal structure of non-Gaussian states, the modal and particle-number statistics can be intertwined \cite{descamps_quantum_2023}, which renders more difficult the resource comparison. On the other hand, from a foundational perspective, an important question would concern the role of superselection rules in setting the ultimate limits of precision \cite{saharyan_resources_2025}. Besides, the QCR bounds may not be adapted for non-Gaussian statistics in the non-
asymptotic regime \cite{montenegro_review_2025}, and other bounds were developed \cite{PhysRevLett.108.230401}.  Then, the other perspectives are to determine the optimal strategies for the loss and noise regime encountered in realistic lidar and radar scenarios \cite{giovannetti2001quantum,giovannetti2002positioning,shapiro2007quantum,maccone2023gaussian,huang2021quantum,zhuang2017entanglement,zhuang_ultimate_2022,reichert2023quantum,wei2024quantum}, where thermal noise is expected to degrade mode parameter estimation below the Heisenberg limit while potentially preserving a constant-factor quantum advantage over the shot-noise scaling. Note that, in these scenarios, the modal resource parameters also appear naturally, suggesting that our model's resource parameters play a role that extends beyond the idealized setting.
Finally, our formalism can be applied across a wide range of experimental scenarios, from estimating parameter-shift mode transformations like time and frequency shifts to analyzing nonlinear phase effects that arise in dispersive propagation, diffraction, or even reflection from accelerating or time-varying surfaces.

\section*{Acknowledgment}
M.~R. acknowledges insightful discussions with Francesco Albarelli. M.~R. acknowledges support from UPV/EHU Ph.D. Grant No. PIF21/289. M.~S. and M.~R. acknowledge support from HORIZON-CL4-2022-QUANTUM01-SGA project 101113946 OpenSuperQ-Plus100 of the EU Flagship on Quantum Technologies, the Spanish Ramon y Cajal Grant RYC-2020-030503-I, and the “Generaci\'on de Conocimiento” project Grant No. PID2024-156808NB-I00 funded by MI-CIU/AEI/10.13039/501100011033, by “ERDF Invest in your Future” and by FEDER EU. We also acknowledge support from the Elkartek project KUBIBIT -kuantikaren berrikuntzarako ibilbide teknologikoak (ELKARTEK25/79). This work has also been partially supported by the Ministry for Digital Transformation and the Civil Service of the Spanish Government through the QUANTUM ENIA project call – Quantum Spain project, and by the European Union through the Recovery, Transformation and Resilience Plan – NextGenerationEU within the framework of the Digital Spain 2026 Agenda.

\section*{Disclosures}

The authors declare no conflicts of interest.

\section*{Data availability}

Data underlying the results presented in this paper are not publicly available at this time but may be obtained from the authors upon reasonable request.

\bibliography{bibliosqueezed}

\newpage
\appendix 
\onecolumngrid   

\section{Mode transformation}\label{app:modetransformation}

\subsection{The mode transform and different bases}

Let us summarize the different mode bases used in this manuscript and the most important relations for the calculations that follow.

Throughout the manuscript, $\{\hat a_n\}_n$ denotes an arbitrary mode basis. The mode basis $\{\hat c_n\}_n$, defined by $\hat c_n = \hat V \hat a_n\hat V^\dag$, is the special basis in which the squeezing matrix — and correspondingly the covariance matrix (Appendix~\ref{app:CovarianceMatrix}) — is diagonal. Equivalently, a multimode squeezed vacuum state expressed in this basis factorizes as a tensor product of single-mode squeezed vacuum states.

The action of $\hat V$ on $\{\hat a_n\}_n$ is given by the transformation rules:
\begin{align}
    \hat V^\dag \hat a_n \hat V &= \sum_{i=1}^M V_{ni}\,\hat a_i, \label{eq:PassiveTransformation1}\\
    \hat V^\dag \hat a_n^\dag \hat V &= \sum_{i=1}^M V_{ni}^*\,\hat a_i^\dag, \label{eq:PassiveTransformation2}\\
    \hat V\,\hat a_j \hat V^\dag &= \sum_{n=1}^M (V^\dag)_{jn}\, \hat a_n = \sum_{n=1}^M V_{nj}^*\, \hat a_n, \label{eq:PassiveTransformation1Schrodinger}\\
    \hat V\,\hat a_j^\dag \hat V^\dag &= \sum_{n=1}^M (V^\dag)_{jn}^*\, \hat a_n^\dag = \sum_{n=1}^M V_{nj}\, \hat a_n, \label{eq:PassiveTransformation2Schrodinger}
\end{align}
where $V_{nm}$ are the elements of the corresponding $M\times M$ unitary matrix $V$. Analogous relations hold for $\hat U_\lambda$: $\hat U_\lambda^\dag \hat a_n\hat U_\lambda = \sum_{i=1}^M U_{ni}(\lambda)\, \hat a_i$. We also define $\hat a_n(\lambda) := \hat U_\lambda \hat a_n \hat U_\lambda^\dag$.

The generator $\hat G$ of the unitary mode transformation $\hat U_\lambda = e^{-i\lambda \hat G}$ can be expressed in any mode basis. In the bases $\{\hat a_n\}_n$ and $\{\hat c_n\}_n$ it reads
\begin{align}
    \hat G = \sum_{n,m=1}^M G_{nm}\, \hat a_n^\dag \hat a_m = \sum_{i,j=1}^M \tilde G_{ij}\, \hat c_i^\dag \hat c_j,
\end{align}
where $\tilde G_{ij} = (V^\dag G V)_{ij}$, obtained by applying Eqs.~\eqref{eq:PassiveTransformation1}--\eqref{eq:PassiveTransformation2} together with $\hat c_n = \hat V \hat a_n\hat V^\dag$. The generator matrix $G$ is Hermitian and therefore diagonalizable, so it is always possible to find a mode basis $\{\hat b_i\}_i$, defined by $\hat b_i = \hat B \hat a_i \hat B^\dag$, in which the generator takes the diagonal form
\begin{align}
    \hat G = \sum_{i=1}^M g_i\, \hat b_i^\dag \hat b_i,
\end{align}
where $\{g_1,\ldots, g_M\}$ are the real eigenvalues of $G$, ordered without loss of generality as $g_1 \leq g_2 \leq \cdots \leq g_M$. The mode transformation $\hat B$ acts on the basis $\{\hat a_n\}_n$ as
\begin{align}
    \hat B\, \hat a_n \hat B^\dag = \sum_{i=1}^M (B^\dag)_{ni}\, \hat a_i, \label{app:eq:DefBunitary}
\end{align}
where $B_{ij}$ are the elements of a $M\times M$ unitary matrix $B$, and one has $G = B\,\mathrm{diag}(g_1,\ldots,g_M)\,B^\dag$.

\begin{table}[h]
\centering
\begin{tabular}{cccc}
\hline
\textbf{Basis} & \textbf{Notation} & \textbf{Property} & \textbf{Relation}\\
\hline
Arbitrary mode basis 
    & $\{\hat a_n\}_n$  & - & -
    \\[4pt]

Generator Diagonalizing basis 
    & $\{\hat b_n\}_n$   & $\hat G = \sum_{n=1}^M g_n \hat b_n^\dag \hat b_n$ & $\hat b_n = \hat B\hat a_n\hat B^\dag$
   \\[4pt]

Schmidt/Takagi basis 
    & $\{\hat c_n\}_n$  & $\hat V\hat S\hat V^\dag = [\otimes_{n=1}^M \hat S_{\hat c_n}(r_n)]$ & $\hat c_n = \hat V\hat a_n\hat V^\dag$
    \\
\hline
\end{tabular}
\caption{The three mode bases  that are used in this manuscript in the discrete and finite mode setting.}
\end{table}

\subsection{Bringing Gaussian states into disentengled form}\label{app:subsec:GaussianStateDisentangled}
We stated in the main text that any pure Gaussian state can be written in the product form of Eq.~\eqref{eq:GeneralGaussianState}, that we will now show. A singular value decomposition of the complex symmetric squeezing matrix $f$ appearing in Eq.~\eqref{eq:GeneralGaussianArbitraryBasis} takes the general form
\begin{align}
    f = V_L\, D\, V_R^\dag,
\end{align}
where $D$ is a diagonal matrix with non-negative entries and $V_L$, $V_R$ are unitary matrices. Since $f$ is symmetric, i.e., $f = f^T$, the decomposition specializes to
\begin{align}
    f = V\, D\, V^T,
\end{align}
known as the Takagi factorization, with components $f_{nm} = \sum_{j=1}^M V_{nj}\,r_j\, V_{mj}$.

Consider the multimode squeezing operator. Substituting the Takagi factorization and regrouping terms gives
\begin{align}
    &\exp \left( \frac{1}{2} \sum_{n,m=1}^M f_{nm}\,\hat a_n^\dag \hat a_m^\dag + \mathrm{h.c.} \right) \notag\\
    &= \exp \left( \frac{1}{2} \sum_{n,m=1}^M \sum_{j=1}^M V_{nj}\,r_j\, V_{mj}\,\hat a_n^\dag \hat a_m^\dag + \mathrm{h.c.} \right) \\
    &= \exp \left( \frac{1}{2} \sum_{j=1}^M r_j \left(\sum_{n=1}^M V_{nj}\,\hat a_n^\dag\right)\!\left(\sum_{m=1}^M V_{mj}\,\hat a_m^\dag\right) + \mathrm{h.c.} \right)\\
    &= \exp \left( \frac{1}{2} \sum_{j=1}^M r_j\, (\hat c_j^\dag)^2 + \mathrm{h.c.} \right),
\end{align}
where we defined $\hat c_j := \hat V\hat a_j\hat V^\dag$ as the new mode basis, and used $\sum_{n=1}^M V_{nj}\,\hat a_n^\dag = \hat V \hat a_j^\dag \hat V^\dag = \hat c_j^\dag$.

The displacement operator can likewise be rewritten in the $\{\hat c_j\}_j$ basis. Using $\hat a_n^\dag = \sum_{j=1}^M V_{nj}^*\,\hat c_j^\dag$, which follows from Eq.~\eqref{eq:PassiveTransformation2}, one obtains
\begin{align}
    \exp\!\left( \sum_{n=1}^M \beta_n\, \hat a_n^\dag - \mathrm{h.c.} \right)
    &= \exp\!\left( \sum_{n=1}^M \beta_n \sum_{j=1}^M V_{nj}^*\,\hat c_j^\dag - \mathrm{h.c.} \right) \\
    &= \exp\!\left( \sum_{j=1}^M \alpha_j\,\hat c_j^\dag - \mathrm{h.c.} \right),
\end{align}
with $\alpha_j = \sum_{n=1}^M (V^\dag)_{jn}\,\beta_n$.

Combining these results, the state in Eq.~\eqref{eq:GeneralGaussianArbitraryBasis} becomes
\begin{align}
    \vert \psi\rangle
    = \exp \left( \sum_{j=1}^M \alpha_j\,\hat c_j^\dag - \mathrm{h.c.} \right)
      \exp \left( \frac{1}{2} \sum_{j=1}^M r_j\,(\hat c_j^\dag)^2 - \mathrm{h.c.} \right) \vert 0\rangle
    = \bigotimes_{j=1}^M \hat D_{\hat c_j}(\alpha_j)\,\hat S_{\hat c_j}(r_j)\,\vert 0\rangle,
\end{align}
which, expressed back in the original mode basis $\{\hat a_n\}_n$, reads
\begin{align}
   \vert \psi\rangle = \hat V\!\left[ \bigotimes_{j=1}^M \hat D_{\hat a_j}(\alpha_j)\,\hat S_{\hat a_j}(r_j) \right] \vert 0\rangle.
\end{align}

\subsection{Relating the squeezing matrix to the covariance matrix}\label{app:CovarianceMatrix}
Here, we will relate the squeezing matrix $f$ to the more common covariance matrix $\Sigma$ that one encounters when working with quadrature operators
\begin{align}
    \hat q_n = \frac{\hat c_n+\hat c_n^\dag}{\sqrt{2}} , \quad\quad\quad \hat p_n = \frac{\hat c_n^\dag - \hat c_n^\dag}{\sqrt{2}i}
\end{align}
instead of annihilation $\hat c_n$ and creation $\hat c_n^\dag$ operators. Note that here we work with $\{ \hat c_n\}_n$ (with $\hat c_n = \hat V\hat a_n\hat V^\dag$)  as in this basis the covariance takes on a simple diagonal form. The quadrature operators satisfy the commutation relation $[\hat x_n,\hat p_m] = \hbar i\delta _{nm}$.
Let us introduce 
\begin{align}
    \hat{\mathbf{r}} = (\hat x_1,\hat p_1,\hat x_2,\hat p_2,\ldots,\hat x_M,\hat p_M)^T .
\end{align}
The components of the $2M\times 2M$ covariance matrix are defined as
\begin{align}
    \Sigma_{kl} = \text{tr}[\vert \psi\rangle\langle\psi\vert \{ (\hat r_k-\langle\hat r_k \rangle), (\hat r_l-\langle\hat r_l \rangle) \}]
\end{align}
where $\{\hat A,\hat B\}= AB+BA$.

Now, consider a general pure Gaussian state
\begin{align}
    \vert\psi\rangle &= \hat V \left[\otimes_{n=1}^M\hat D_{\hat a_n}(\alpha_n) \hat S_{\hat a_n}(r_n)\right]\vert 0\rangle \\
    &= \left[\otimes_{n=1}^M\hat D_{\hat V\hat a_n\hat V^\dag}(\alpha_n) \hat S_{\hat V\hat a_n\hat V^\dag}(r_n)\right]\vert 0\rangle \\
       &= \left[\otimes_{n=1}^M\hat D_{\hat c_n}(\alpha_n) \hat S_{\hat c_n}(r_n)\right]\vert 0\rangle .
\end{align}
The associated $M\times M$ squeezing matrix expressed in the $\{\hat c_n\}_n$ mode basis is
\begin{align}
    f &=     \text{diag}(r_1,r_2,\ldots, r_M) 
\end{align}
The $2M\times 2M$ covariance matrix expressed in the $\{\hat c_n\}_n$ mode basis
\begin{align}
       \Sigma =  \bigoplus_{n=1}^M \begin{pmatrix}
     e^{+ 2r_n } & 0\\
     0 & e^{-2r_n }
    \end{pmatrix}  .
\end{align}
Note that our squeezing operators and the mode basis $\{\hat c_n\}_n$ are defined in such a way that the $p$ quadratures are squeezed while the $q$ quadratures are anti-squeezed.

\section{QFI of a multi-mode Gaussian quantum state subject to a mode transform}
In what follows and generally throughout all the manuscript, all sums run from $1$ to $M$ if not explicitly indicated otherwise.

\subsection{The general case}\label{app:QFIderivation}
The derivative of a mode transform is given by
\begin{align}
    \partial_\lambda \hat U_\lambda = \partial_\lambda e^{-i\lambda \hat G}=-i \hat U_\lambda \hat G.
\end{align}

Let us continue examining the derivative of the state and simplify it
\begin{align}
   i\vert \partial_\lambda \psi_\lambda\rangle &= \hat U_\lambda \hat G  \vert \psi\rangle =  \hat U_\lambda \hat G  \hat V \hat D\hat S \vert 0\rangle =  \hat U_\lambda \hat V \hat V^\dag\hat G  \hat V \hat D\hat S \vert 0\rangle  \\
   &=\hat U_\lambda \hat V \sum_{nm} G_{nm}  \hat V^\dag \hat a_n^\dag \hat V\hat V^\dag \hat a_m \hat V \hat D\hat S\vert 0 \rangle \\
   &= \hat U_\lambda \hat V  \sum_{ij}\sum_{nm} V_{ni}^* G_{nm}  V_{mj}  \hat a_i^\dag   \hat a_j  \hat D\hat S\vert 0 \rangle \\
   &=\hat U_\lambda \hat V  \sum_{ij}  \tilde G_{ij}    \hat a_i^\dag   \hat a_j  \hat D\hat S\vert 0 \rangle \\
    &=\hat U_\lambda \hat V \hat D \sum_{ij}  \tilde G_{ij}   \hat D^\dag\hat a_i^\dag \hat D\hat D^\dag   \hat a_j  \hat D\hat S\vert 0 \rangle \\
    &= \hat U_\lambda \hat V \hat D \sum_{ij}  \tilde G_{ij} (\hat a_i^\dag +\alpha_i^*) (\hat a_j + \alpha_j) \hat S\vert 0 \rangle \\
     &= \hat U_\lambda \hat V \hat D \sum_{ij}  \tilde G_{ij} (\hat a_i^\dag \hat a_j + \hat a_j  \alpha_i^* + \hat a_i^\dag \alpha_j + \alpha_i^*\alpha_j) \hat S\vert 0 \rangle \\
     &= \hat U_\lambda \hat V \hat D \hat S \sum_{ij}  \tilde G_{ij} (\hat S^\dag \hat a_i^\dag \hat S \hat S^\dag \hat a_j \hat S+ \hat S^\dag \hat a_j\hat S \ \alpha_i^* + \hat S^\dag \hat a_i^\dag \hat S \alpha_j + \alpha_i^*\alpha_j)  \vert 0 \rangle \\
     &= \hat U_\lambda \hat V \hat D \hat S \sum_{ij}  \tilde G_{ij} \Big( [c_i \hat a_i^\dag + s_i \hat a_i][ c_j \hat a_j + s_j \hat a_j^\dag] \\
      & \quad\quad\quad\quad\quad\quad\quad\quad\quad\quad+ [c_j \hat a_j + s_j \hat a_j^\dag]   \alpha_i^* + [c_i \hat a_i^\dag +s_i \hat a_i] \alpha_j + \alpha_i^*\alpha_j \Big)  \vert 0 \rangle ,
\end{align}
where we have used $\hat V^\dag a_m \hat V = \sum_{j} V_{mj} \hat a_j$ in the third line, in the fourth line we denoted the Hermitian matrix $\tilde G_{ij} = \sum_{nm} V_{ni}^* G_{nm} V_{mj} = (V^\dag G V)_{ij}$, in the sixth line we used $\hat D^\dag \hat a_j\hat D = \hat a_j +\alpha_j$, and in the 9th and 10th lines we used $\hat S^\dag \hat a_j \hat S = c_j \hat a_j + s_j \hat a_j^\dag$, where we used the shorthand notation $c_j := \cosh (r_j)$ and $s_j := \sinh (r_j)$. Now we can further simplify this by using $\hat a_i \vert 0\rangle = 0$ and $\hat a_i\hat a_j^\dag \vert 0\rangle = \delta_{ij} \vert 0\rangle$, and we obtain
\begin{align}
    \vert\partial_\lambda \psi_\lambda\rangle &= -i\hat U_\lambda \hat V \hat D \hat S \sum_{ij}  \tilde G_{ij} \Big[  c_i s_j \hat a_i^\dag \hat a_j^\dag \vert 0\rangle + (s_j \alpha_i^* \hat a_j^\dag + c_i \alpha_j \hat a_i^\dag)\vert 0\rangle +(s_i s_j \delta_{ij} +\alpha_i^* \alpha_j)\vert 0\rangle \Big] \\
    &=:  \vert\partial_\lambda \psi_\lambda\rangle_2 + \vert\partial_\lambda \psi_\lambda\rangle_1 + \vert\partial_\lambda \psi_\lambda\rangle_0 .
\end{align}
The derivative state is now in a form such that all Gaussian operators are on the left side. We see 
that the derivative states split up into three different components, a two-photon, one-photon, and zero-photon component. We can examine these individually due to $\langle\partial_\lambda\psi_\lambda\vert \partial_\lambda\psi_\lambda\rangle = \, _2\langle\partial_\lambda\psi_\lambda\vert \partial_\lambda\psi_\lambda\rangle_2 +\, _1\langle\partial_\lambda\psi_\lambda\vert \partial_\lambda\psi_\lambda\rangle_1 +\, _0\langle\partial_\lambda\psi_\lambda\vert \partial_\lambda\psi_\lambda\rangle_0$. 

So, let us start with the two-photon term:
\begin{align}
    \, _2\langle\partial_\lambda\psi_\lambda\vert \partial_\lambda\psi_\lambda\rangle_2 &= \sum_{i',j',i,j} \tilde G_{i'j'}^* \tilde G_{ij} c_{i'} s_{j'} c_i s_j \langle 0\vert \hat a_{i'} \hat a_{j'}\hat a_i^\dag \hat a_j^\dag\vert 0\rangle  \\
    &= \sum_{ij} \tilde G_{ji}^* \tilde G_{ij} c_j s_i c_i s_j + \sum_{ij} \tilde G_{ij}^* \tilde G_{ij} c_i s_j c_i s_j \\
    &= \sum_{ij} s_i \tilde G_{ij} c_j s_j \tilde G_{ji}^* c_i +\sum_{ij} c_i \tilde G_{ij} s_j s_j \tilde G_{ji} c_i ,
\end{align}
where we used  $\langle 0\vert \hat a_{i'} \hat a_{j'}\hat a_i^\dag \hat a_j^\dag\vert 0\rangle = \delta_{i',j} \delta_{j' i} + \delta_{i' i}\delta_{j'j}$

Now, we come to the one-photon term
\begin{align}
    \vert \partial_\lambda\psi_\lambda\rangle_1 &= -i\hat U_\lambda \hat V \hat D \hat S \sum_{ij}   \Big[    \tilde G_{ij} s_j \alpha_i^* \hat a_j^\dag + \tilde G_{ij} c_i \alpha_j \hat a_i^\dag    \Big] \vert 0\rangle \\
    &= -i\hat U_\lambda \hat V \hat D \hat S \sum_{ij}   \Big[   \tilde G_{ij} s_j \alpha_i^*   + \tilde G_{ji} c_j \alpha_i     \Big]\hat a_j^\dag \vert 0\rangle  \\
    &=: -i\hat U_\lambda \hat V \hat D \hat S \sum_{ij} F_{ij} \hat a_j^\dag \vert 0\rangle,
\end{align}
where we defined $F_{ij} = \tilde G_{ij} s_j \alpha_i^*   + \tilde G_{ji} c_j \alpha_i    $.
With this, we find 
\begin{align}
     \, _1\langle\partial_\lambda\psi_\lambda\vert \partial_\lambda\psi_\lambda\rangle_1 &= \sum_{i' j'ij} F_{i'j'}^* F_{ij} \langle 0\vert \hat a_{j'} \hat a_j^\dag \vert 0\rangle \\
     &= \sum_{i'ij} F_{i'j}^* F_{ij} \\
     &= \sum_{i'ij} \left(  \tilde G_{i'j}^* s_j \alpha_{i'}   + \tilde G_{ji'}^* c_j \alpha_{i'}^*     \right) \left( \tilde G_{ij} s_j \alpha_i^*   + \tilde G_{ji} c_j \alpha_i     \right) \\
     &= \sum_{i'ij} \Big[ \alpha_i^* \tilde G_{ij} s_j s_j \tilde G_{j i'} \alpha_{i'} + \alpha_{i'}^* \tilde G_{i' j} c_j c_j \tilde G_{ji}\alpha_i  \\
     &\quad \quad\quad  + \alpha_{i'}^* \tilde G_{i' j} c_j s_j \tilde G_{ji}^* \alpha_i^* + \alpha_{i'} \tilde G_{i' j}^* s_j c_j \tilde G_{ji}\alpha_i \Big] .
\end{align}

For the zero-photon term we find
\begin{align}
    \, _0\langle\partial_\lambda\psi_\lambda\vert \partial_\lambda\psi_\lambda\rangle_0 &= \sum_{i'j'}  \tilde G_{i'j'}^*     (s_{i'} s_{j'} \delta_{i'j'} +\alpha_{i'} \alpha_{j'}^* ) \sum_{ij}  \tilde G_{ij}    (s_i s_j \delta_{ij} +\alpha_i^* \alpha_j) \\
    &= \vert\langle \psi_\lambda \vert \partial_\lambda \psi_\lambda\rangle\vert^2 .
\end{align}

Now, by adding all of these terms together, we  obtain the QFI. We have
\begin{align}
    \mathcal{F}(\vert\psi_\lambda\rangle) &= 4\left(\langle\partial_\lambda \psi_\lambda\vert \partial_\lambda \psi_\lambda\rangle - \vert \langle \psi_\lambda \vert \partial_\lambda \psi_\lambda\rangle \vert^2 \right) \\
    &= 4\left( \, _2\langle\partial_\lambda\psi\vert \partial_\lambda\psi_\lambda\rangle_2 +\, _1\langle\partial_\lambda\psi\vert \partial_\lambda\psi_\lambda\rangle_1 +\, _0\langle\partial_\lambda\psi\vert \partial_\lambda\psi_\lambda\rangle_0 - \vert \langle \psi_\lambda \vert \partial_\lambda \psi_\lambda\rangle \vert^2 \right) \\
    &= 4\left( \, _2\langle\partial_\lambda\psi\vert \partial_\lambda\psi_\lambda\rangle_2 +\, _1\langle\partial_\lambda\psi\vert \partial_\lambda\psi_\lambda\rangle_1 \right) .
\end{align}
Plugging in the expressions from above, we obtain Eq.~\eqref{eq:QFIcomponents} of the main text for the QFI.

\begin{align}  
    &\mathcal{F}(\vert \psi_\lambda\rangle) = 4\sum_{ij}  \tilde G_{ij} c_j s_j \tilde G_{ji}^* c_i s_i +4\sum_{ij} \tilde G_{ij} s_j s_j \tilde G_{ji} c_i  c_i\\
    &\quad\quad+\sum_{i'ij} 4\Big[  \tilde G_{ij} s_j s_j \tilde G_{j i'} \alpha_{i'}\alpha_i^* +  \tilde G_{i' j} c_j c_j \tilde G_{ji}\alpha_i\alpha_{i'}^*   
     \\&\quad\quad\quad\quad\quad+  \tilde G_{i' j} c_j s_j \tilde G_{ji}^* \alpha_i^*\alpha_{i'}^* +  \tilde G_{i' j}^* s_j c_j \tilde G_{ji}\alpha_i \alpha_{i'} \Big] .
\end{align}

The QFI in this form is quite cumbersome, and it is useful to introduce some new notation.  We define the matrices $C= \text{diag} (c_1,\ldots ,c_M)$, $S = \text{diag}(s_1,\ldots, s_M)$, and  the $M\times M$ matrix  $\mathcal A$ with components $\mathcal  A_{ij}= \alpha_i \alpha_j^*$ and  $\mathcal  B_{ij} = \alpha_i \alpha_j$. Also note that $\tilde G = V^\dag G V$. The QFI then takes the form
\begin{align} \label{app:eq:QFIexplicitFormula1}
     \mathcal{F}(\vert\psi_\lambda\rangle) &= 4\Big ( \text{Tr}[\tilde G C S \tilde G^*  C S] + \text{Tr}[\tilde G S ^2 \tilde G C^2] \\\label{app:eq:QFIexplicitFormula2}
     &\quad\quad\quad + \text{Tr}[\tilde G S^2 \tilde G \mathcal  A]+ \text{Tr}[\tilde G C^2 \tilde G \mathcal A] + \text{Tr}[\tilde G C S \tilde G^* \mathcal B^*]+ \text{Tr}[\tilde G^* S C \tilde G \mathcal B]\Big)  \\
     &= 4\Big ( \text{Tr}[\tilde G C S \tilde G^*  C S] + \text{Tr}[\tilde G S ^2 \tilde G C^2] 
     + \text{Tr}[\tilde G (S^2+C^2) \tilde G \mathcal A] + 2\mathrm{Re} \left[\text{Tr}[\tilde G^* S C \tilde G \mathcal B]\right]\Big) .
\end{align}
This can be seen by using the following relation
\begin{align}
    \text{tr}[ABCDEF]= \sum_{a} (ABCDEF)_{aa}= \sum_{a,b,c,d,e,f} A_{ab}B_{bc}C_{cd}D_{de}E_{ef}F_{fa}
\end{align}
for some arbitrary $M\times M$ matrices $A,B,C,D,E,F$. 
If $B,C,E,F$ are diagonal, we have
\begin{align}
    \text{tr}[ABCDEF]&=\sum_{a,b,c,d,e,f} A_{ab}B_{bc}\delta_{bc} C_{cd}\delta_{cd}D_{de}E_{ef} \delta_{ef}F_{fa} \delta_{fa} \\
    &= \sum_{a,b,d,e,f} A_{ab}B_{bb}  C_{bd}\delta_{bd}D_{de}E_{ef} \delta_{ef}F_{fa} \delta_{fa}\\
    &= \sum_{a,b,e,f} A_{ab}B_{bb} C_{bb} D_{be}E_{ef} \delta_{ef}F_{fa} \delta_{fa}\\
    &= \sum_{a,b,e} A_{ab}B_{bb} C_{bb} D_{be}E_{ee}  F_{ea} \delta_{ea}\\
    &= \sum_{a,b} A_{ab}B_{bb} C_{bb} D_{ba}E_{aa}  F_{aa} .
\end{align}
With this we confirm the validity of the first two terms in Eq.~\eqref{app:eq:QFIexplicitFormula1}. 

For the last four terms, we consider
\begin{align}
    \text{tr}[ABCDE] =\sum_{a} (ABCDE)_{aa} = \sum_{a,b,c,d,e} A_{ab}B_{bc}C_{cd}D_{de}E_{ea}
\end{align}
Now, if $B,C$ are diagonal, we have
\begin{align}
    \text{tr}[ABCDE] &= \sum_{a,b,c,d,e} A_{ab}B_{bc}\delta_{bc}C_{cd}\delta_{cd}D_{de}E_{ea} \\
    &=  \sum_{a,b,d,e} A_{ab}B_{bb} C_{bd}\delta_{bd}D_{de}E_{ea}\\
    &=  \sum_{a,b,e} A_{ab}B_{bb} C_{bb} D_{be}E_{ea}.
\end{align}
With this, we confirm the four terms in Eq.~\eqref{app:eq:QFIexplicitFormula2}.

As we will later see, bringing the QFI into this form will be helpful for proving our main result.

\subsection{QFI of a coherent state}\label{app:QFIcoherent}
As in the main text, the QFI of the coherent state is

\begin{align}
    \mathcal{F}(\vert\psi^{\text{coh}}\rangle ) = 4\sum_{i'ij} \alpha_{i'}^* \tilde G_{i' j}  \tilde G_{ji}\alpha_i   = 4 \boldsymbol{\alpha}^\dag \tilde G^2  \boldsymbol{\alpha} = 4 \boldsymbol{\alpha}^\dag W^\dag D^2 W\boldsymbol{\alpha} = 4\sum_{n}  D_{nn}^2 \vert (W\boldsymbol{\alpha})_n\vert^2 .
\end{align}
Now,  $W = B^\dag V$ is the unitary that diagonalizes $\tilde G$ (and also $\tilde G^2$). Recall that $G = B D B^\dag$.
\begin{align}
   \vert (W\boldsymbol{\alpha})_n\vert^2 &= \sum_{ij} W_{nj}^* W_{ni} \alpha_j^* \alpha_i = \sum_{ij} W_{nj}^* W_{ni}  \langle \psi^{\text{coh}}\vert \hat V \hat a_j^\dag \hat a_i \hat V^\dag\vert \psi^{\text{coh}}\rangle  \\
   &= \sum_{ij} W_{nj}^* W_{ni}  \langle \psi^{\text{coh}}\vert \hat V \hat a_j^\dag \hat V^\dag \hat V \hat a_i \hat V^\dag\vert \psi^{\text{coh}}\rangle = \sum_{ij} W_{nj}^* W_{ni}  \langle \psi^{\text{coh}}\vert \sum_q V_{qj} \hat a_q^\dag  \sum_r V_{ri}^* \hat a_r\vert \psi^{\text{coh}}\rangle\\ 
   &= \sum_{ijqr} V_{qj}W_{nj}^* W_{ni} V_{ri}^* \langle \psi^{\text{coh}}\vert   \hat a_q^\dag      \hat a_r\vert \psi^{\text{coh}}\rangle = \sum_{qr} (\sum_j V_{qj} (W^\dag)_{jn}) (\sum_i W_{ni} (V^\dag)_{ir})  \langle \psi^{\text{coh}}\vert   \hat a_q^\dag      \hat a_r\vert \psi^{\text{coh}}\rangle \\
   &= \sum_{qr} (V  (B^\dag V)^\dag)_{qn} ( B^\dag V  V^\dag)_{nr}  \langle \psi^{\text{coh}}\vert   \hat a_q^\dag      \hat a_r\vert \psi^{\text{coh}}\rangle =      \langle \psi^{\text{coh}}\vert  (\sum_q B_{qn}  \hat a_q^\dag )      (\sum_r B_{nr}^\dag\hat a_r)\vert \psi^{\text{coh}}\rangle\\
   &= \langle \psi^{\text{coh}}\vert  \hat b_n^\dag  \hat b_n \vert\psi^{\text{coh}}\rangle.
\end{align}
In the first line we used $\alpha_j\alpha_i = \langle \psi^{\text{coh}}\vert \hat V \hat a_j^\dag \hat a_i \hat V^\dag\vert \psi^{\text{coh}}\rangle $, recall that $\vert\psi^{\text{coh}}\rangle = \hat V [\otimes_i\hat D_{\hat a_i}(\alpha_i)]\vert 0\rangle$. In the second line we used Eq.~\eqref{eq:PassiveTransformation1Schrodinger} and Eq.~\eqref{eq:PassiveTransformation2Schrodinger}. In the fourth line we used $W= B^\dag V$.

From this follows
\begin{align}
    \mathcal{F}(\vert\psi^{\text{coh}}\rangle ) &= 4\sum_{n}  D_{nn}^2 \vert (W\boldsymbol{\alpha})_n\vert^2 \\
    &= 4 \sum_n g_n^2 \langle \psi^{\text{coh}}\vert  \hat b_n^\dag  \hat b_n \vert\psi^{\text{coh}}\rangle .
\end{align}

\section{Resources}\label{app:ResourceParameters}
In the main text we have given the definition of the resources in terms of the intensity $I_n$. It will prove useful to express the resources in a basis independent way.

Let us first start with the signal photon number
\begin{align}
    N_S &= \sum_{i\in \mathcal{I}_S}  \langle \hat b_i ^\dag  \hat b_i  \rangle = \sum_{n,m}\sum_{i \in \mathcal{I}_S} (B^\dag)_{in}^* (B^\dag)_{im}  \langle \hat a_n^\dag \hat a_m \rangle  \\
    &=  \sum_{n,m}\sum_{i \in \mathcal{I}_S} B_{ni} (B^\dag)_{im}  \langle \hat a_n^\dag \hat a_m \rangle \\
    &= \sum_{n,m} (P_S)_{nm}  \langle 0 \vert \hat S^\dag\hat D^\dag \hat V^\dag \hat a_n^\dag \hat a_m \hat V \hat D\hat S\vert 0\rangle   \\
     &= \sum_{n,m} (V^\dag P_S V)_{nm}  \langle 0 \vert \hat S^\dag\hat D^\dag  \hat a_n^\dag \hat a_m   \hat D\hat S\vert 0\rangle   \\
     &= \sum_{n,m} (\tilde P_S)_{nm}   \left( s_n^2 \delta_{nm} + \alpha_m \alpha_n^*  \right)   \\
    &=\text{Tr}[\tilde P_S (S^2+\mathcal A) ] \\
    &=\text{Tr}[\tilde P_S (S^2+\mathcal A) \tilde P_S] .
\end{align}
In the third line we defined the projector onto the eigenspace of $ G$ with non-zero eigenvalues $P_S = \sum_{i\in \mathcal{I}_S} B_{ni} (B^\dag)_{im}  = \sum_{i,j}  B_{ni} \chi _{\mathcal{I}_S} (i)\delta_{ij}(B^\dag)_{im}$, where $\chi _{\mathcal{I}_S} (i) =1$ if $i\in \mathcal{I}_S$ and $0$ otherwise. That is $P_S G = G P_S = P_SGP_S =G$.
In the fourth line, we  defined $\tilde P_S = V^\dag P_S V$, which is the projector onto the eigenspace of $\tilde G$ with non-zero eigenvalues. We have $\tilde P_S \tilde G  = \tilde G \tilde P_S = \tilde P_S \tilde G \tilde P_S = \tilde G$.
 
For the mean resource $\bar g$, we have
\begin{align}
    \bar g \cdot N_S &= \sum_{i\in \mathcal{I}_S} g_i  \langle \hat b_i^\dag  \hat b_i  \rangle \\
    &=  \sum_{i} g_i  \langle \hat b_i^\dag  \hat b_i  \rangle \\
    &= \sum_{n,m} G_{nm} \langle \hat a_n^\dag \hat a_m \rangle = \sum_{n,m} G_{nm} \langle 0 \vert \hat S^\dag\hat D^\dag \hat V^\dag \hat a_n^\dag \hat a_m \hat V \hat D\hat S\vert 0\rangle  \\
    &=\sum_{n m} (V^\dag G V)_{nm}  \langle 0 \vert \hat S^\dag\hat D^\dag   \hat a_{n}^\dag \hat a_{m}  \hat D\hat S\vert 0\rangle  \\
     &=\sum_{n m} \tilde G_{nm}   \left( s_n^2 \delta_{nm} +\alpha_n^* \alpha_m \right) \\
     &=  \text{Tr}[ \tilde G (S^2 +\mathcal  A)] 
\end{align}
and thus
\begin{align}
    \bar g =\frac{\text{Tr}[ \tilde G (S^2 +\mathcal  A)]}{N_S} .
\end{align}
We see that this quantity only depends on how the modes are distributed, not on the total number of photons. 

And finally, for the variance resource parameter $\bar g$ we have
\begin{align}
    \left( (\Delta g)^2 +\bar g^2 \right) N_S &= \sum_{i\in \mathcal{I}_S} g_i^2  \langle \hat b_i^\dag  \hat b_i  \rangle \\
    &=  \sum_{i} g_i^2  \langle \hat b_i^\dag  \hat b_i  \rangle \\
    &= \sum_{n,m} (G^2)_{nm} \langle \hat a_n^\dag \hat a_m \rangle = \sum_{n,m} (G^2)_{nm} \langle 0 \vert \hat S^\dag\hat D^\dag \hat V^\dag \hat a_n^\dag \hat a_m \hat V \hat D\hat S\vert 0\rangle  \\
    &=\sum_{n m} (V^\dag G^2 V)_{nm}  \langle 0 \vert \hat S^\dag\hat D^\dag   \hat a_{n}^\dag \hat a_{m}  \hat D\hat S\vert 0\rangle  \\
     &=\sum_{n m} (\tilde G^2)_{nm}   \left( s_n^2 \delta_{nm} +\alpha_n^* \alpha_m \right) \\
     &= \text{Tr}[ \tilde G^2 (S^2 + \mathcal A)] .
\end{align}
With this, we get
\begin{align}
    (\Delta g)^2 = \frac{\text{Tr}[ \tilde G^2 (S^2 + \mathcal A)] }{N_S} -\bar g^2 .
\end{align}

\section{Proof of Theorem~\ref{TheoremBound}}\label{App:ProofTheorem}

We now prove Theorem~\ref{TheoremBound}, which requires showing that $\mathcal{F}(\vert\psi_\lambda\rangle) \leq \left(8\bar{g}^2 + 4\Delta g^2\right)N_S^2 + O(N_S)$.

We begin by establishing two lemmas that will enable a concise proof of Theorem~\ref{TheoremBound}.

\begin{lemma}\label{sec4:lemma1}
The QFI given in Eq.~\eqref{sec4:eq:QFIexplicitFormula} is upper bounded by
\begin{align}
    \mathcal{F}(\vert\psi_\lambda\rangle) \leq 4\left(2\,\mathrm{Tr}[\tilde{G}S^2\tilde{G}S^2] + 4\,\mathrm{Tr}[\tilde{G}S^2\tilde{G}\mathcal{A}]\right) + O(N_S).
\end{align}
\end{lemma}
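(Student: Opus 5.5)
We start from the trace form of the QFI in Eq.~\eqref{sec4:eq:QFIexplicitFormula} and replace every $c_j=\cosh r_j$ by $s_j$, tracking the error. Since $c_j^2 = s_j^2+1$, we have $C^2 = S^2+\mathds{1}$. The exact second and third terms then become
\begin{align}
\mathrm{Tr}[\tilde G S^2\tilde G C^2] &= \mathrm{Tr}[\tilde G S^2\tilde G S^2] + \mathrm{Tr}[\tilde G S^2 \tilde G], \\
\mathrm{Tr}[\tilde G(S^2+C^2)\tilde G\mathcal A] &= 2\mathrm{Tr}[\tilde G S^2\tilde G\mathcal A] + \mathrm{Tr}[\tilde G^2\mathcal A].
\end{align}
The leftover pieces $\mathrm{Tr}[\tilde G S^2\tilde G] = \sum_j s_j^2(\tilde G^2)_{jj}$ and $\mathrm{Tr}[\tilde G^2\mathcal A]$ sum to $\mathrm{Tr}[\tilde G^2(S^2+\mathcal A)] = (\bar g^2+\Delta g^2)N_S$ by Appendix~\ref{app:ResourceParameters}. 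They are therefore $O(N_S)$, with a prefactor set by the modal resources only.

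The remaining work is to bound the two ``interference'' terms, $\mathrm{Tr}[\tilde G CS\tilde G^* CS]$ and $2\mathrm{Re}\,\mathrm{Tr}[\tilde G^* SC\tilde G\mathcal B]$, by their Hermitian counterparts.

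\textbf{First interference term.} Write it as $\sum_{ij} c_is_i c_js_j\, \tilde G_{ij}\tilde G_{ji}^* = \sum_{ij} c_is_ic_js_j |\tilde G_{ij}|^2$, using hermiticity. All summands are then nonnegative. Next use $c_is_i \le s_i^2 + \tfrac12$, which follows from $c_is_i = s_i\sqrt{s_i^2+1} \le s_i(s_i + \tfrac{1}{2s_i})$. This gives
\begin{equation}
c_is_ic_js_j \le s_i^2s_j^2 + \tfrac12(s_i^2+s_j^2) + \tfrac14 .
\end{equation}
The first piece reproduces $\mathrm{Tr}[\tilde GS^2\tilde GS^2]$. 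The second gives $\mathrm{Tr}[\tilde G S^2\tilde G]$, which is $O(N_S)$ as above. The constant piece gives $\tfrac14\mathrm{Tr}[\tilde G^2]$, which is $N_S$-independent but depends on the cutoff $M$. This is the main obstacle. I would avoid it by keeping the cross terms in the form $c_is_ic_js_j - s_i^2s_j^2 \le \tfrac12 s_is_j(s_i/s_j + s_j/s_i)$, or by bounding only on the populated subspace. If that fails, I would accept an $O(1)$ term and absorb it into $O(N_S)$ in the regime $N_S\gg1$, in the same asymptotic sense as the theorem.

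\textbf{Second interference term.} Apply Cauchy--Schwarz. Set $x_j = s_j(\tilde G\boldsymbol\alpha^*)_j$ and $y_j = c_j(\tilde G\boldsymbol\alpha)_j$; more precisely, after index bookkeeping the term equals $2\mathrm{Re}\sum_j s_jc_j\, \overline{(\tilde G\boldsymbol\alpha)_j}\,\overline{(\tilde G^*\boldsymbol\alpha)_j}$. Then
\begin{equation}
|2\mathrm{Re}\,\mathrm{Tr}[\tilde G^*SC\tilde G\mathcal B]| \le \sum_j \bigl(s_j^2|(\tilde G\boldsymbol\alpha^*)_j|^2 + c_j^2|(\tilde G\boldsymbol\alpha)_j|^2\bigr).
\end{equation}
Using $c_j^2 = s_j^2+1$ and $\sum_j s_j^2|(\tilde G\boldsymbol\alpha)_j|^2 = \mathrm{Tr}[\tilde GS^2\tilde G\mathcal A]$ (and likewise for the conjugate), this is at most $2\mathrm{Tr}[\tilde GS^2\tilde G\mathcal A] + \boldsymbol\alpha^\dag\tilde G^2\boldsymbol\alpha$. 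The last piece is again $O(N_S)$.

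\textbf{Collecting.} The traces sum to $2\mathrm{Tr}[\tilde GS^2\tilde GS^2] + 4\mathrm{Tr}[\tilde GS^2\tilde G\mathcal A] + O(N_S)$. Multiplying by the overall factor $4$ gives the lemma.
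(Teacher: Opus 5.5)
The overall strategy matches the paper's: write $C^2=S^2+\mathds{1}$, split off $\mathrm{Tr}[\tilde G^2(S^2+\mathcal A)]=(\bar g^2+\Delta g^2)N_S$, and bound the two interference terms by their Hermitian counterparts. \textbf{The gap is in the first interference term.}

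Your product bound $c_is_i\le s_i^2+\tfrac12$ leaves a constant $\tfrac14\mathrm{Tr}[\tilde G^2]=\tfrac14\sum_n g_n^2$. Absorbing it into $O(N_S)$ is not legitimate in this framework, for three reasons:
\begin{itemize}
\item It is state independent and involves the eigenvalues of modes that carry no light.
\item The paper treats $M$ as a cutoff and lets the $g_n$ grow with $M$, so this constant diverges when the cutoff is removed.
\item The remainder is then no longer of the resource form $c\,(\bar g^2+\Delta g^2)N_S$ that Theorem~\ref{TheoremBound} needs, and it would spoil the exact tightness at order $N_S$ that the paper notes for $\boldsymbol\alpha=0$.
\end{itemize}
Restricting to squeezed modes only reduces the constant to $\tfrac14\sum_{i,j:\,s_i,s_j>0}|\tilde G_{ij}|^2$. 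That is still not bounded by any resource-only multiple of $N_S$ once some Schmidt modes are weakly squeezed.

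What you need is the two-index inequality you list only as an option,
$s_ic_i\,s_jc_j\le s_i^2s_j^2+\tfrac12(s_i^2+s_j^2)$; this is exactly the paper's key step. It holds because, after squaring both sides, the difference is $\tfrac14(s_i^2-s_j^2)^2\ge0$. Equivalently, weighted AM--GM with $t=s_j/s_i$ (for $s_i,s_j>0$; the case $s_i=0$ or $s_j=0$ is trivial) gives
$s_is_j\sqrt{(s_i^2+1)(s_j^2+1)}\le\tfrac12 s_is_j\bigl[(s_i^2+1)t+(s_j^2+1)/t\bigr]=s_i^2s_j^2+\tfrac12(s_i^2+s_j^2)$.
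With it, the first interference term is at most $\mathrm{Tr}[\tilde GS^2\tilde GS^2]+\mathrm{Tr}[\tilde G^2S^2]$, with no constant.

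\textbf{Two bookkeeping errors should also be fixed.}
\begin{itemize}
\item $\tilde G^*$ is the entrywise conjugate, so $(\tilde G^*)_{ji}=\tilde G_{ij}$. The first interference term therefore equals $\sum_{ij}\tilde G_{ij}^2\,c_is_ic_js_j=\sum_{ij}\mathrm{Re}[\tilde G_{ij}^2]\,c_is_ic_js_j$, not $\sum_{ij}|\tilde G_{ij}|^2c_is_ic_js_j$. Its summands need not be nonnegative. You only get an inequality, via $\mathrm{Re}[z^2]\le|z|^2$; this is what the paper does, and it suffices.
\item Likewise, $\mathrm{Tr}[\tilde G^*SC\tilde G\mathcal B]=\sum_j s_jc_j\,(\tilde G\boldsymbol\alpha)_j^2$. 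The vectors $\tilde G^*\boldsymbol\alpha$ and $\tilde G\boldsymbol\alpha^*$ appear only when $\tilde G$ is real. For complex $\tilde G$ your ``likewise for the conjugate'' is false, since $\sum_js_j^2|(\tilde G\boldsymbol\alpha^*)_j|^2\neq\mathrm{Tr}[\tilde GS^2\tilde G\mathcal A]$ in general. With the correct expression, $2\bigl|\sum_js_jc_j(\tilde G\boldsymbol\alpha)_j^2\bigr|\le\sum_j(s_j^2+c_j^2)|(\tilde G\boldsymbol\alpha)_j|^2=2\mathrm{Tr}[\tilde GS^2\tilde G\mathcal A]+\boldsymbol\alpha^\dagger\tilde G^2\boldsymbol\alpha$. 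So the bound you state there is right.
\end{itemize}

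\textbf{Once repaired, your argument is slightly sharper than the paper's.} The paper uses $c_js_j\le c_j^2$ in the displacement cross term and picks up $2\mathrm{Tr}[\tilde G^2\mathcal A]$; your AM--GM step gives only $\mathrm{Tr}[\tilde G^2\mathcal A]$. Your total remainder is therefore exactly $8\,\mathrm{Tr}[\tilde G^2(S^2+\mathcal A)]=8(\bar g^2+\Delta g^2)N_S$. The paper's is $12(\bar g^2+\Delta g^2)N_S-4\mathrm{Tr}[\tilde G^2S^2]$, and the two coincide when $\boldsymbol\alpha=0$.
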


The proof relies on the Hermiticity of $\tilde{G}$ and, while straightforward, is somewhat lengthy; it is provided in Appendix~\ref{app4:lemm1}. The purpose of this intermediate bound is to reduce the expression to one containing only two terms, which simplifies the subsequent analysis.
    
The second lemma is:
\begin{lemma} \label{sec4:lemma2}
The inequality 
\begin{align}
  0  \leq 4 \mathrm{Tr}[H Q]^2  + 4 \mathrm{Tr}[H^2 Q]\mathrm{Tr}[Q]  -8  \mathrm{Tr}[HQHQ]
\end{align}
holds for $H$ Hermitian and $Q$ Hermitian and positive semi-definite. 
\end{lemma}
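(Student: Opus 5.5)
The plan is to work in the eigenbasis of $Q$, write every trace as a sum of nonnegative terms, and match the terms of $\mathrm{Tr}[HQHQ]$ against disjoint pieces of the other two traces. Write $Q=\sum_k q_k\vert k\rangle\langle k\vert$ with $q_k\ge 0$, and set $h_{kl}=\langle k\vert H\vert l\rangle$, so that $h_{lk}=h_{kl}^*$ and $h_{kk}\in\mathbb R$. Then
\begin{align}
\mathrm{Tr}[HQ]&=\sum_k q_k h_{kk}, &
\mathrm{Tr}[H^2Q]\,\mathrm{Tr}[Q]&=\sum_{k,l,m} q_k q_m \vert h_{kl}\vert^2, &
\mathrm{Tr}[HQHQ]&=\sum_{k,l} q_k q_l \vert h_{kl}\vert^2 .
\end{align}
After dividing by $4$, the claim reads $\mathrm{Tr}[HQ]^2+\mathrm{Tr}[H^2Q]\mathrm{Tr}[Q]\ge 2\,\mathrm{Tr}[HQHQ]$. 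Every term of the triple sum is nonnegative, so it suffices to keep only three pairwise disjoint families of index triples $(k,l,m)$ and drop the rest.

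The three families are (i) $l=m$ with $k$ arbitrary; (ii) $m=k$ with $l\neq k$; (iii) $l=k$ with $m\neq k$. They are pairwise disjoint. Any triple in (i) and (ii), or in (i) and (iii), would force $k=l=m$, which the conditions $l\neq k$ or $m\neq k$ exclude. A triple in (ii) and (iii) would need $l=k$ and $l\neq k$ at once. The three families contribute as follows.
\begin{align}
\text{(i)}&:\ \sum_{k,m} q_kq_m\vert h_{km}\vert^2=\mathrm{Tr}[HQHQ], \\
\text{(ii)}&:\ \sum_k q_k^2\sum_{l\neq k}\vert h_{kl}\vert^2, \\
\text{(iii)}&:\ \sum_{k\neq m} q_kq_m h_{kk}^2 .
\end{align}
Family (i) already supplies one full copy of $\mathrm{Tr}[HQHQ]$. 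It remains to show that (ii), (iii) and $\mathrm{Tr}[HQ]^2$ together dominate a second copy. Split that copy as $\sum_{k\neq l}q_kq_l\vert h_{kl}\vert^2+\sum_k q_k^2h_{kk}^2$.

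For the off-diagonal part I use $q_kq_l\le (q_k^2+q_l^2)/2$ together with $\vert h_{kl}\vert=\vert h_{lk}\vert$. This gives $\sum_{k\neq l}q_kq_l\vert h_{kl}\vert^2\le \sum_k q_k^2\sum_{l\neq k}\vert h_{kl}\vert^2$, which is exactly family (ii).

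For the diagonal part, expand $\mathrm{Tr}[HQ]^2=\sum_k q_k^2h_{kk}^2+\sum_{k\neq m}q_kq_m h_{kk}h_{mm}$. Adding family (iii) and subtracting $\sum_k q_k^2h_{kk}^2$ leaves $\sum_{k\neq m}q_kq_m(h_{kk}^2+h_{kk}h_{mm})$. Symmetrizing in $k\leftrightarrow m$ turns this into $\tfrac12\sum_{k\neq m}q_kq_m(h_{kk}+h_{mm})^2\ge 0$. Summing the three estimates proves the inequality.

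The step I expect to be the main obstacle is spotting this bookkeeping. The crude bound $\mathrm{Tr}[H^2Q]\mathrm{Tr}[Q]\ge\mathrm{Tr}[HQHQ]$ obtained from Cauchy–Schwarz gives only one copy and is tight, for example for $Q=\mathds 1$ and $H=\mathrm{diag}(1,-1)$. The factor $2$ therefore has to come from the otherwise discarded terms in (ii) and (iii), combined with $\mathrm{Tr}[HQ]^2$, and one must check that these terms are not counted twice. That is why the disjointness of (i)–(iii) should be verified explicitly.
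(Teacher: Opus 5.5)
Your proof is correct and follows the paper's argument. Both proofs expand the three traces in the eigenbasis of $Q$. The triples with all indices distinct that you discard, together with your AM--GM step and your symmetrization, correspond term by term to the paper's exact rewriting of the right-hand side as $\sum_{k\neq m}\tfrac12(h_{kk}+h_{mm})^2q_kq_m+\sum_{k\neq l}\tfrac12\vert h_{kl}\vert^2(q_k-q_l)^2+\sum_{k,l,m\ \mathrm{distinct}}q_kq_m\vert h_{kl}\vert^2$ (the paper misprints the first coefficient as $\tfrac14$, which is harmless).

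One small correction to your closing remark concerns the example $Q=\mathds 1$, $H=\mathrm{diag}(1,-1)$. There the lemma's factor-$2$ inequality is the one that is tight ($0+4=2\cdot 2$). The one-copy bound $\mathrm{Tr}[H^2Q]\,\mathrm{Tr}[Q]\ge\mathrm{Tr}[HQHQ]$ reads $4\ge 2$ and is not tight.
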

The proof of this lemma~\ref{sec4:lemma2} has been delegated to the Appendix~\ref{app4:lemm2}.
\\
We are now ready to complete the proof of Theorem~\ref{TheoremBound}.
\begin{proof}
We use the lemma~\ref{sec4:lemma2} and set   $ H = \tilde G$ and $Q =\tilde P_S( S+\mathcal A ) \tilde P_S := S' + \mathcal A'$, where we denote $S'= \tilde P_S S \tilde P_S$ and $\mathcal A'=\tilde P_S A\tilde P_S$. With this, we obtain
\begin{align}
   &0\leq   4 \text{Tr}[H Q]^2  + 4 \text{Tr}[H^2 Q]\text{Tr}[Q]     - 8 \text{Tr}[HQH Q] \\
    &=  4 \text{Tr}[\tilde G (S' + \mathcal A')]^2  + 4 \text{Tr}[\tilde G^2 (S' + \mathcal A')]\text{Tr}[ S' + \mathcal A' ]     - 8 \text{Tr}[\tilde G(S' + \mathcal A') \tilde G(S' + \mathcal A')]\\
    &= \left( 8\bar g^2 + 4(\Delta g)^2 \right) N_S^2   - 8  \text{Tr}[\tilde G S'   \tilde G S'  ] - 2\cdot 8 \text{Tr}[\tilde GS'\tilde G \mathcal A'] -8 \text{Tr}[\tilde G  \mathcal A'\tilde G  \mathcal A']
\end{align}
where in the last line we used $4 \text{Tr}[\tilde G (S' + \mathcal A')]^2  = 4 \text{Tr}[\tilde G (S + \mathcal A)]^2 = 4 \bar g^2 N_S^2$ and $4 \text{Tr}[\tilde G^2 (S' + \mathcal A')]\text{Tr}[ S' + \mathcal A' ] = 4 \text{Tr}[\tilde G^2 (S + \mathcal A)]\text{Tr}[ S + \mathcal A ] = 4 (\bar g^2 + (\Delta g)^2)N_S^2$, which follows from $\tilde P_S \tilde G = \tilde G \tilde P_S = \tilde G$. We also used in the last line
 $ 
    \text{Tr}[\tilde G(S'+\mathcal A')\tilde G(S'+\mathcal A')]  = \text{Tr}[\tilde GS'\tilde GS'] + 2\text{Tr}[\tilde GS' \tilde G \mathcal A'] + \text{Tr}[\tilde G \mathcal A'\tilde G \mathcal A'] 
$  
 exploiting linearity and the cyclic property of the trace.  By now bringing the first two terms with the minus sign to the left side, we obtain:
 \begin{align} 
    8  \text{Tr}[\tilde G S'   \tilde G S'  ] + 2\cdot 8 \text{Tr}[\tilde GS'\tilde G\mathcal  A'] \leq \left( 8\bar g^2 + 4(\Delta g)^2 \right) N_S^2 - 8 \text{Tr}[\tilde G \mathcal  A'\tilde G  \mathcal A'] . \label{sec4:eq:inequalityproof}
 \end{align}
 We now use the lemma~\ref{sec4:lemma1}. For this, we add the appropriate terms of order $L:=4\cdot 3   (\bar g^2+ \Delta g^2)N_S  - 4\text{Tr}[\tilde G^2 S^2 ]$ on both sides of Eq.~\eqref{sec4:eq:inequalityproof} and   with $\text{Tr}[\tilde G \mathcal  A'\tilde G  \mathcal A'] = \text{Tr}[\tilde G \mathcal A \tilde G  \mathcal A]$, $\text{Tr}[\tilde G S'   \tilde G S'  ] = \text{Tr}[\tilde G S   \tilde G S  ] $ and $\text{Tr}[\tilde GS'\tilde G \mathcal A']  = \text{Tr}[\tilde GS\tilde G \mathcal A]  $  we obtain: 

 \begin{align} 
    8  \text{Tr}[\tilde G S   \tilde G S  ] + 2\cdot 8 \text{Tr}[\tilde GS\tilde G \mathcal A] +  L\leq \left( 8\bar g^2 + 4(\Delta g)^2 \right) N_S^2 - 8 \text{Tr}[\tilde G  A\tilde G  \mathcal A]  +   L.
 \end{align}
From this and   lemma~\ref{sec4:lemma1} follows directly:
    \begin{align}
    \mathcal{F}(\vert\psi_\lambda\rangle) &\leq  \left( 8\bar g^2 + 4(\Delta g)^2 \right) N_S^2 - 8 \text{Tr}[\tilde G  A\tilde G  \mathcal A] +L\label{sec4:eq:tightbound}\\
    &\leq  \left( 8\bar g^2 + 4(\Delta g)^2 \right) N_S^2  + L  \label{sec4:eq:normalbound} \\
     &= \left( 8\bar g^2 + 4(\Delta g)^2 \right) N_S^2  + 4\cdot 3   (\bar g^2+ \Delta g^2)N_S  - 4\text{Tr}[\tilde G^2 S^2 ] .\label{sec4:eq:tightbound1}
\end{align}
In the last line, we used $\text{Tr}[\tilde G  \mathcal A \tilde G  \mathcal A ]\geq 0$. This follows from the fact that $\mathcal A$ is Hermitian and positive semi-definite. Therefore, $\mathcal A^{1/2}$ exists and is also Hermitian. Thus, we have $\text{Tr}[\mathcal A^{1/2}\tilde G\mathcal A^{1/2} \mathcal A^{1/2}\tilde  G \mathcal A^{1/2}] = \text{Tr}[(\mathcal A^{1/2}\tilde  G\mathcal A^{1/2})^2] \geq 0$. This follows from the fact that $\mathcal A^{1/2}\tilde G\mathcal A^{1/2}$ is Hermitian, and the trace of the square of a Hermitian matrix is always non-negative. This completes the proof.
\end{proof}

An important observation is that Eq.~\eqref{sec4:eq:tightbound} implies that Eq.~\eqref{sec4:eq:normalbound} can only be attained if $\mathrm{Tr}[\tilde{G}\mathcal{A}\tilde{G}\mathcal{A}] = 0$. While this might suggest that displacement always decreases performance, there may exist special non-zero displacement configurations for which $\mathrm{Tr}[\tilde{G}\mathcal{A}\tilde{G}\mathcal{A}] = 0$, so that the possibility of an optimal state with non-zero displacement cannot be excluded \footnote{One can show that $\mathrm{Tr}[\tilde{G}\mathcal{A}\tilde{G}\mathcal{A}] = (\boldsymbol{\alpha}^\dag\tilde{G}\boldsymbol{\alpha})^2$. When $\tilde{G}$ has both positive and negative eigenvalues, one can find $\boldsymbol{\alpha} \neq 0$ such that $\mathrm{Tr}[\tilde{G}\mathcal{A}\tilde{G}\mathcal{A}] = (\boldsymbol{\alpha}^\dag\tilde{G}\boldsymbol{\alpha})^2 = 0$.}. Indeed, in Sec.~\ref{subsub:WithDisplacement} we identified a variance-optimal state that allocates half of its energy budget to displacement.

In the case of zero displacement, the bound in Eq.~\eqref{sec4:eq:tightbound1} reduces to
\begin{align}
\mathcal{F}(\vert\psi_\lambda\rangle) \leq \left(8\bar{g}^2 + 4(\Delta g)^2\right)N_S^2 + 8(\bar{g}^2 + \Delta g^2)N_S,
\end{align}
which is tight even at the level of the linear term in $N_S$, as can be verified by evaluating the QFI of the optimal state given in Eq.~\eqref{app:QFIoptimalState}. Finally, in the limit of small squeezing, a multimode squeezed vacuum state reduces to a superposition of the vacuum and a two-photon state, suggesting that the present bound may also be applicable to two-photon states — a direction worth pursuing in future work.

\subsection{Proof of Lemma~\ref{sec4:lemma1}}\label{app4:lemm1}
\begin{proof}
First, recall that the QFI is 
\begin{align} 
     \mathcal{F}(\vert\psi_\lambda\rangle)= 4\Big [ \text{Tr}[\tilde G C S \tilde G^*  C S] + \text{Tr}[\tilde G S ^2 \tilde G C^2]  
     + \text{Tr}[\tilde G S^2 \tilde G A]+ \text{Tr}[\tilde G C^2 \tilde G A] + \text{Tr}[\tilde G C S \tilde G^* B^*]+ \text{Tr}[\tilde G^* S C \tilde G B]\Big] . 
\end{align}
We have six terms and, by bounding some of them, we derive an upper bound with only two terms quadratic in $N_S$.
Let us start with the terms that are only due to squeezing. We have
\begin{align}
     \text{Tr}[\tilde G C S \tilde G^*  C S] &= \sum_{ij} \tilde G_{ij} c_j s_j \tilde G_{ji}^* c_i s_i =  \sum_{ij} \tilde G_{ij}^2 c_j s_j c_i s_i \\
     &=\sum_i \tilde G_{ii}^2 a_i a_i + \sum_{i\neq j} \tilde G_{ij}^2 a_j a_i \\
    &= \sum_i \tilde G_{ii}^2 a_i a_i + \sum_{i<j} (\tilde G_{ij}^2 +\tilde G_{ji}^2 ) a_i a_j \\
     &= \sum_i \tilde G_{ii}^2 a_i a_i + \sum_{i<j} (\tilde G_{ij}^2 + (\tilde G_{ij}^2)^* ) a_i a_j \\
      &= \sum_i \tilde G_{ii}^2 a_i a_i + \sum_{i<j} 2\text{Re}[\tilde G_{ij}^2 ] a_i a_j \\
        &= \sum_i \tilde G_{ii}^2 a_i a_i + \sum_{i\neq j} \text{Re}[\tilde G_{ij}^2 ] a_i a_j  
\end{align}
where we introduced $a_i = c_i s_i$ and used $ \text{Re}[ \tilde G_{ij}^2 ] =  \text{Re}[\tilde G_{ji}^2 ]$ in the last line. Note, here $a_i$ does not stand for an annihilation operator. The above clearly shows that $\text{Tr}[\tilde G C S \tilde G^*  C S]$ is real-valued.  We continue
\begin{align}
   \text{Tr}[\tilde G C S \tilde G^*  C S] \leq \vert \text{Tr}[\tilde G C S \tilde G^*  C S] \vert &\leq \Big\vert\sum_i \tilde G_{ii}^2 a_i a_i \Big\vert + \Big\vert \sum_{i\neq j} \text{Re}[\tilde G_{ij}^2 ] a_i a_j   \Big\vert \\
    &\leq \sum_i \tilde G_{ii}^2 a_i a_i + \sum_{i\neq j} \big\vert \text{Re}[\tilde G_{ij}^2 ]\big\vert a_i a_j  \\
    &\leq \sum_i \tilde G_{ii}^2 a_i a_i + \sum_{i\neq j} \vert  \tilde G_{ij}  \vert^2 a_i a_j  \\
    &=   \sum_{i, j} \vert \tilde G_{ij}  \vert^2 a_i a_j =     \sum_{i, j} \vert \tilde G_{ij}  \vert^2 s_i c_i s_j c_j\\
    &=  \sum_{i, j} \vert \tilde G_{ij}  \vert^2 s_i \sqrt{s_i^2+1} s_j \sqrt{s_j^2+1} \\
    &\leq \sum_{ij} \vert \tilde G_{ij}\vert^2  \left[ s_i^2 s_j^2 + \frac{1}{2} (s_i^2+s_j^2) \right]  \\
    &= \sum_{ij} \vert \tilde G_{ij}\vert^2 s_i^2 s_j^2 + \frac{1}{2}\sum_{ij} \vert \tilde G_{ij}\vert^2 (s_i^2+ s_j^2) \\
    &= \mathrm{Tr}[\tilde G S ^2 \tilde G S^2] + 2\frac{1}{2} \mathrm{Tr}[G^2 S^2] , \label{app:eq:InequalityChainLastLine}
\end{align}
where we used $\text{Re}[z^2  ] \leq \vert z\vert^2$ in the third line. In the fifth line we used $c_i= \sqrt{s_i^2+1}$. In the sixth line we used $\text{LHS}=s_i \sqrt{s_i^2+1} s_j \sqrt{s_j^2+1} \leq s_i^2 s_j^2 + \frac{1}{2} (s_i^2+s_j^2) =\text{RHS}$. This can be  seen by squaring both sides (both sides are non-negative), that is, the left hand side squared is $\text{LHS}^2=s_i^2 (s_i^2+1) s_j^2 (s_j^2+1) =  (s_i^4+s_i^2)  (s_j^4+s_j^2) = s_i^4s_j^4  +s_i^4s_j^2+ s_i^2s_j^4+s_i^2s_j^2$. The right hand side squared is 
\begin{align}
    \text{RHS}^2 = &[s_i^2 s_j^2 + \frac{1}{2} (s_i^2+s_j^2)]^2 = s_i^4s_j^4 + 2\frac{1}{2} s_i^2s_j^2 (s_i^2+s_j^2)+\frac{1}{4}(s_i^2+s_j^2)^2 \\
    &= s_i^4s_j^4 + s_i^4s_j^2 + s_i^2s_j^4 +\frac{1}{4}(s_i^2+s_j^2)^2 .
\end{align}
Now, let us rewrite $\text{LHS}^2\leq \text{RHS}^2$  as
\begin{align}
    0 &\leq \text{RHS}^2 - \text{LHS}^2 = \frac{1}{4}(s_i^2+s_j^2)^2 - s_i^2s_j^2 = \frac{1}{4} (s_i^2-s_j^2)^2 ,
\end{align}
which is obviously true as the term after the last equals sign is manifestly non-negative. And thus $\text{LHS}  \leq \text{RHS}$ from which Eq.~\eqref{app:eq:InequalityChainLastLine} finally follows.

Now, let us come to the terms that are due to displacement. Note that $  \text{Tr}[\tilde G C S \tilde G^* \mathcal B^*]^* = \text{Tr}[(\tilde G C S \tilde G^* \mathcal B^*)^\dag] = \text{Tr}[\tilde G^* C S \tilde G \mathcal B]$. Thus, we have
\begin{align}
    \text{Tr}[\tilde G C S \tilde G^* \mathcal B^*]+ \text{Tr}[\tilde G^* S C \tilde G \mathcal B ] &= 2 \text{Re}\left[ \text{Tr}[\tilde G C S \tilde G^* \mathcal B^*] \right] \\
    & \leq 2  \left \vert \text{Tr}[\tilde G C S \tilde G^* \mathcal B^*] \right\vert \\
    &=     2\left\vert \sum_{iji'} \alpha_{i'}^* \tilde G_{i' j} c_j s_j \tilde G_{ji}^* \alpha_i^* \right\vert \\
    &= 2\left\vert \sum_{j} c_j s_j (\sum_{i'} \tilde G_{i' j}\alpha_{i'}^*)  (\sum_i\tilde G_{ij} \alpha_i^*  )\right\vert \\
    &=2 \left\vert \sum_{j} c_j s_j (\sum_i\tilde G_{ij} \alpha_i^*  )^2 \right\vert\\ 
    &\leq 2 \sum_{j} c_j s_j \left\vert\sum_i\tilde G_{ij} \alpha_i^*  \right\vert^2 \\
    &=2 \sum_{j} c_j s_j  \sum_i\tilde G_{ij} \alpha_i^*     \sum_{i'}\tilde G_{i'j}^* \alpha_{i'} \\
    &= 2\sum_{iji'} \tilde G_{ij} c_j s_j \tilde G_{ji'} \alpha_{i'} \alpha_i^* \\
    &= 2 \text{Tr}[\tilde G CS \tilde G \mathcal A] \\
    &\leq  2 \text{Tr}[\tilde G C^2 \tilde G \mathcal A] \\
    &= 2 \text{Tr}[\tilde G S^2 \tilde G \mathcal A] + 2 \text{Tr}[\tilde G ^2 \mathcal A]  .
\end{align}
With
\begin{align}
    \text{Tr}[\tilde G C^2 \tilde G \mathcal A] = \text{Tr}[\tilde G (S^2+1) \tilde G \mathcal A] = \text{Tr}[\tilde G S^2 \tilde G \mathcal A] + \text{Tr}[\tilde G^2 \mathcal A] 
\end{align}
and 
\begin{align}
    \text{Tr}[\tilde G S ^2 \tilde G C^2] = \text{Tr}[\tilde G S ^2 \tilde G S^2] + \text{Tr}[\tilde G^2 S ^2  ]
\end{align}
we finally obtain with all of the results above
\begin{align}
    \mathcal{F}(\vert\psi_\lambda\rangle)& = 4\Big [ \text{Tr}[\tilde G C S \tilde G^*  C S] + \text{Tr}[\tilde G S ^2 \tilde G C^2] \\
     &\quad \quad \quad\quad+ \text{Tr}[\tilde G S^2 \tilde G \mathcal A]+ \text{Tr}[\tilde G C^2 \tilde G\mathcal A] + \text{Tr}[\tilde G C S \tilde G^* \mathcal B^*]+ \text{Tr}[\tilde G^* S C \tilde G \mathcal B]\Big] \\
     & \leq  4\Big [ \mathrm{Tr}[\tilde G S ^2 \tilde G S^2] +  \mathrm{Tr}[G^2 S^2] + \text{Tr}[\tilde G S ^2 \tilde G S^2] + \text{Tr}[\tilde G^2 S ^2  ] \\
      &\quad + \text{Tr}[\tilde G S^2 \tilde G \mathcal A]+ \text{Tr}[\tilde G S^2 \tilde G \mathcal A] +  \text{Tr}[\tilde G^2 \mathcal A]  + 2 \text{Tr}[\tilde G S^2 \tilde G \mathcal A] + 2 \text{Tr}[\tilde G^2 \mathcal A]  \Big]  \\
      &= 4\cdot \Bigg( 2 \mathrm{Tr}[\tilde G S^2 \tilde G S^2] + 4\mathrm{Tr}[\tilde G S^2 \tilde G \mathcal A]  +   3  \text{Tr}[\tilde G^2 (S^2+\mathcal A )] - \text{Tr}[\tilde G^2 S^2 ] \Bigg)  \\
      &= 4\cdot \left( 2 \mathrm{Tr}[\tilde G S^2 \tilde G S^2] + 4\mathrm{Tr}[\tilde G S^2 \tilde G \mathcal A]  \right)  +   4\cdot 3   (\bar g^2+ \Delta g^2)N_S  - 4\text{Tr}[\tilde G^2 S^2 ]
\end{align}
This completes the proof.
\end{proof}

\subsection{Proof of Lemma~\ref{sec4:lemma2}}\label{app4:lemm2}
\begin{proof}
    As $Q$ is Hermitian and positive semi-definite, there is a unitary $W$, such that $Q = W^\dag D W$, where $D= \text{diag}(d_1,\ldots, d_M)$ is diagonal and we have $d_i \geq 0$. With this, let us consider the right-hand side of the to be proved inequality which we divide by $4$ and obtain
    \begin{align}
        \text{RHS}/4 &= \mathrm{Tr}[H Q]^2  +  \mathrm{Tr}[H^2 Q]\mathrm{Tr}[Q]  -2 \mathrm{Tr}[HQHQ] \\
         &=\mathrm{Tr}[H W^\dag D W]^2  +  \mathrm{Tr}[H^2 W^\dag D W]\mathrm{Tr}[W^\dag D W]  - 2  \mathrm{Tr}[HW^\dag D WHW^\dag D W] \\
         &= \mathrm{Tr}[H'  D ]^2  +  \mathrm{Tr}[H'^2   D  ]\mathrm{Tr}[  D  ]  -2  \mathrm{Tr}[H'   D H' D  ]  ,
    \end{align}
    where we introduced $H'=W H W^\dag$, which is Hermitian. Let us now continue by rewriting the above in terms of components. For the first term, we have to calculate the square of the following
    \begin{align}
        \text{Tr}[H' D] = \sum_k (H' D)_{kk} = \sum_k \sum_i H_{ki}' D_{ik} \delta_{ik} = \sum_i H_{ii}' d_i .
    \end{align}
    For the second term, we have 
    \begin{align}
        \mathrm{Tr}[H'^2   D  ] &=  \sum_k (H'^2 D)_{kk} = \sum_k \sum_i (H'^2)_{ki} D_{ik} \delta_{ik} = \sum_i (H'^2)_{ii} d_i = \sum_i \sum_n H_{in}' H_{ni}' d_i \\
        &= \sum_{in} \vert H'_{in}\vert^2 d_i .
    \end{align}
    For the third term, we have
    \begin{align}
        \text{Tr}[H'DH'D] &= \sum_{j} (H'DH'D)_{jj} = \sum_{j}\sum_i (H'D)_{ji} (H'D)_{ij } \\
        &= \sum_{j}\sum_i \sum_r H_{jr}' D_{ri} \sum_q H_{iq}' D_{qj} \\
        &= \sum_{ji} H_{ji}' D_{ii} H_{ij}' D_{jj}\\
        &= \sum_{ij} \vert H'_{ij}\vert^2 d_i d_j
    \end{align}
    Putting these three terms together, we obtain for the right-hand side
    \begin{align}
        \text{RHS}/4 &= \sum_i H'_{ii} d_i \sum_j H'_{jj} d_j + \sum_{in} \vert H_{in}'\vert^2 d_i \sum_j d_j -2 \sum_{ij} \vert H'_{ij}\vert^2 d_i d_j \\
        &= \sum_{ij} \left[ H_{ii}' H_{jj}'+ \sum_n \vert H_{in}'\vert^2 - 2 \vert H_{ij}'\vert^2\right] d_i d_j \\
        &= \sum_{i} \left[ H_{ii}'^2+ \sum_n \vert H_{in}'\vert^2 - 2 \vert H_{ii}'\vert^2\right] d_i^2 \\
        & \quad+\sum_{i\neq j} \left[ H_{ii}' H_{jj}' + \sum_n \vert H_{in}'\vert^2 - 2 \vert H_{ij}'\vert^2\right] d_i d_j .
    \end{align}
    Now, note that we have for the sum over $n$ in the third line
    \begin{align}
        \sum_{i}\sum_n \vert H_{in}'\vert^2 d_i^2 = \sum_i \left[  \vert H_{ii}'\vert^2  +  \sum_{\substack{n\\ n \neq i}}\vert H_{in}'\vert^2  \right] d_i^2
    \end{align}
    and for the sum over $n$ in the fourth line, we have 
    \begin{align}
        \sum_{i\neq j} & \sum_n \vert H_{in}'\vert^2 d_i d_j = \sum_{i\neq j} \sum_n\left( \frac{\vert H_{in}'\vert^2}{2} + \frac{\vert H_{jn}'\vert^2}{2} \right) d_i d_j \\
        &=\sum_{i\neq j}  \left( \frac{\vert H_{ii}'\vert^2}{2} + \frac{\vert H_{ji}'\vert^2}{2} + \frac{\vert H_{ij}'\vert^2}{2} + \frac{\vert H_{jj}'\vert^2}{2} \right) d_i d_j + \sum_{i\neq j} \sum_{\substack{n\\ n\neq i,j}}\left( \frac{\vert H_{in}'\vert^2}{2} + \frac{\vert H_{jn}'\vert^2}{2} \right) d_i d_j \\
        &=\sum_{i\neq j}  \left( \frac{\vert H_{ii}'\vert^2}{2} + \vert H_{ij}'\vert^2+ \frac{\vert H_{jj}'\vert^2}{2} \right) d_i d_j + \sum_{i\neq j} \sum_{\substack{n\\ n\neq i,j}}\left( \frac{\vert H_{in}'\vert^2}{2} + \frac{\vert H_{jn}'\vert^2}{2} \right) d_i d_j 
    \end{align}
    Plugging these expressions into RHS yields
    \begin{align}
        \text{RHS}/4 &= \sum_{i} \underbrace{\left[ H_{ii}'^2+  \vert H_{ii}'\vert^2 - 2 \vert H_{ii}'\vert^2\right]}_{=0} d_i^2 \\
        & \quad+\sum_{i\neq j} \left[ H_{ii}' H_{jj}' +   \frac{\vert H_{ii}'\vert^2}{2} + \vert H_{ij}'\vert^2 + \frac{\vert H_{jj}'\vert^2}{2} - 2 \vert H_{ij}'\vert^2\right] d_i d_j \\
        & \quad + \sum_i \sum_{\substack{n\\ n \neq i}}\vert H_{in}'\vert^2    d_i^2 + \sum_{i\neq j} \sum_{\substack{n\\ n\neq i,j}}\left( \frac{\vert H_{in}'\vert^2}{2} + \frac{\vert H_{jn}'\vert^2}{2} \right) d_i d_j  \\
        &= \sum_{i\neq j} \Bigg( \frac{H_{ii}'}{2} +  \frac{H_{jj}'}{2}\Bigg)^2 d_i d_j  +\sum_{i\neq j} - \vert H_{ij}'\vert^2 d_i d_j \\
        & \quad + \sum_i \sum_{\substack{n\\ n \neq i}}\vert H_{in}'\vert^2    d_i^2 + \sum_{i\neq j} \sum_{\substack{n\\ n\neq i,j}}\left( \frac{\vert H_{in}'\vert^2}{2} + \frac{\vert H_{jn}'\vert^2}{2} \right) d_i d_j 
    \end{align}
    where we used $H_{ii}'^2 = \vert H_{ii}'\vert^2$ in the first line, which is true to Hermiticity. Now, let us note that
    \begin{align}
        \sum_i &\sum_{\substack{n\\ n \neq i}}\vert H_{in}'\vert^2    d_i^2 = \sum_{i\neq n} \vert H_{in}'\vert^2    d_i^2 = \sum_{i\neq j} \vert H_{ij}'\vert^2 d_i^2 \\
        &=  \sum_{i\neq j} \frac{\vert H_{ij}'\vert^2 d_i^2}{2} +  \sum_{i\neq j} \frac{\vert H_{ji}'\vert^2 d_j^2}{2} \\
        &= \sum_{i\neq j} \vert H_{ij}' \vert^2 \frac{d_i^2 + d_j^2}{2} .
    \end{align}
    With this, we finally obtain
    \begin{align}
    \text{RHS}/4   &= \sum_{i\neq j} \Bigg( \frac{H_{ii}'}{2} +  \frac{H_{jj}'}{2}\Bigg)^2 d_i d_j  +\sum_{i\neq j}  \vert H_{ij}'\vert^2 \frac{d_i^2 - 2d_i d_j + d_j^2}{2} \\
        & \quad  + \sum_{i\neq j} \sum_{\substack{n\\ n\neq i,j}}\left( \frac{\vert H_{in}'\vert^2}{2} + \frac{\vert H_{jn}'\vert^2}{2} \right) d_i d_j \\
         &= \sum_{i\neq j} \underbrace{\Bigg( \frac{H_{ii}'}{2} +  \frac{H_{jj}'}{2}\Bigg)^2 d_i d_j}_{\geq0}  +\sum_{i\neq j}  \underbrace{\vert H_{ij}'\vert^2 \frac{(d_i- d_j)^2}{2}}_{\geq0} \\
        & \quad  + \sum_{i\neq j} \sum_{\substack{n\\ n\neq i,j}} \underbrace{\left( \frac{\vert H_{in}'\vert^2}{2} + \frac{\vert H_{jn}'\vert^2}{2} \right) d_i d_j}_{\geq0}  \\
        &\geq 0.
    \end{align}
    Note that $H_{ii}'$ is real (due to Hermiticity of $H$), and $d_i \geq 0$ (due to positive semi-definiteness of $Q$). Thus, the three above terms are in a manifestly non-negative form. Thus, we have $\text{RHS} \geq 0$, which concludes the proof.
\end{proof}

\section{Optimal states}

\subsection{Mean-optimal state}\label{app:MeanOptimalState}
We will now discuss the scenario in which we aim to maximize the prefactor of the term proportional to $\bar g^2$ in the QFI. We refer to such a states as mean optimal.

As an Ansatz, we choose a single-mode squeezed state with zero displacement. Thus, the input state we consider takes the form 
\begin{align}
    \vert \psi\rangle =  \hat V \hat S_{\hat a_n}(r_n)\vert 0\rangle =   \hat S_{\hat V\hat a_n \hat V^\dag}(r_n)\vert 0\rangle.
\end{align}
Thus, the single mode $\hat c_n = \hat V\hat a_n \hat V^\dag$ is squeezed with squeezing parameter $r_n >0$. The other modes are in vacuum, i.e., $r_i=0$ for $i\neq n$.
Also note that $\hat V$ shall be completely arbitrary.

The calculation of the QFI of such a state is straightforward.
\begin{align}
    \mathcal{F}(\vert\psi_\lambda\rangle) &= 4\text{Tr}[\tilde G C S \tilde G^*  C S] +4 \text{Tr}[\tilde G S ^2 \tilde G C^2] \\
    &= 4\sum_{ij} s_i \tilde G_{ij} c_j s_j \tilde G_{ji}^* c_i +4\sum_{ij} c_i \tilde G_{ij} s_j s_j \tilde G_{ji} c_i \\
    &= 4 s_{n} \tilde G_{n n} c_{n} s_{n} \tilde G_{nn}^* c_{n} +4\sum_{i} c_i \tilde G_{in} s_{n} s_{n} \tilde G_{ni} c_i  \\
    &=  4 \vert\tilde G_{n n}\vert^2 s_{n}^2c_{n}^2    +4 \vert\tilde G_{n n}\vert^2 s_{n}^2c_{n}^2 + \sum_{ \substack{i\\ i\neq n}}  \vert\tilde G_{in}\vert^2 s_{n}^2   c_i^2 \\
    &= 8 \vert\tilde G_{n n}\vert^2 s_{n}^2 (s_{n}^2+1) +4\sum_{ \substack{i\\ i\neq n}}  \vert\tilde G_{in}\vert^2 s_{n}^2 \cdot  1 \\
    &= 8 \vert\tilde G_{n n}\vert^2   s_{n}^4 + 4\sum_{ i}  \vert\tilde G_{in}\vert^2 s_{n}^2 .\label{app:eq:MeanOptimalStateQFIlast}
\end{align}
In the third line, we used  $s_i = 0$ for $i\neq n$, in the fourth line we separated the $n$ term from the sum, in the fifth line we used  $c_i^2 = s_i^2+1$ for all $i$ and  $c_i = 1$ for $i\neq n$. 

Now, we want to relate this result for the QFI to the resources $N_S$, $\bar g$, and $\Delta g$.
We have
\begin{align}
    N_S &= \text{Tr}[\tilde P_SS^2]  = \sum_i (\tilde P_S S^2)_{ii} = \sum_{i,j} (\tilde P_S) _{ij} (S^2)_{ji} = \sum_{i,j} (\tilde P_S) _{ij}  s_i^2 \delta_{ji} = \sum_{i}(\tilde P_S) _{ii} s_i^2 \\
    &= (\tilde P_S) _{nn} s_n^2 .
\end{align}
Recall that $\tilde P_S$ is the projector onto the subspace spanned by the eigenvectors of $\tilde G$ associated to non-zero eigenvalues. In the (common) case where $\tilde G$ has only non-zero eigenvalues, $\tilde P_S$ is simply the identity.

For the mean resource parameter we have
\begin{align}
    \bar g N_S &= \text{Tr}[\tilde G S^2] = \sum_{ij} \tilde G_{ij} s_i^2\delta _{ij} = \sum_i \tilde G_{ii} s_i^2  = \tilde G_{nn} s_{n}^2 
\end{align}
We also have
\begin{align}
    \left( \Delta g^2 +\bar g^2 \right) N_S &= \text{Tr}[\tilde G^2 S^2] = \sum_{nm} (\tilde G^2)_{nm} s_n^2\delta _{nm} =  \sum_{n} (\tilde G^2)_{nn} s_n^2 \\
    &= \sum_{in} \tilde G_{ni} \tilde G_{in} s_n^2 = \sum_{in}\vert \tilde G_{in} \vert^2 s_n^2 \\
    &= \sum_{i}\vert \tilde G_{in} \vert^2 s_{n}^2 .
\end{align}
By examining the QFI in Eq.~\eqref{app:eq:MeanOptimalStateQFIlast} and the expressions for  $\bar g N_S$ and $\left( \Delta g^2 +\bar g^2 \right) N_S$, we recognize that the QFI can be rewritten in terms of the resources parameters:
\begin{align}
    \mathcal{F}(\vert\psi_\lambda\rangle) &= 8 \bar g^2 N_S^2 + 0\cdot \Delta g^2 N_S^2 + 4 \bar g^2 N_S + 4(\Delta g)^2 N_S \\
    &= 8 \bar g^2 N_S^2 + 0\cdot \Delta g^2 N_S^2+ O(N_S) .
\end{align}
Thus, an arbitrary single mode squeezed state attains the best prefactor of $8$ in the $\bar g^2$ term, but the worst prefactor of $0$ for the $(\Delta g)^2$ term. It is an optimal state if $\Delta g=0$ or almost optimal in the case of $\bar g^2 \gg (\Delta g)^2$.

\subsection{Variance-optimal state}\label{app:VarianceOptimalState}
\subsubsection{Without displacement}\label{subsub:NoDisplacement}
Now, let us come to a different state that maximizes the prefactor in front of the variance term $\Delta g^2$.

Again, as an Ansatz, we use a state with zero displacement.  It takes the form
\begin{align}
    \vert\psi\rangle &=  \underbrace{\hat B e^{-i \frac{-\phi_n}{2} \hat a_i^\dag \hat b_i}e^{-i \frac{-\phi_j}{2} \hat a_j^\dag \hat a_j} }_{\hat V} \hat S_{ \hat a_{i}}(r) \hat S_{ \hat a_{j}}(r) \vert 0\rangle =   e^{-i \frac{-\phi_n}{2} \hat B\hat a_i^\dag \hat b_i \hat B^\dag}e^{-i \frac{-\phi_j}{2} \hat B\hat b_j^\dag \hat b_j\hat B^\dag}    \hat S_{ \hat B\hat a_{i}\hat B^\dag}(r) \hat S_{ \hat B\hat a_{j}\hat B^\dag}(r) \vert 0\rangle \\
     &=   e^{-i \frac{-\phi_n}{2} \hat b_i^\dag \hat b_i}e^{-i \frac{-\phi_j}{2} \hat b_j^\dag \hat b_j}    \hat S_{  \hat b_{i} }(r) \hat S_{  \hat b_{j} }(r) \vert 0\rangle \\
    &=  \hat S_{e^{-i\phi_i/2}\hat b_{i}}(r) \hat S_{e^{-i\phi_j/2}\hat b_{j}}(r) \vert 0\rangle
\end{align}
Thus, we squeeze two mode in the mode basis $\{\hat b_n\}_n$ in which the generator $\hat G$ appears diagonal. Note that we squeeze both modes $i$ and $j$ equally with squeezing parameters and with squeezing angles $\phi_i,\phi_j$. In the second line we used Eq.~\eqref{app:eq:DefBunitary}. Let us also note we assume $g_i,g_j\neq 0$.

The QFI and the resources of such a state can be calculated quite easily due to the state's simple structure. For the QFI we have
\begin{align}
    \mathcal{F}(\vert\psi_\lambda\rangle) &= 4\text{Tr}[\tilde G C S \tilde G^*  C S] +4 \text{Tr}[\tilde G S ^2 \tilde G C^2] \\
    &= 4\sum_{ij} s_i \tilde G_{ij} c_j s_j \tilde G_{ji}^* c_i +4\sum_{ij} c_i \tilde G_{ij} s_j s_j \tilde G_{ji} c_i \\
    &= 8 \sum_i g_i^2  s_i^2 c_i^2  = 8 g_{i}^2  s_{i}^2 c_{i}^2 +  8 g_{j}^2  s_{j}^2 c_{j}^2  \\
    &= 8 \tilde g_{i}^2 s_{i}^2 (s_{i}^2+1) +  8 g_{j}^2  s_{j}^2 (s_{j}^2+1)  \label{app:eq:VarianceOptimalQFI} .
\end{align}
Note that in the second line we used $V = B  T$, where $T_{nm} = e^{-i \frac{-i\phi_n}{2}} \delta_{nm}$, where $\phi_n = 0$ for $n\neq i,j$. With this, we find $\tilde G = V^\dag G V =  T^\dag B^\dag G B T = T^\dag  D T = D= \text{diag}(g_1,\ldots,g_M)$.

Now, we want to express the QFI in terms of the resources $N_S$, $\bar g$, and $\Delta g$. We have
\begin{align}
    N_S = \text{Tr}[\tilde P_SS^2] =  \text{Tr}[ S^2] = s_{i}^2 + s_{j}^2 \label{app:eq:VarianceOptimalPhoton},
\end{align}
where $\tilde P_SS^2 = S^2$ follows from our initial assumption $g_i,g_j\neq 0$.
For the mean resource, we have
\begin{align}
    \bar g N_S = \text{Tr}[\tilde G S^2] = \sum_{nm} \tilde G_{nm} s_n^2\delta _{nm} &= \sum_n \tilde G_{nn} s_n^2  = \tilde G_{ii} s_{i}^2 + \tilde G_{jj} s_{j}^2 \\
    &= g_{i}  s_{i}^2+ g_{j}  s_{j}^2  \label{app:eq:VarianceOptimalMean}
\end{align}
and  for the variance resource we have
\begin{align}
    \left(  \Delta g^2 +\bar g^2 \right) N_S &= \text{Tr}[\tilde G^2 S^2] = \sum_{nm} (\tilde G^2)_{nm} s_n^2\delta _{nm} =  \sum_{n} (\tilde G^2)_{nn} s_n^2 \\
    &= \sum_{rn} \tilde G_{nr} \tilde G_{rn} s_n^2 = \sum_{rn}\vert \tilde G_{rn} \vert^2 s_n^2 \\
    &= \sum_{r}\vert \tilde G_{ri} \vert^2 s_{i}^2 + \sum_{r}\vert \tilde G_{rj} \vert^2 s_{j}^2 \\
     &=  g_{i}^2   s_{i}^2 + g_{j}^2  s_{j}^2  \label{app:eq:VarianceOptimalVariance}
\end{align}
Let us now set $s_{i}^2= s_{j}^2= N_S/2$. From this follows
\begin{align}
       \Delta g ^2 +\bar g^2    = \frac{g_{i}^2    + g_{j}^2    }{2}
\end{align}
and
\begin{align}
    \mathcal{F}(\vert\psi_\lambda\rangle) &= 8\left(g_{i}^2    + g_{j}^2   \right) \frac{N_S}{2}\left(\frac{N_S}{2}+1\right)   \\
    &= 2  \cdot 2\frac{g_{i}^2   + g_{j}^2  }{2} N_S (N_S + 2)  \\
    &= 4 \left(  \Delta g ^2 +\bar g^2 \right) N_S^2 + 8 \left( \Delta g^2 +\bar g^2 \right) N_S  \\
    &= 4 \left(  \Delta g^2 +\bar g^2 \right) N_S^2 + O(N_S) .
\end{align}
We see that the factor $4$ in front of the bandwidth term $ (\Delta g)^2$  is optimal, while the factor $4$ in front of the $\bar g^2$ term is suboptimal, it is only $50\%$ of the optimal factor $8$. Thus, the state is variance optimal

We also make an important observation for the $N_S \ll 1$ scenario. In this case, the performance of our state is a factor $2$ better than that of the classical coherent state.

\subsubsection{With displacement}\label{subsub:WithDisplacement}

Let us now come to a class of states  that makes use of squeezing and displacement and does not rely on the generator-induced mode basis $\{\hat b_i\}_i$ in contrast to the previous state in \ref{subsub:NoDisplacement}. It relies on populating an initial mode and the corresponding derivative mode, and as we show here, under certain conditions this state can be variance optimal. It has been widely discussed in the literature~\cite{pinel2012ultimate,gessner2023b}.

First recall that in an arbitrary basis $\{ \hat a_n\}_n$ we have
$
    \partial_\lambda\hat a_m^\dag (\lambda) = \partial_\lambda \hat U_\lambda \hat a_m^\dag \hat U_\lambda^\dag =  -i\sum_{n} G_{nm} \hat a_n^\dag (\lambda) .
$
We see  that the derivative mode $\partial_\lambda\hat a_m^\dag$ can be expressed in terms of the initial modes.

Now, let us choose an arbitrary initial mode $\hat u_0$. It is always possible to find a  mode basis $\{\hat u_0,\hat u_1,\ldots \}$ such that the derivative mode of $\hat u_0$ is composed of only two modes
\begin{align}
    \partial_\lambda\hat u_0^\dag (\lambda) = -i [ G_{00} \hat u_0^\dag (\lambda) +  G_{10} \hat u_1^\dag (\lambda) ].
\end{align}
We refer to $\hat u_1$ as the derivative mode of $\hat u_0$, as it is the only mode (other than $\hat u_0$ itself) that has overlap with $\partial_\lambda\hat u_0 (\lambda)$.
Now, suppose we displace  $\hat u_0$ and squeeze $\hat u_0$ and $\hat u_1$, that is, we consider the state
\begin{align}
    \hat V\hat D_{\hat u_0}(\alpha_0) \hat S_{\hat u_0}(r_0)\hat S_{\hat u_1}(r_1)\vert 0\rangle ,
\end{align}
where $\hat V= e^{- i\frac{-\phi_0}{2}\hat u_0^\dag \hat u_0}e^{-i \frac{-\phi_1}{2}\hat u_1^\dag \hat u_1}$ and we have $V_{ni} = e^{-i(-\phi_n)/2}\delta_{ni}$, where $\phi_n$ is related to the squeezing angle.

Let us calculate the QFI and the resources of such a state. (assume $G$ is full rank). Recall that $\tilde G = V^\dag G V$.
The QFI of the state is then
\begin{align}
\mathcal{F}=&4 \Big[  \tilde G_{00}^2 \left(\vert\alpha_0 \vert^2
   \left(c_0^2+s_0^2\right)+c_0 s_0
   \left(\alpha_0^2 +\alpha_0^{*2}+2 c_0 s_0\right)  \right)\\
   &+ \tilde  G_{11}^2 \cdot 2 c_1^2 s_1^2
   \\
   &+\vert \tilde  G_{01}\vert^2 \Big( (s_1^2 +c_1^2)\vert\alpha_0\vert^2+ s_1^2c_0^2 +c_1^2 s_0^2 \\
   &\quad\quad\quad\quad\quad+s_1 c_1  (e^{2i\gamma} \{ \alpha_0^2+c_0s_0\}+e^{-2i\gamma} \{ \alpha_0^{*2}+c_0s_0\})\Big)\Big]
\end{align}
where we used $\tilde G_{01} = G_{01} e^{i\frac{\phi_1-\phi_0}{2}}$ and denoted $\gamma = \frac{\phi_1-\phi_0}{2}+\arg[G_{01}]$ and note that $\tilde G_{00} = G_{00}$ and $\tilde G_{11} = G_{11}$. Choosing $\gamma = 0$  and $\alpha_0\in \mathds{R}$ clearly maximizes the QFI, which is why we will use this parameter setting in the following.

For the resources we have
\begin{align}
    N_S = s_0^2+s_1^2+\vert \alpha_0\vert^2
\end{align}
\begin{align}
    \bar g N_S &= \sum_{n,m}  \tilde G_{nm} (s_n^2 \delta_{nm}+\alpha_n^*\alpha_m) \\
    &= \sum_{n}  \tilde  G_{nn}  s_n^2    + \sum_{n,m}  \tilde G_{nm}  \alpha_n^*\alpha_m \\
    &=\tilde G_{00}s_0^2 + \tilde G_{11}s_1^2 + \tilde G_{00} \vert \alpha_0\vert^2
\end{align}

We also have
\begin{align}
   ( \Delta g^2 + \bar g^2)N_S &= \sum_{n,m}   (\tilde G^2)_{nm} (s_n^2 \delta_{nm}+\alpha_n^*\alpha_m)  \\
   &=  ( \tilde G^2)_{00} s_0^2 + (\tilde G^2)_{11} s_1^2 + (\tilde G^2)_{00}  \vert \alpha_0\vert^2 \\
   &=  \sum_{i} \tilde G_{0i}\tilde G_{i0} s_0^2 + \sum_{i} \tilde G_{1i}\tilde G_{i1} s_1^2 +  \sum_{i} \tilde G_{0i} \tilde G_{i0}   \vert \alpha_0\vert^2 \\
   &= \left[ \vert \tilde G_{00}\vert^2 + \vert \tilde G_{01}\vert^2 \right] (s_0^2+\vert \alpha\vert^2) + \left[ \vert \tilde G_{10}\vert^2 +\vert \tilde G_{11}\vert^2+ \sum_{i\geq 2}\vert G_{1i}\vert^2\right] s_1^2 
\end{align}

Now, assuming $s_0^2=0$, $\vert \alpha_0\vert ^2 = s_1^2 = N_S/2$  and considering the limit $N_S\gg 1$, we obtain for the resources $ \bar g    
     =    (\tilde G_{00}    +  \tilde  G_{11})/2   $, $ \Delta g^2 =\vert \tilde G_{01}\vert^2 +\sum_{i\geq 2} \vert G_{1i}\vert^2/2  + (\tilde G_{00}-\tilde G_{11})^2 /4$ and for the QFI
     \begin{align}
         \mathcal{F} \approx 2 \tilde G_{11}^2  N_S^2 + 4  \left( \Delta g^2- \frac{1}{4}  (\tilde G_{00}-\tilde G_{11})^2  - \frac12 \sum_{i\geq 2}\vert G_{1i}\vert^2\right)   N_S^2 + O(N_S) .
     \end{align}
We see that if $\tilde G_{00}-\tilde G_{11} = G_{00}- G_{11}=0$ and $- \frac12 \sum_{i\geq 2}\vert G_{1i}\vert^2 =0$, then the state is variance optimal with 
\begin{align}
         \mathcal{F} \approx 2  \bar g^2 N_S^2 + 4    \Delta g^2    N_S^2 + O(N_S) .
     \end{align}
     It is easy to see that the modes $\hat u_0= (\hat b_i+\hat b_j)/\sqrt{2}$, the second is $\hat u_1= (\hat b_i-\hat b_j)/\sqrt{2}$ with half of the energy used for displacing one mode and the other to squeeze the other, satisfy both conditions, and thus this setting achieves the variance-optimal QFI given above.
The performance associated to the variance term of the QFI can be attained with a homodyne measurement of the mode $\hat u_1(\lambda)$.

\subsection{Completely optimal state}\label{app:CompletlyOptimalState}
In Sec.~\ref{subsubsec:Optimal} of the main text, we identified the optimal state. We now verify that this state is indeed optimal. Recall that it takes the form
\begin{align}
    \vert\psi\rangle = \hat{S}_{e^{-i\phi_i/2}\hat{b}_{i}}(r_i)\,\hat{S}_{e^{-i\phi_j/2}\hat{b}_{j}}(r_j)\,\vert 0\rangle,
\end{align}
which generalizes the variance-optimal state of Sec.~\ref{app:VarianceOptimalState} by allowing the two squeezing strengths to differ, \textit{i.e}., $r_i \neq r_j$.

First, recall that we assume $g_i< g_j$ and $g_i\neq 0$ and $g_j\neq 0$. Now, assume that $i,j$ exists such that the corresponding eigenvalues are
\begin{align}
    g_i &= \bar g -  \vert \Delta g\vert \sqrt{\frac{s_j^2}{s_i^2}} \\
    g_j &=  \bar g + \vert \Delta g\vert \sqrt{\frac{s_i^2}{s_j^2}} 
\end{align}
and the average photon number of mode $i,j$ are chosen as
\begin{align}
    s_i^2  &=  \frac{N_S}{2} \left( 1 - \frac{\bar g  }{\sqrt{\bar g^2+(\Delta g)^2}} \right)  \\
    s_j^2  &=  \frac{N_S}{2} \left( 1 + \frac{\bar g  }{\sqrt{\bar g^2+(\Delta g)^2}} \right) 
\end{align}
where $N_S = s_i^2+s_j^2$ which follows from $g_i,g_j\neq 0$.

The calculation for the QFI and resources is analogous to the one in the previous Sec.~\ref{app:VarianceOptimalState} and we find
\begin{align}  
    N_S &= \text{Tr}[S^2] = s_{i }^2 + s_{j}^2\\
    \bar g N_S   &=  g_{i}  s_{i}^2+ g_{j}  s_{j}^2   \\
    \left( \Delta g ^2 +\bar g^2 \right) N_S  &=  g_{i}^2 s_{i}^2 +    g_{j}^2  s_{j}^2 
\end{align}
and 
\begin{align}  
    \mathcal{F}(\vert\psi_\lambda\rangle)  = 8   g_{i}^2  s_{i}^2 (s_{i}^2+1) +  8   g_{j}^2 s_{j}^2 (s_{j}^2+1) . \label{app:OptimalStateQFIeigenvalues}
\end{align}

Let us first show that everything is self consistent.
We start with the mean resource:
\begin{align}
    \bar g N_S   &= \left( \bar g -  \vert \Delta g\vert \sqrt{\frac{s_j^2}{s_i^2}} \right) s_i^2 +  \left(  \bar g + \vert \Delta g\vert \sqrt{\frac{s_i^2}{s_j^2}}\right) s_j^2 \\
    &=  \left( \bar g s_i^2 - \vert \Delta g\vert \sqrt{s_j^2 s_i^2} \right) + \left( \bar g s_j^2 + \vert \Delta g\vert \sqrt{s_i^2 s_j^2} \right)  \\
    &= \bar g \left( s_i^2 +s_j^2 \right)
\end{align}
Thus, we have shown self consistence for the mean resource.

Let us continue with the variance resource.
\begin{align}
      \left( \Delta g ^2 +\bar g^2 \right) N_S  &=  g_{i}^2 s_{i}^2 +    g_{j}^2  s_{j}^2 = \left( \bar g -  \vert \Delta g\vert \sqrt{\frac{s_j^2}{s_i^2}} \right)^2 s_i^2 +  \left(  \bar g + \vert \Delta g\vert \sqrt{\frac{s_i^2}{s_j^2}}\right)^2 s_j^2 \\
      &= \left( \bar g^2 - 2 \bar g\vert \Delta g\vert \sqrt{\frac{s_j^2}{s_i^2}} + \Delta g^2 \frac{s_j^2}{s_i^2}\right) s_i^2 +  \left( \bar g^2 + 2 \bar g\vert \Delta g\vert \sqrt{\frac{s_i^2}{s_j^2}} + \Delta g^2 \frac{s_i^2}{s_j^2}\right) s_j^2 \\
      &= \bar g \left( s_i^2+s_j^2 \right) + \Delta g^2 \left( s_i^2+s_j^2 \right) .
\end{align}
Thus, we have shown self consistence for the variance resource.

Now, it only remains to show that the QFI reaches its optimal value. We have
\begin{align}  
    \mathcal{F}  &= 8   g_{i}^2  s_{i}^2 (s_{i}^2+1) +  8   g_{j}^2 s_{j}^2 (s_{j}^2+1) \\
    &= 8\left( \bar g -  \vert \Delta g\vert \sqrt{\frac{s_j^2}{s_i^2}} \right)^2 s_i^2 (s_i^2+1) + 8 \left(  \bar g + \vert \Delta g\vert \sqrt{\frac{s_i^2}{s_j^2}}\right)^2 s_j^2 (s_j^2+1) \\
    &= 8\left( \bar g^2 - 2 \bar g\vert \Delta g\vert \sqrt{\frac{s_j^2}{s_i^2}} + \Delta g^2 \frac{s_j^2}{s_i^2}\right) s_i^2 (s_i^2 +1) + 8 \left( \bar g^2 + 2 \bar g\vert \Delta g\vert \sqrt{\frac{s_i^2}{s_j^2}} + \Delta g^2 \frac{s_i^2}{s_j^2}\right) s_j^2  (s_j^2+1) \\
    &= 8 \bar g^2 \left[  s_i^2  (s_i^2+1)+ s_j^2  (s_j^2+1)\right] + 8\Delta g^2 \left[ s_j^2 (s_i^2+1) +s_i^2 (s_j^2+1) \right] +  8\cdot 2 \bar g \vert \Delta g \vert \sqrt{s_i^2 s_j^2} \left[ s_j^2 + 1 - (s_i^2+1)\right] \\
    &= 8 \bar g^2 \left[  s_i^4 +  s_j^4\right] + 8\cdot 2\Delta g^2   s_i^2 s_j^2   +  8\cdot 2 \bar g \vert \Delta g \vert \sqrt{s_i^2 s_j^2} \left[ s_j^2  - s_i^2\right] + 8 \bar g^2 N_S + 8 \Delta g^2 N_S + 0 .
\end{align}
 To make further progress, let us consider the following terms:
 Let us abbreviate $q = \frac{  \bar g }{\sqrt{\bar g^2+(\Delta g)^2}}$
 \begin{align}
     s_i^4 + s_j^4 &= \left( \frac{N_S}{2} \right)^2 \left(  1 - q\right)^2 + \left( \frac{N_S}{2} \right)^2 \left(  1 + q\right)^2 \\
     &= \frac{N_S^2}{2} \left[ 1+ q^2 \right]
 \end{align}
 Next:
 \begin{align}
     s_i^2 s_j^2 &= \frac{N_S}{2} \left( 1 - q \right) \frac{N_S}{2} \left( 1 + q \right) = \left(\frac{N_S}{2} \right)^2 \left( 1-q^2 \right)\\
     &=  \frac{N_S^2}{4}  \left( 1-q^2 \right)
 \end{align}
 Lastly, let us consider
\begin{align}
    s_j^2 -s_i^2 &= \frac{N_S}{2} \left( 1 + q \right) - \frac{N_S}{2} \left( 1 - q \right) = \frac{N_S}{2} \left(  1 + q - 1 + q\right) = \frac{N_S}{2}  2 q\\
    &= + N_S q
\end{align}
These results can be plugged into the QFI and we obtain:
\begin{align}
    &= 8 \bar g^2  \frac{N_S^2}{2} [1+q^2]+ 8\cdot 2\Delta g^2    \frac{N_S^2}{4}(1-q^2)   +  8\cdot 2 \bar g \vert \Delta g \vert \sqrt{\frac{N_S^2}{4}(1-q^2) }  (+) N_S q + 8( \bar g^2   +  \Delta g^2) N_S \\
    &= 4 \bar g^2  N_S^2 [1+q^2]+ 4 \Delta g^2     N_S^2(1-q^2)   +   4\cdot 2 \bar g \vert \Delta g \vert N_S^2\sqrt{(1-q^2) }    q + 8( \bar g^2   +  \Delta g^2) N_S \\
    &= 4 \bar g^2  N_S^2  + 4 \Delta g^2     N_S^2  +4 \bar g^2  N_S^2 q^2    -4 \Delta g^2     N_S^2q^2 +   4\cdot 2 \bar g \vert \Delta g \vert N_S^2\sqrt{(1-q^2) }      q + 8( \bar g^2   +  \Delta g^2) N_S \\
    &= 8 \bar g^2  N_S^2  + 4 \Delta g^2     N_S^2  +4 \bar g^2  N_S^2 (q^2-1)    -4 \Delta g^2     N_S^2q^2 +   4\cdot 2 \bar g \vert \Delta g \vert N_S^2\sqrt{(1-q^2) }     q + 8( \bar g^2   +  \Delta g^2) N_S
\end{align}
Now, the first two terms in the last line above corresponds to optimal QFI for given resources. To prove optimality, we thus have to show that the other terms containing $q$ expressions cancel each other out to zero. So let us consider:
\begin{align}
     4 &\bar g^2  N_S^2 (q^2-1)    -4 \Delta g^2     N_S^2q^2 +   4\cdot 2 \bar g \vert \Delta g \vert N_S^2\sqrt{(1-q^2) }     q  \\
      &= -4 N_S^2 \left[ \bar g  \frac{\vert\Delta g\vert}{\sqrt{\bar g^2+\Delta g^2}}  - \vert\Delta g\vert \frac{\bar g }{\sqrt{\bar g^2+(\Delta g)^2}}\right]^2 .
\end{align}
 With this, we finally arrive at
\begin{align}
    \mathcal{F} (\vert\psi_\lambda\rangle)=  8 \bar g^2  N_S^2  + 4 \Delta g^2     N_S^2 + 8( \bar g^2   +  \Delta g^2) N_S . \label{app:QFIoptimalState}
\end{align}

\subsection{An idler-assisted optimal state}\label{app:IdlerAssistedOptimalState}
 The optimal states we have considered thus far have not made use of an idler beam. So  let us now consider states that make use of idler modes.
We consider a state with $4$ populated initial modes labelled as $i,i_I,j,j_I$. The state then takes the form
\begin{align}
    \vert\psi\rangle &=   \underbrace{\hat B \hat U_{ii_I} \hat U_{jj_I}}_{=\hat V}\hat S_{\hat a_i}(r_i) \hat S_{\hat a_{i_I}}(r_{i_I})\hat S_{\hat a_j}(r_j) \hat S_{\hat a_{j_I}}(r_{j_I}) \vert 0\rangle .  
\end{align}
The unitaries $ \hat U_{nn_I} $ with $n=i,j$ are beam splitter unitaries that act as 
\begin{align}
    \hat U_{nn+2}^\dag\begin{pmatrix}
       \hat a_n \\ \hat a_{n_I}
    \end{pmatrix} \hat U_{nn_I} 
    = 
    \underbrace{\begin{pmatrix}
        \cos \theta_{nn_I} & -\sin \theta_{nn_I}\, e^{i\phi_{nn_I}} \\
        \sin \theta_{nn_I} \, e^{-i\phi_{nn_I}} & \cos \theta_{nn_I}
    \end{pmatrix}}_{= U_{nn_I}}
    \begin{pmatrix}
       \hat a_n \\ \hat a_{n_I}
    \end{pmatrix} ,
\end{align}
We recall that the QFI is given by
\begin{align}
     \mathcal{F}(\vert\psi_\lambda\rangle) =4 \text{Tr}[\tilde G C S \tilde G^*  C S] + 4\text{Tr}[\tilde G S ^2 \tilde G C^2]  .
\end{align}
In this case, we have 
\begin{align}
    \tilde G = V^\dag G V = U_{jj_I}^\dag U_{ii_I}^\dag  B^\dag G B U_{ii_I} U_{jj_I} .
\end{align}
With this, we can easily calculate the QFI and the relevant resources. Let us make the parameter choices $r_i=r_{i_I}$, $r_j=r_{j_I}$, $\theta_{ii_I}= \pi/4$, $\theta_{jj_I}= \pi/4$. This yields two two-mode squeezed vacuum states between modes $i$ and $i_I$ and modes $j$ and $J_I$. Let us additionally set $\phi_{ii_I}=0$ and $\phi_{jj_I}=0$, this determines the squeezing angles of our two-mode squeezed vacua. With this, we obtain
\begin{align}
    \mathcal{F}(\vert\psi_\lambda\rangle)= 4 \cdot 2 g_i ^2 s_i^2(s_i^2+1) + 4 \cdot 2 g_j^2 s_j^2(s_j^2+1) 
\end{align}
and for the resources
\begin{align}
    N_S = \text{Tr}[\tilde P_S S^2 \tilde P_S] = s_i^2+s_j^2  
\end{align}
and
\begin{align}
    \bar g N_S = g_i  s_i^2 + g_j  s_j^2   
\end{align}
\begin{align}
    (\bar g^2 +  \Delta g ^2) N_S = \text{Tr}[\tilde G^2 S^2] = g_i  ^2 s_i^2 + g_j  ^2 s_j^2   .
\end{align}
The structure  of the QFI and the resources is exactly the same as for the idler-less optimal state in the previous Subsections~\ref{app:VarianceOptimalState} and \ref{app:CompletlyOptimalState}, compare Eqs.~\eqref{app:eq:VarianceOptimalQFI}, \eqref{app:eq:VarianceOptimalPhoton},\eqref{app:eq:VarianceOptimalMean} and \eqref{app:eq:VarianceOptimalVariance} with the above four equations.

Due to the fact that the structure of QFI and resources is identical for the idler-less and idler-assisted state, the discussion about optimality for our idler-assisted strategy is completely analogous to the idler-less case in Sections~\ref{app:VarianceOptimalState} and \ref{app:CompletlyOptimalState}. Thus, also the idler-assisted state presented here can be optimal (and also variance optimal) in the sense that 
\begin{align}
    \mathcal{F}(\vert\psi_\lambda\rangle) = \left(8 \bar g^2 + 4\Delta g^2 \right) N_S^2 +O(N_S) ,
\end{align}
given that we find appropriate modes and populate them according to our findings in  \ref{app:CompletlyOptimalState}.

\section{Optimal measurements}

\subsection{Optimal measurement for variance-optimal and completely optimal state}\label{app:OptimalMeasurementVarianceCompletly}
Let us now present an optimal measurement for the variance-optimal and the completely optimal state given in Eq.~\eqref{eq:VarianceOptimalState} and Eq.~\eqref{eq:CompletlyOptimalState}, respectively. Recall that the state is of the form $ \vert\psi\rangle  =  \hat S_{e^{-i\phi_i/2}\hat b_{i}}(r_i) \hat S_{e^{-i\phi_j/2}\hat b_{j}}(r_j) \vert 0\rangle$.
Now, as both modes $\hat b_i$ and $\hat b_j$ are orthogonal, we can homodyne both of them simultaneously. We measure
\begin{align}
    \hat x_{e^{-i(\phi_n/2+\varphi_n)}\hat b_n} = \frac{e^{-i(\phi_n/2+\varphi_n)}\hat b_n + e^{+i(\phi_n/2+\varphi_n)}\hat b_n^\dag}{\sqrt{2}} 
\end{align}
for $n=i,j$ and $\varphi_n$ is the adjustable phase that determines the quadrature to be measured. We have $ \langle\psi_\lambda\vert \hat x_{e^{-i(\phi_n/2+\varphi_n)}\hat b_n}    \vert\psi_\lambda\rangle =0 $ and
\begin{align}
      \sigma_{\text{hom},n}^2&:=\langle\psi_\lambda\vert  \hat x^2_{e^{-i(\phi_n/2+\varphi_n)}\hat b_n}  \vert\psi_\lambda\rangle =  \langle\psi_\lambda\vert \frac{ \hat b_n^2 e^{-i2(\phi_n/2+\varphi_n)}+\hat b_n^{\dag 2} e^{+i2(\phi_n/2+\varphi_n)} + \hat b_n \hat b_i^\dag + \hat b_n^\dag\hat b_n  }{2} \vert\psi_\lambda\rangle \\
     & =\langle\psi\vert \hat U_\lambda^\dag\frac{ \hat b_n^2 e^{-i2(\phi_n/2+\varphi_n)}+\hat b_n^{\dag 2} e^{+i2(\phi_n/2+\varphi_n)} + \hat b_n \hat b_i^\dag + \hat b_n^\dag\hat b_n  }{2} \hat U_\lambda\vert\psi \rangle \\
     &=  \langle\psi\vert  \frac{ \hat b_n^2 e^{-i2(\phi_n/2+\varphi_n+\lambda g_n)}+\hat b_n^{\dag 2} e^{+i2(\phi_n/2+\varphi_n+\lambda g_n)}  + \hat b_n^\dag\hat b_n + \hat b_n \hat b_n^\dag }{2}  \vert\psi \rangle \\
     &= \frac{s_n c_n e^{-i2(\varphi_n+\lambda g_n)}+s_n c_n e^{+i2(\varphi_n+\lambda g_n)}+s_n^2+s_n^2+1}{2} \\
      &= s_n c_n \cos(2(\varphi_n+\lambda g_n)) + \frac{2s_n^2+1}{2} \\
      &= \frac{\sinh(2r)\cos(2(\varphi_n+\lambda g_n)) }{2} + \frac{\cosh(2r)}{2} .
\end{align}
In the third line we used $\hat U_\lambda^\dag \hat b_n \hat U_\lambda = \hat b_n  e^{-i\lambda g_n}$.

Note that due to the fact that the state is separable with respect to the measured modes $i$ and $j$, the corresponding measurement outcomes are independent $x_n \sim \mathcal{N}(0, \sigma_{\text{hom},n}^2)$.  The FI of the joint Gaussian probability distributions is the sum of the individual mode FIs. With this, we find 
\begin{align}
    F_{\text{hom}}(\vert\psi_\lambda\rangle) &= \sum_{n= \{i,j \}} F_{\text{hom},n} (\vert\psi_\lambda\rangle) \\
    &=  \sum_{n= \{i,j \}}\frac{1}{2} \frac{(\partial_{ \lambda}  \sigma_{\text{hom},n})^2}{ \sigma_{\text{hom},n}^2}  =  \sum_{n= \{i,j \}}\frac{1}{8} \frac{ \partial_{\lambda} ( \sigma_{\text{hom},n}^2)}{ \sigma_{\text{hom},n}^4} \\
    &= \sum_{n= \{i,j \}} \frac{2 g_n^2 \sinh ^2(2 r_n) \sin ^2(2 (g_n \lambda +\varphi_n ))}{(\cosh (2 r_n)+\sinh (2 r_n) \cos (2
   (g_n \lambda +\varphi_n )))^2} .
\end{align}
Now, the optimal measurement phases $\varphi_n^{\text{opt}}$ can be easily determined \cite{monras2006}, and we find
\begin{align}
    \varphi_n^{\text{opt}} = \frac{1}{2} \arccos(\tanh (2 r_n))-g_n \lambda +\frac{\pi}{2}
\end{align}
with which we find the optimal FI
\begin{align}
   \max_{\varphi_{i},\varphi_{j}} F_{\text{hom}} (\vert\psi_\lambda\rangle) = 8 g_{i}^2 s_{i}^2 (s_{i}^2 +1) + 8 g_{j}^2 s_{j}^2 (s_{j}^2 +1) 
\end{align}
which is identical to the QFIs in Eq.~\eqref{app:OptimalStateQFIeigenvalues} and Eq.~\eqref{app:eq:VarianceOptimalQFI} of the optimal state and the variance-optimal state, from which it follows that our proposed homodyne scheme is an optimal measurement for these states.

\subsection{Effect of loss and noise}\label{app:Loss}
Let us also  consider the effect of photon loss. We model photon loss as a beam splitter of transmissivity $\eta_n$ right before homodyne detection that mixes the signal mode with a zero-mean environmental mode in a thermal (Gaussian) state.
In that case, the mean of the homodyne measurement is still zero and the variance   is given by
\begin{align}
    \sigma_{\text{hom},n}^2 (\eta_n) = \eta_n \sigma_{\text{hom},n}^2(0) + (1-\eta_n) \sigma_{\text{env},n}^2 ,
\end{align}
where $\sigma_{\text{hom},n}^2(0) = \sigma_{\text{hom},n}^2$ is the ideal case.

Now, the calculations for the FI are analogous to above, and we find
\begin{align}
    F_{\text{hom}}(\vert\psi_\lambda\rangle) &=    \sum_{n=\{i, j\}}\frac{1}{8} \frac{ \partial_{ \lambda} (\eta_n \sigma_{\text{hom},n} ^2(0))}{\sigma_{\text{hom},n}^4(\eta)}  \\
    &=  \sum_{n=\{i,j \}} \frac{2 \eta_n^2 g_n^2 \sinh ^2(2 r_n) \sin ^2(2 (g_n  \lambda +\varphi_n ))}{( \eta_n \sigma_{\text{hom},n}^2(0) + (1-\eta_n) \sigma_{\text{env},n}^2 )^2} .
\end{align}
The optimal phase is given by
\begin{align}
     \varphi_n^{\text{opt}} = \frac{1}{2} \arccos\left( \frac{\eta_n\sinh(2r_n)}{\eta_n \cosh(2r_n)+(1-\eta_n)\sigma^2_{\text{env},n}}\right)-g_n \lambda +\frac{\pi}{2}.
\end{align}
 For the optimal FI under loss and thermal noise, we have
\begin{align}
    \max_{\varphi_{i},\varphi_{j}} F_{\text{hom}}( \vert\psi_\lambda \rangle) = \sum_{n=\{i,j\}} \frac{2 \eta_n^2 g_n^2 \sinh^2 (2r_n)}{( \eta_n \cosh(2r_n)+(1-\eta_n) \sigma^2_{\text{env},n})^2+\eta_n^2\sinh(2r_n))^4} .
\end{align}
We thus have obtained a general expression for the FI of an optimal homodyne measurement when loss and noise is present. To gain further insights, let us assume $\eta_{i}=\eta_{j} =: \eta$, which is, for example, approximately the case in lidar and radar scenarios.

For the case $\sigma_{\text{env},n} = 1$, which corresponds to an environment in the vacuum state, and the case $\sinh^2(r_n)\gg 1$ for $n=\{i,j\}$, we have
\begin{align}
    \max_{\varphi_{i},\varphi_{j}} F_{\text{hom}}(\vert\psi_\lambda\rangle ) &\approx \frac{2\eta}{1-\eta} \left[ g_{i}^2 \sinh^2(r_{i}) + g_{j}^2 \sinh^2(r_{j}) \right] \\
    &= \frac{2\eta}{1-\eta} \left[ \bar g^2 +\Delta g^2 \right] N_S ,
\end{align}
where we used $\sinh^2(2r)\approx 4\sinh^2(r)$ and $\cosh(2r)\approx 2 \sinh^2(r)$.
This should be compared to the performance of a classical coherent state protocol, which achieves $ \mathcal{F}(\vert\psi_\lambda^{\text{coh}}\rangle) = 4\eta  \left[ \bar g^2 +\Delta g^2 \right] N_S$. We have $\max_{\varphi_{i},\varphi_{j}} F_{\text{hom}}(\vert\psi_\lambda\rangle) \geq \mathcal{F}(\vert\psi_\lambda^{\text{coh}}\rangle)$, and thus a quantum advantage for transmissivities $\eta \geq 1/2$. Thus, our state that is optimal for $\eta=1$ and shows good resilience even when photon loss is present.

Let us now also consider the effect of noise. Assume that the environmental modes are in thermal states with average photon number $N_B$. We then have $\sigma_{\text{env},n} = 2 N_B/(1-\eta) +1 $ (the factor $1/(1-\eta)$ ensures that we have the same thermal noise photon number for all $\eta$). Assume $N_B\gg 1$ to be such that $\sigma_{\text{env},n} \approx 2 N_B/(1-\eta)$. We then find
\begin{align}
     &\max_{\varphi_{i},\varphi_{j}} F_{\text{hom}}(\Delta\lambda =0) = 8\eta^2 \sum_{n=n_0,n_1}  \frac{g_n^2 \sinh^2(r_n)}{8\eta \sinh^2 (r_n) N_B+4N_B^2}\\
     &= \begin{cases}
         \frac{\eta}{N_B} \left[ g_{i}^2 \sinh^2(r_{i}) + g_{n_1}^2 \sinh^2(r_{j}) \right] = \frac{\eta N_S }{N_B}  \left[ \bar g^2 +\Delta g^2 \right] , \quad \text{for } \eta \sinh^2(r_{n}) \gg N_B \\
         2\eta^2  \left[ g_{i}^2 \frac{\sinh^4(r_{i})}{N_B^2} + g_{j}^2 \frac{\sinh^4(r_{j})}{N_B^2} \right] , \quad\quad\quad\quad\quad\quad\quad\quad\quad\quad \text{for } \eta \sinh^2(r_{n}) \ll N_B
     \end{cases}
\end{align}
with $n=\{ i,j\}$.

The upper case is a factor of $2$ lower than  the classical coherent state performance $ \mathcal{F}(\vert\psi_\lambda\rangle) \approx 2 \frac{\eta N_S }{N_B}  \left[ \bar g^2 +\Delta g^2 \right] $. The performance in the second case is much worse as the FI scales as $\sim (\frac{\eta N_S }{N_B})^2  \left[ \bar g^2 +\Delta g^2 \right] $ with $\eta N_S/N_B \ll 1$. Thus, our state that is otherwise optimal for $\eta=1$ and $N_B=0$, is inferior in the case of $\eta <1$ and $N_B\gg 1$ to the classical coherent state.

\subsection{Optimal homodyne measurement for mean-optimal state}\label{app:OptimalMeasurementMean}
Let us now propose an optimal measurement for the mean-optimal single-mode squeezed state $    \vert \psi\rangle =  \hat V \hat S_{\hat a_n}(r_n)\vert 0\rangle =   \hat S_{\hat V\hat a_n \hat V^\dag}(r_n)\vert 0\rangle$. Let us assume we have a sufficiently good prior guess $\lambda_{\text{prior}}$ of the parameter $\lambda$. With out loss of generality, less us assume this prior guess is $\lambda_{\text{prior}}=0$ and $\lambda \approx 0$.
We then have $\hat U_\lambda \simeq \hat{\mathds{1}}-i\lambda \hat G$.

Now, we propose to homodyne the mode $\hat c_n= \hat V\hat a_n\hat V^\dag$, which corresponds to measuring the operator
\begin{align}
    \hat x_{\hat c_ne^{-i\varphi_n}} = \frac{\hat c_i e^{-i\varphi_n}+\hat c_i^\dag e^{+i\varphi_n}}{\sqrt{2}} .
\end{align}
We  find $\langle\psi_\lambda\vert \hat x_{e^{-i(\phi_n/2+\varphi_n)}\hat c_n}    \vert\psi_\lambda\rangle =0$ and 
\begin{align}
      \sigma_{\text{hom},n}^2&:=\langle\psi_\lambda\vert  \hat x^2_{e^{-i(\phi_n/2+\varphi_n)}\hat b_n}  \vert\psi_\lambda\rangle 
      \approx \frac{\sinh(2r)\cos(2(\varphi_n+\tilde G_{nn})) }{2} + \frac{\cosh(2r)}{2}   .
\end{align}
Here, the approximation sign means that we neglect terms of order $\lambda^2$ and higher. Analogously to the previous Sec.~\ref{app:OptimalMeasurementVarianceCompletly}, we find
\begin{align}
    \max_{\varphi_{n} }  F_{\text{hom}} (\vert\psi_\lambda\rangle) = 8 \tilde G_{nn}^2 s_{n}^2 (s_{n}^2 +1) = 8\bar g^2 N_S^2+ 8\bar g^2 N_S ,
\end{align}
where the expression in terms of resources after the last equals sign follows from our results in Sec.~\ref{app:MeanOptimalState}. Compare this to the QFI of this state $\mathcal{F}(\vert\psi_\lambda) = 8\bar g^2 N_S^2+ 8(\bar g^2+\Delta g^2) N_S $. 
We see that the single homodyne measurement we proposed is only sensitive to the mean resource $\bar g$, and not the the variance resource $\Delta g$. In the limit $N_S\gg 1$, the single-mode homodyne measurement is nearly optimal.

\section{Comparison to previous approaches in the literature}\label{app:PreviousLiterature}
In Sec.~\ref{sec:PreviousLiterature} we discussed the approach of Ref.~\cite{giovannetti2006quantum} for (optimal) states  that maximize the QFI for a parameter imprinting unitary $\hat U_\lambda = e^{-i\lambda \hat G}$ in a finite dimensional Hilbert space, where $\hat G$ is a finite dimensional Hermitian operator. Their results do not apply to our case of mode parameter estimation as in that case we are dealing with an infinite dimensional Hilbert space.

To compare their approach with the one presented here, let us introduce a photon number truncation to turn the infinitely dimensional Hilbert space  encountered in mode parameter estimation into a finite one:
\begin{align}
    \hat G = \sum_{i=1}^M g_i \hat b_i^\dag \hat b_i = \sum_{i=1}^M g_i \sum_{n=0}^\infty n \vert n\rangle_i\langle n\vert \rightarrow \sum_{i=1}^M g_i \sum_{n=0}^{N_{\text{cut}}} n \vert n\rangle_i\langle n\vert 
\end{align}
where $\vert n\rangle_i = \hat{b}_i^{\dag n}\vert 0\rangle/\sqrt{n!}$ are Fock states of mode $i$. Without loss of generality, assume that $g_1 = g_{\text{min}}$ and $g_M = g_{\text{max}}$.
Thus, the initial infinite Hilbert space has been truncated to a finite dimensional one, which lets us apply the results of Ref.~\cite{giovannetti2006quantum}. According to  Ref.~\cite{giovannetti2006quantum}, the (optimal) state that maximizes the QFI takes the form
\begin{align}
    \vert\psi\rangle = \frac{\vert N_{\text{cut}}\rangle_1+\vert N_{\text{cut}}\rangle_M}{\sqrt{2}}
\end{align}
which is a superposition of the generator's eigenstates corresponding to the minimal and maximal eigenvalues.

Ref.~\cite{giovannetti2006quantum} did not introduce the notion of resources, but let us calculate the resources of this state according to our formalism and the express the QFI in terms of these. 
We find $N_S = \sum_i \langle\psi\vert \hat b_i^\dag \hat b_i \vert \psi\rangle = N_{\text{cut}}$ for the signal photon number and 
\begin{align}
      \frac{\langle\psi\vert \hat b_i^\dag \hat b_i \vert \psi\rangle}{N_S} =  \frac{\delta_{i,1}+\delta_{i,M}}{2}  .
\end{align}
for the generator intensity. Note that this state has the same generator intensity distribution $\frac{\langle\psi\vert \hat b_i^\dag \hat b_i \vert \psi\rangle}{N_S}$ as our Gaussian variance-optimal state.
We further find
\begin{align}
\bar g&=   \sum_i  g_i^2 \frac{\langle\psi\vert \hat b_i^\dag \hat b_i \vert \psi\rangle}{N_S}  = \frac{g_{\text{min}}+g_{\text{max}}}{2}  \\
    \Delta g^2 + \bar g^2 &= \sum_i  g_i^2 \frac{\langle\psi\vert \hat b_i^\dag \hat b_i \vert \psi\rangle}{N_S} = \frac{g_{\text{min}}^2+g_{\text{max}}^2}{2} .
\end{align}
From this follows $\Delta g^2 = \frac{1}{4}(g_M-g_1)^2$. Now, from Ref.~\cite{giovannetti2006quantum} we know that $\mathcal{F}(\hat U_\lambda\vert \psi\rangle) =(g_{\text{max}}-g_{\text{min}})^2  N_{\text{cut}}^2 $, from which follows 
\begin{align}
    \mathcal{F}(\hat U_\lambda\vert \psi\rangle) =4 \Delta g^2 N_{S}^2   .
\end{align}
Thus, the optimal state proposed in Ref.~\cite{giovannetti2006quantum} when applied to truncated mode parameter estimation is variance optimal. There appear no mean contribution $\sim \bar g^2 N_S^2$ in the . Thus, their state is suboptimal according to our framework.

\section{Continuous mode bases}\label{app:sec:ContinuousModes}
In the main text, we encountered continuous mode bases $\hat D(z)$ satisfying the commutation relation $[\hat D(z),\hat D^\dag(z')] = \delta(z-z')$. We keep the discussion general and do not assign a specific physical meaning to $\hat D(z)$: the variable $z \in \mathbb{R}$ may represent time, frequency, a spatial coordinate, momentum $k$, or any other continuous degree of freedom.

These continuous mode operators can be expanded in terms of discrete ones as
\begin{align}
    \hat D(z) = \sum_{n=0}^\infty \hat d_n\, \Psi_n(z),
\end{align}
where $\Psi_n(z)$ are the mode functions associated with the discrete basis $\{\hat d_n\}_n$ satisfying $[\hat d_n, \hat d_m^\dag] = \delta_{nm}$ \cite{blow_continuum_1990}. The mode functions form a complete orthonormal set, $\int \mathrm{d}z\, \psi_n^*(z)\,\psi_m(z) = \delta_{nm}$, allowing one to pass freely between a continuous and a discrete mode description. Note, however, that the discrete description still involves countably infinitely many modes.

A second important continuous mode basis is $\hat F(p)$, with $p \in \mathbb{R}$, related to $\hat D(z)$ by a Fourier transform:
\begin{align}
    \hat D(z) &= \int \mathrm{d}p\, \hat F(p)\, \frac{e^{-ipz}}{\sqrt{2\pi}}, \label{app:eq:ContinuousD}\\
    \hat F(p) &= \int \mathrm{d}z\, \hat D(z)\, \frac{e^{+ipz}}{\sqrt{2\pi}}. \label{app:eq:ContinuousF}
\end{align}
A multimode squeezed state may be written equivalently in any of these three bases as
\begin{align}
    \vert\psi\rangle &= \exp\!\left(\iint\mathrm{d}z\,\mathrm{d}z'\, f(z,z')\,\hat D^\dag(z)\hat D^\dag(z') - \mathrm{h.c.}\right)\vert 0\rangle, \label{app:eq:MultiModeSqueezedStateZ}\\
    &= \exp\!\left(\iint\mathrm{d}p\,\mathrm{d}p'\, \tilde{f}(p,p')\,\hat F^\dag(p)\hat F^\dag(p') - \mathrm{h.c.}\right)\vert 0\rangle,\\
    &= \exp\!\left(\sum_{n,m} f^{(\psi)}_{nm}\,\hat d_n^\dag \hat d_m^\dag - \mathrm{h.c.}\right)\vert 0\rangle,
\end{align}
with $\tilde{f}(p,p') = \iint\mathrm{d}z\,\mathrm{d}z'\, f(z,z')\,e^{ipz}e^{ip'z'}/(2\pi)$ and $f^{(\psi)}_{nm} = \iint\mathrm{d}z\,\mathrm{d}z'\, f(z,z')\,\psi_n^*(z)\,\psi_m^*(z')$.

The shift transformations are of particular interest and are studied extensively here. The corresponding unitaries are $\hat U_{a_z} = e^{-ia_z \hat G_{a_z}}$ and $\hat U_{a_p} = e^{-ia_p \hat G_{a_p}}$, with generators
\begin{align}
    \hat G_{a_z} &= \int_{-\infty}^\infty \mathrm{d}p\, p\, \hat F^\dag(p)\hat F(p), \label{app:eq:ShiftTransformGeneratorZ}\\
    \hat G_{a_p} &= \int_{-\infty}^\infty \mathrm{d}z\, z\, \hat D^\dag(z)\hat D(z). \label{app:eq:ShiftTransformGeneratorP}
\end{align}
Let us now examine how states transform under these unitaries. One verifies directly that $\hat U_{a_z}^\dag \hat F(p)\hat U_{a_z} = \hat F(p)\,e^{-ipa_z}$ and $\hat U_{a_p}^\dag \hat D(z)\hat U_{a_p} = \hat D(z)\,e^{-iza_p}$. Combining these with Eqs.~\eqref{app:eq:ContinuousD}--\eqref{app:eq:ContinuousF} yields $\hat U_{a_z}^\dag \hat D(z)\hat U_{a_z} = \hat D(z+a_z)$ and $\hat U_{a_p}^\dag \hat F(p)\hat U_{a_p} = \hat F(p-a_p)$. Using $\hat U_{-a_z} = \hat U_{a_z}^\dag$ and $\hat U_{-a_p} = \hat U_{a_p}^\dag$, it further follows that $\hat U_{a_z}\hat D(z)\hat U_{a_z}^\dag = \hat D(z-a_z)$ and $\hat U_{a_p}\hat F(p)\hat U_{a_p}^\dag = \hat F(p+a_p)$. Finally, applying these results and performing a change of integration variables, one finds that $\hat U_{a_z}$ acting on $\vert\psi^{\text{sq}}\rangle$ induces $f(z,z') \to f(z+a_z, z'+a_z)$, while $\hat U_{a_p}$ induces $\tilde{f}(p,p') \to \tilde{f}(p-a_p, p'-a_p)$.

\subsection{Procedure to reduce infinitely many modes to finitely many modes}\label{app:subsec:TruncationProcedure}
As discussed in the main text, we frequently encounter situations involving infinitely many modes, whereas our formalism was developed for a finite number of modes. We now demonstrate how the formalism can nevertheless be applied in this setting.

Since experimental constraints confine the relevant range of the variable $z$ to a finite interval $[z_{\text{min}}, z_{\text{max}}]$, we partition this interval into $M$ equally sized bins of width $\delta z := (z_{\text{max}} - z_{\text{min}})/M$ and define the discretized annihilation operators
\begin{align}
    \hat d_n := \int_{z_{\text{min}}+\delta z\,(n-1)}^{z_{\text{min}}+\delta z\, n} \mathrm{d}z\,\frac{\hat D(z)}{\sqrt{\delta z}}, \qquad n \in \{1,\ldots,M\},
\end{align}
which satisfy $[\hat d_n, \hat d_m^\dag] = \delta_{nm}$, yielding a finite set of $M$ modes with the desired commutation relations.

Discretized versions of $\hat F(p)$ are obtained as follows. The $p$-interval $[p_{\text{min}}, p_{\text{max}}]$ has length $p_{\text{max}} - p_{\text{min}} = 2\pi/\delta z = 2\pi M/(z_{\text{max}} - z_{\text{min}})$, with bin size $\delta p = 2\pi/(z_{\text{max}} - z_{\text{min}})$. Setting $p_k = p_{\text{min}} + (k-1)\,\delta p$ for $k \in \{1,\ldots,M\}$, we define the discrete modes $\{\hat f_k\}_k$ via the discrete Fourier relations
\begin{align}
    \hat d_n &= \sum_{k=1}^M \hat f_k\, \frac{e^{-iz_n p_k}}{\sqrt{M}}, \\
    \hat f_k &= \sum_{n=1}^M \hat d_n\, \frac{e^{+iz_n p_k}}{\sqrt{M}},
\end{align}
which are the discrete analogues of Eqs.~\eqref{app:eq:ContinuousD}--\eqref{app:eq:ContinuousF}, with $z_n = z_{\text{min}} + n\,\delta z$. The transformation matrix $e^{-iz_n p_k}/\sqrt{M}$ is unitary, as can be seen by writing
\begin{align}
\frac{e^{-iz_n p_k}}{\sqrt{M}} = \frac{e^{-i\frac{2\pi}{M}(n-1)(k-1)}}{\sqrt{M}}\, e^{-i\left(z_{\min}p_{\min} + z_{\min}\delta p\,(k-1) + p_{\min}\delta z\,(n-1)\right)}.
\end{align}
The first factor is the standard $M$-point discrete Fourier transform matrix, which is unitary. The remaining exponential factors depend on at most one index, or are global phases, and therefore correspond to multiplication by diagonal unitary matrices on the rows and columns together with an overall phase — operations that preserve unitarity. It follows that $[\hat f_k, \hat f_q^\dag] = \delta_{kq}$.

For the shift transformations studied here, the continuous generators of Eqs.~\eqref{app:eq:ShiftTransformGeneratorZ}--\eqref{app:eq:ShiftTransformGeneratorP} are approximated by
\begin{align}
    \hat G_{a_z} &\approx \sum_{n=1}^M p_n\, \hat f_n^\dag \hat f_n, \label{app:eq:ShiftTransformGeneratorZdisc}\\
    \hat G_{a_p} &\approx \sum_{n=1}^M z_n\, \hat d_n^\dag \hat d_n. \label{app:eq:ShiftTransformGeneratorPdisc}
\end{align}
The quality of this approximation improves with increasing $M$ and decreasing bin width, both of which can be adjusted freely. Correspondingly, the multimode squeezed state of Eq.~\eqref{app:eq:MultiModeSqueezedStateZ} is well approximated in the discretized basis as
\begin{align}
    \vert\psi^{\text{sq}}\rangle &\approx \exp\!\left(\sum_{n,m=1}^M f_{nm}\,\hat d_n^\dag \hat d_m^\dag - \mathrm{h.c.}\right)\vert 0\rangle, \\
    &\approx \exp\!\left(\sum_{n,m=1}^M \tilde{f}_{nm}\,\hat f_n^\dag \hat f_m^\dag - \mathrm{h.c.}\right)\vert 0\rangle,
\end{align}
with $f_{nm} = f(z_n, z_m)$ and $\tilde{f}_{nm} = \tilde{f}(p_n, p_m)$. This procedure reduces an infinite continuous model to a finite discrete one, justifying the application of our formalism to continuous mode settings.

\subsection{Determining the generator matrix}\label{app:HGmodeGenerator}
Given only the mode functions $\psi_n(z;\lambda)$, it is not immediately clear how to recover the corresponding unitary mode transformation $\hat U_\lambda$. However, the generator $\hat G = \sum_{n,m=1}^M G_{nm}\,\hat c_n^\dag \hat c_m$ can be computed directly from the mode functions, and this is sufficient to evaluate the QFI.

To this end, we define
\begin{align}
    \hat d_n(\lambda) := \hat U_\lambda\,\hat d_n\,\hat U_\lambda^\dag
\end{align}
as the parameter-dependent mode operators in the Schrödinger picture, in contrast to the Heisenberg picture transformation rules of Eq.~\eqref{eq:HeisenbergTranformModeOperators}. Following Ref.~\cite{Gessner:23}, one has
\begin{align}
    \partial_\lambda\vert\psi_\lambda\rangle &= \sum_{m=1}^M \partial_\lambda\hat d_m^\dag(\lambda)\,\hat d_m(\lambda)\,\vert\psi_\lambda\rangle \\
    &= -i\sum_{n,m=1}^M G_{nm}\,\hat d_n^\dag(\lambda)\,\hat d_m(\lambda)\,\vert\psi_\lambda\rangle,
\end{align}
from which it follows that
\begin{align}
    \partial_\lambda\hat d_m^\dag(\lambda) = -i\sum_{n=1}^M G_{nm}\,\hat d_n^\dag(\lambda).
\end{align}
Using $\partial_\lambda\hat d_n^\dag(\lambda) = \int\mathrm{d}z\,[\partial_\lambda\Psi_n(z;\lambda)]\,\hat D^\dag(z)$, this yields the key relation
\begin{align}\label{app:eq:GeneratorModeRelation}
    \partial_\lambda\Psi_m(z;\lambda) = -i\sum_{n=1}^M G_{nm}\,\Psi_n(z;\lambda).
\end{align}
In practice, $\partial_\lambda\Psi_n(z;\lambda)$ often satisfies convenient recurrence relations that allow $G_{nm}$ to be computed straightforwardly. The generator matrix elements can also be obtained directly via
\begin{align}
    -iG_{nm} = \int\mathrm{d}z\,\Psi_n^*(z;\lambda)\,\partial_\lambda\Psi_m(z;\lambda).
\end{align}

We now apply this framework to the Hermite–Gauss (HG) modes, which form a complete orthonormal basis on $L^2(\mathbb{R})$. They are defined as
\begin{align}
    \Phi_n(z;\,z_0,p_0,\sigma_z,\theta) = \left(\frac{1}{2\sigma_z^2\pi}\right)^{1/4} \sqrt{\frac{2^{-n}}{n!}}\,H_n\!\left(\frac{z-z_0}{\sqrt{2}\,\sigma_z}\right) e^{-\frac{1}{2}\left(\frac{z-z_0}{\sqrt{2}\,\sigma_z}\right)^2} e^{-ip_0(z-z_0)}\,e^{-i\theta}.
\end{align}
Consider the mode transformation
\begin{align}
    \hat U_{a_z} = \exp\!\left(-ia_z\int\mathrm{d}p\, p\,\hat D^\dag(p)\,\hat D(p)\right),
\end{align}
which generates a shift in the $z$ domain and acts on the mode operators as
\begin{align}
    \hat U_{a_z}^\dag\,\hat D(p)\,\hat U_{a_z} = \hat D(p)\,e^{-ipa_z},
\end{align}
from which it follows that $\hat U_{a_z}^\dag\,\hat D(z)\,\hat U_{a_z} = \hat D(z+a_z)$. The transformed HG mode operators are then
\begin{align}
    \hat d_n(a_z) = \hat U_{a_z}\,\hat d_n\,\hat U_{a_z}^\dag &= \int\mathrm{d}z\,\Phi_n^*(z)\,\hat U_{a_z}\,\hat D(z)\,\hat U_{a_z}^\dag \\
    &= \int\mathrm{d}z\,\Phi_n^*(z)\,\hat D(z-a_z) \\
    &= \int\mathrm{d}z\,\Phi_n^*(z+a_z)\,\hat D(z),
\end{align}
and correspondingly $\hat d_n^\dag(a_z) = \int\mathrm{d}z\,\Phi_n(z+a_z)\,\hat D^\dag(z)$, so that the mode functions transform as $\Phi_n(z) \to \Phi_n(z+a_z)$, i.e., by a simple shift. Differentiating with respect to $a_z$ and using the recurrence relation for HG modes gives
\begin{align}
    \partial_{a_z}\Phi_m(z+a_z) = \Phi_m'(z+a_z) = \frac{1}{\sqrt{2}\,\sigma_z}\left(\sqrt{\frac{m}{2}}\,\Phi_{m-1}(z+a_z) - \sqrt{\frac{m+1}{2}}\,\Phi_{m+1}(z+a_z)\right) - ip_0\,\Phi_m(z+a_z).
\end{align}
Comparing with Eq.~\eqref{app:eq:GeneratorModeRelation}, the generator matrix elements are
\begin{align}
    -iG_{nm} &= \frac{1}{\sqrt{2}\,\sigma_z}\left(\sqrt{\frac{m}{2}}\,\delta_{n,m-1} - \sqrt{\frac{m+1}{2}}\,\delta_{n,m+1}\right) - ip_0\,\delta_{n,m}, \\
    \Leftrightarrow\quad G_{nm} &= \frac{i}{\sqrt{2}\,\sigma_z}\left(\sqrt{\frac{m}{2}}\,\delta_{n,m-1} - \sqrt{\frac{m+1}{2}}\,\delta_{n,m+1}\right) + p_0\,\delta_{n,m}. \label{eq:GeneratorMatrixHG}
\end{align}

\subsection{QFI for a state that appears in product form in the HG mode basis}\label{app:subsec:QFIofHGstate}
Let us consider a state that takes on a simple form in the HG mode basis
\begin{align}
    \vert\psi\rangle = [\otimes_{n=1}^M e^{-i \frac{-\phi_n}{2} \hat c_n^\dag \hat c_n} \hat S_{\hat c_n} (r_n)] \vert 0\rangle 
\end{align}
with $\hat c_n^\dag = \int \mathrm{d}z \,\Phi_n (z) \hat D^\dag (z)$.
The states squeezing matrix is diagonal in the HG mode basis.
Note that $\hat V  =\otimes_n e^{-i \frac{-\phi_n}{2} \hat c_n^\dag \hat c_n} $, where $\phi_n$ is related to the squeezing angle  of mode $n$.   We   can thus rewrite the state in various ways
\begin{align}
    \vert\psi\rangle &=  \hat V [\otimes_n\hat S_{\hat c_n} (r_n) ]\vert 0\rangle = [\otimes_n\hat S_{\hat V\hat c_n\hat V^\dag} (r_n) ]\vert 0\rangle\\
    &= [\otimes_n\hat S_{e^{-i\phi_n/2}\hat c_n} (r_n) ]\vert 0\rangle ,
\end{align}
which follows from $\hat V \hat a_n \hat V^\dag = \sum_i V^{*}_{ni}\hat a_i$ with $V_{ni}=e^{-i(-\phi_n)/2} \delta_{ni}$.

Let us now calculate the QFI for $\hat U_{a_z}\vert\psi\rangle = \vert\psi_{a_z}\rangle$. We have
\begin{equation} 
     \mathcal{F}(\vert \psi_{a_z}\rangle) = 4\sum_{ij} s_i \tilde G_{ij} c_j s_j \tilde G_{ji}^* c_i +4\sum_{ij} c_i \tilde G_{ij} s_j s_j \tilde G_{ji} c_i 
\end{equation}
For simplicity, we assume that only the first two modes are populated, that is, $s_n = 0$ and $c_n=1$ for all $n\neq 0,1$. Using Eq.~\eqref{eq:GeneratorMatrixHG} and $\tilde G = V^\dag G V$, we find
\begin{align}
  \mathcal{F}(\vert \psi_{a_z}\rangle)  =   \frac{8 p_0^2 \sigma_z ^2 \left(c_0^2 s_0^2+c_1^2
   s_1^2\right)+\left(c_0^2+2\right) s_1^2-2 c_0 c_1
   s_0 s_1 \cos (\phi_0  -\phi_1)+c_1^2 s_0^2}{\sigma_z ^2} .
\end{align}
Let us also calculate the resource parameters. We have 
\begin{align}
    N_S = s_0^2 + s_1^2 .
\end{align}
\begin{align}
    \bar g  = \frac{1}{N_S}\sum_{n m} \tilde G_{nm}   \left( s_n^2 \delta_{nm} +\alpha_n^* \alpha_m \right) = +p_0\frac{ s_0^2 + s_1^2}{s_0^2+s_1^2} = +p_0
\end{align}
  and
\begin{align}
   \left( \Delta g^2 +\bar g^2 \right)  &=  \frac{1}{N_S}\sum_{n m} (\tilde G^2)_{nm}   \left( s_n^2 \delta_{nm} +\alpha_n^* \alpha_m \right) \\
   &=   \frac{1}{N_S}\sum_{n m} \sum_i \tilde G_{ni}\tilde G_{im}   \left( s_n^2 \delta_{nm} +\alpha_n^* \alpha_m \right)
\end{align}
so that 
\begin{align}
    \Delta g^2   &=   \frac{1}{N_S}\sum_{n m} \sum_i \tilde G_{ni}\tilde G_{im}   \left( s_n^2 \delta_{nm} +\alpha_n^* \alpha_m \right) - \bar g^2 \\
    &= \frac{s_0^2+3 s_1^2}{4 \sigma_z ^2 \left(s_0^2+s_1^2\right)} .
\end{align}

For simplicity, let us assume that $s_0 = s_1$. We then have
\begin{align}
    \bar g^2 = p_0^2
\end{align}
and 
\begin{align}
    \Delta g^2 = \frac{1}{2\sigma_z^2}.
\end{align}
For the QFI, we find with $\phi_0-\phi_1 = \pi$ (these phases clearly maximize the QFI):
\begin{align}
 \mathcal{F}(\vert \psi_\lambda\rangle) &=   2 N_S \left(2 \bar g^2 (N_S+2)+\Delta g^2
   (N_S+3)\right) \\
   &= \left(4 \bar g^2 + 2\Delta g^2 \right) N_S^2 + \underbrace{\left(8 \bar g^2 +6 \Delta g^2 \right)N_S}_{=O(N_S)} .
\end{align}
We see that this state is suboptimal. Both prefactors of the mean and variance terms are a factor of $1/2$ worse than the maximum possible value.

This example demonstrates how to calculate the QFI and the resource parameters for a generic state. This state is suboptimal as seen by the prefactors $4$ and $2$ instead of the maximum values $8$ and $4$.

\section{Analytical calculation of QFI for a regularized optimal state}\label{app:QFIregularized}
Let us calculate the QFI of a regularized state. In the main text in Sec.~\ref{sec:OptimalStatesTimeFrequency} we introduced such a regularized state for time estimation with squeezing matrix given in Eq.~\eqref{eq:OptimalRegularizedState}.
In this Appendix, however, we are going to be more general and instead consider the estimation of shifts $a_z$ in the $z$ domain induced by the unitary mode transform $\hat U_{a_z} = e^{-a_z i \hat G_{a_z}}$. Here, $z$ is a general variable that could stand for time, frequency, space, $k$ space, etc.

Let us consider a squeezing matrix as in Eq.~\eqref{eq:OptimalRegularizedState} in the $z$ and $p$ domains:
\begin{align}
    f(z, z') &= r_{-} \Phi_{-,0}(z) \Phi_{-,0} (  z') + r_{+} \Phi_{+,0}(z) \Phi_{+,0} (  z') \label{app:eq:SqueezingFunctionRegularizedSpace}\\
    \tilde f(p,p') &= r_{-} \tilde \Phi_{-,0}(p) \tilde \Phi_{-,0} (  p') + r_{+} \tilde \Phi_{+,0}(p) \tilde \Phi_{+,0} (  p') 
\end{align}
where we denote
\begin{align}
    \Phi_{\pm,n} (z) := \Phi_{n} (z;  z_0 \pm \delta_{\pm,z},  p_0\pm \delta_{\pm,p},\sigma_z,\theta) \label{app:eq:PhiPmDef}
\end{align}
with $\delta_{\pm,k} \geq 0$ and $\delta_{\pm,p} \geq 0$. Thus, the HG mode functions $\Phi_{\pm,n}$ (defined in Eq.~\eqref{eq:DefinitionHGmode}) are centered around $ z_0 \pm \delta_{\pm,z}$ in $z$ space and centered around $ p_0\pm \delta_{\pm,p}$ in $p$ space. Also, from Eq.~\eqref{eq:OptSqueezing1} and \eqref{eq:OptSqueezing2} we know that $r_-\leq r_+$ for optimal states. Let us here additionally assume the stricter $r_- < r_+$ so that the case $r_-\neq r_+$ is excluded.

Now, to compute the QFI given in Eq.~\eqref{eq:QFIcomponents} of the above state for the mode transform with generator $\hat G_{a_z} = \int\mathrm{d}p \, p \hat F^\dag (p)\hat F(p)$, we have to determine the components of the generator matrix $\tilde G = V^\dag G V$ in the Schmidt mode basis, that is, the basis in which the state appears disentangled.

So let us first determine the Schmidt decomposition of $f(z,z')$. First we denote
\begin{align}
    \mathcal{S} = \int \mathrm{d}z \, \Phi_{+,0}^*(z)\Phi_{-,0}(z) .
\end{align}
Now, let us introduce the orthonormal vectors
\begin{align}
    u_1(z) &= \Phi_{+,0}(z) \\
    u_2(z) &= \frac{\Phi_{-,0}(z)- \vert\mathcal{S}\vert\Phi_{+,0}(z)}{\sqrt{1-\vert\mathcal{S}\vert^2}} .
\end{align}
We can invert this and obtain
\begin{align}
    \Phi_{+,0}(z) &= u_1(z) \\
    \Phi_{-,0}(z) &= u_2 (z) \sqrt{1-\vert\mathcal{S}\vert^2} + \vert\mathcal{S}\vert u_1(z).
\end{align}
We can now rewrite $f(z,z')$ in the orthonormal basis
\begin{align}
    f(z,z') &= r_- \left[ u_2 (z) \sqrt{1-\vert\mathcal{S}\vert^2} + \vert\mathcal{S}\vert u_1(z)\right] \left[ u_2 (z') \sqrt{1-\vert\mathcal{S}\vert^2} + \vert\mathcal{S}\vert u_1(z')\right] + r_+ u_1(z) u_1(z') \\
    &=  \left( r_+ + r_{-} \vert \mathcal{S}\vert^2\right) u_1(z) u_1(z') + r_-\left( 1-\vert\mathcal{S}\vert^2\right) u_2(z)u_2(z') \\
    &\quad + r_-\left( \vert \mathcal{S}\vert \sqrt{1-\vert\mathcal{S}\vert^2}\right) u_1(z)u_2(z') +r_- \left( \vert \mathcal{S}\vert \sqrt{1-\vert\mathcal{S}\vert^2}\right) u_1(z')u_2(z) \\
    &= \begin{pmatrix}
        u_1(z')\\ u_2(z')
    \end{pmatrix}^T \underbrace{\begin{pmatrix}
        r_+ + r_{-} \vert \mathcal{S}\vert^2  & r_-\vert \mathcal{S}\vert \sqrt{1-\vert\mathcal{S}\vert^2}\\
       r_- \vert \mathcal{S}\vert \sqrt{1-\vert\mathcal{S}\vert^2} & r_- \left( 1-\vert\mathcal{S}\vert^2\right)
    \end{pmatrix}}_{=A}\begin{pmatrix}
        u_1(z)\\ u_2(z)
    \end{pmatrix} .
\end{align}
It is straightforward to show that the matrix' $A$ eigenvalues are
\begin{align}
   r_1 &= \frac{1}{2} \left(r_-+r_+ +\sqrt{4 r_- r_+
   S^2+(r_+-r_-)^2}\right) \\
    r_2 &= \frac{1}{2} \left(r_-+r_+ -\sqrt{4 r_- r_+
   S^2+(r_+-r_-)^2}\right) .
\end{align}
Note that in the limit $\mathcal{S}\rightarrow 0$, we have $r_{1} \rightarrow r_+$ and $r_{2} \rightarrow r_-$ as expected, which follows from our initial assumption $r_+> r_-$.

Note that $A$ is real symmetric, and thus it can be diagonalized by the orthogonal matrix
\begin{align}
    Q = \begin{pmatrix}
        \cos\chi & -\sin\chi\\
        \sin\chi & \cos\chi
    \end{pmatrix} ,
\end{align} 
that is $Q^T A Q= \text{diag}(r_1, r_2)$. For $Q^T A Q$ to be diagonal, we get the following condition:
\begin{align}
   0  \overset{!}{=}  (Q^T A Q) _{12} &= \cos(\chi) \sin(\chi) (A_{22}- A_{11})+A_{12} (\cos(\chi)-\sin(\chi)) (\cos(\chi)+\sin(\chi)) \\
    &= \cos(\chi) \sin(\chi) (A_{22}- A_{11})+A_{12}  (\cos^2(\chi)-\sin^2(\chi))  \\
    \quad\Leftrightarrow \tan (2\chi) &= -2 \frac{A_{12}}{A_{22}-A_{11}} \\
    &= -2 \frac{r_-\vert \mathcal{S}\vert \sqrt{1-\vert\mathcal{S}\vert^2}}{r_- \left( 1-\vert\mathcal{S}\vert^2\right)-\left[  r_+ + r_{-} \vert \mathcal{S}\vert^2 \right]} \\
    &=   2 \frac{r_-\vert \mathcal{S}\vert \sqrt{1-\vert\mathcal{S}\vert^2}}{r_+-r_- + 2r_- \vert\mathcal{S}\vert^2}
\end{align}
where we used $\cos^2(\chi)-\sin^2(\chi)= \cos(2\chi)$ and $\cos(\chi) \sin(\chi) = \frac{1}{2} \sin(2\chi)$ and assumed $A_{22}-A_{11}\neq 0$.

Now, there are two regimes: 1) $\vert r_+-r_-\vert \ll 2r_- \vert\mathcal{S}\vert^2$. If we additionally assume $\vert S\vert\ll 1$, we find $\chi =\pi/2$. 2) $\vert r_+-r_-\vert \gg 2r_- \vert\mathcal{S}\vert^2$. If we additionally  assume $\vert S\vert\ll 1$, we find $\tan (2\chi) \approx 2 \frac{r_- \vert S\vert }{r_+-r_-} \approx 0$ from which follows $\chi \approx \frac{r_- \vert S\vert }{r_+-r_-}$ up to first order in $\vert\mathcal{S}\vert$.

It is now evident that the first two  Schmidt modes  $\Psi_1(z)$ and  $\Psi_2(z)$ with non-zero Schmidt coefficients defined as 
\begin{align}
    f(z,z') = \sum_{n=1}^2 r_n \Psi_n (z)\Psi_n (z')
\end{align} 
are given by
\begin{align}
   \begin{pmatrix}
        \Psi_1(z) \\
        \Psi_2 (z)
    \end{pmatrix} = \begin{pmatrix}
        \cos\chi & -\sin\chi\\
        \sin\chi & \cos\chi
    \end{pmatrix} \begin{pmatrix}
        u_1(z) \\
        u_2 (z)
    \end{pmatrix} .
\end{align}
We will exclusively consider the regime in which $\vert\mathcal{S}\vert$ is sufficiently small so that 
\begin{align} 
    \Psi_1 (z) & = \cos(\chi) u_1(z) -\sin (\chi) u_2(z) \approx \cos(\chi)\Phi_{+,0}(z)  -\sin (\chi) \Phi_{-,0}(z) \label{app:eq:ApproxRelationSchmidtToIntialModes1} \\
     \Psi_2 (z) & = \cos (\chi) u_2(z) + \sin(\chi) u_1(z) \approx \cos (\chi) \Phi_{-,0}(z) + \sin(\chi) \Phi_{+,0}(z)  \label{app:eq:ApproxRelationSchmidtToIntialModes2}
\end{align}
is a  good approximation.

 These are just the first two elements of the Schmidt mode basis. The rest of the elements could be constructed via a Gram-Schmidt orthogonalization procedure of the set $\{ \Psi_1,\Psi_2,\Phi_{+,1},\Phi_{-,1},\Phi_{+,2},\Phi_{-,2},\ldots \}$. With similar arguments as before, we find for the third and fourth element 
\begin{align}
      \Psi_3 (z) &\approx \Phi_{+,1}(z)  \\
     \Psi_4 (z) &\approx \Phi_{-,1}(z) 
\end{align}
with negligible corrections of order $\vert S\vert$.

The next step is to determine the components of the generator matrix $\tilde G$ in the Schmidt mode basis $\{\Psi_n\}_n$. Let us first denote 
\begin{align}
    G_{nm}^{\Phi_{\pm}} &=  +\frac{i}{\sqrt{2}\sigma_z} \left( \sqrt{\frac{m}{2}}  \delta_{n,m-1} - \sqrt{\frac{m+1}{2}}  \delta_{n,m+1} \right)   + (p_0 \pm\delta_{\pm,p}) \delta_{n,m} \label{app:eq:Gphi}
\end{align}
as the generator matrix in the mode bases $\{\Phi_{\pm,n}(z)\}_n$.

We then find
\begin{align}
    &\partial_{a_z} \Psi_{1}(z+a_z)  \approx   \cos(\chi)\partial_{a_z}\Phi_{+,0}(z+a_z)  -\sin (\chi)\partial_{a_z} \Phi_{-,0}(z+a_z) \\
    &=   \cos(\chi)\left[ -i G_{00}^{\phi_+} \Phi_{+,0} -i G_{10}^{\Phi_+} \Phi_{+,1}\right] -\sin(\chi)\left[ -i G_{00}^{\Phi_-} \Phi_{-,0} -i G_{10}^{\Phi_-} \Phi_{-,1} \right]   \\
    &\approx   \cos(\chi)\left[ -i G_{00}^{\Phi_+} (\cos(\chi) \Psi_1 + \sin (\chi)\Psi_2) -i G_{10}^{\Phi_+} \Psi_3\right] -\sin(\chi)\left[ -i G_{00}^{\Phi_-} (\cos(\chi) \Psi_2 - \sin (\chi)\Psi_1) -i G_{10}^{\Phi_-} \Psi_4 \right]   \\
    &= -i \underbrace{\left[ G_{00}^{\Phi_+} \cos^2 (\chi) + G_{00}^{\Phi_-}\sin^2 (\chi)\right]}_{\approx \tilde G_{11}} \Psi_1 - i \underbrace{\left[ G_{00}^{\Phi_+}- G_{00}^{\Phi_-}\right]\cos(\chi)\sin(\chi)}_{ \approx \tilde G_{21}} \Psi_2 - i \underbrace{G_{10}^{\Phi_+} \cos(\chi)}_{\approx \tilde G_{31}} \Psi_3  - i \underbrace{G_{10}^{\Phi_-} (-\sin(\chi))}_{\approx\tilde G_{41}} \Psi_4 ,
\end{align}
where we used $\Phi_{+,0} \approx \cos(\chi) \Psi_1 + \sin (\chi)\Psi_2$ and $\Phi_{-,0} \approx \cos(\chi) \Psi_2 - \sin (\chi)\Psi_1$, which follows from Eqs.~\eqref{app:eq:ApproxRelationSchmidtToIntialModes1} and \eqref{app:eq:ApproxRelationSchmidtToIntialModes2} by inversion, and Eq.~\eqref{app:eq:GeneratorModeRelation} to determine elements of $\tilde G$.  
Completely analogously, we find
\begin{align}
    &\partial_{a_z} \Psi_{2}(z+a_z) \approx \cos (\chi) \partial_{a_z} \Phi_{-,0}(z+a_z) + \sin(\chi) \partial_{a_z}\Phi_{+,0}(z+a_z)  \\
    &= \cos (\chi)  \left[ -i G_{00}^{\Phi_-} \Phi_{-,0} - i G_{10}^{\Phi_-} \Phi_{-,1} \right] + \sin(\chi) \left[ -i G_{00}^{\phi_+} \Phi_{+,0} -i G_{10}^{\Phi_+} \Phi_{+,1}\right] \\
    &\approx \cos (\chi)  \left[ -i G_{00}^{\Phi_-} \left[  \cos(\chi) \Psi_2 - \sin (\chi)\Psi_1\right] - i G_{10}^{\Phi_-} \Psi_4 \right] + \sin(\chi) \left[ -i G_{00}^{\phi_+}  \left[   \cos(\chi) \Psi_1 + \sin (\chi)\Psi_2 \right] -i G_{10}^{\Phi_+} \Psi_3\right] \\
    & = -i \underbrace{\left[  G_{00}^{\Phi_-} \cos^2 (\chi) +  G_{00}^{\Phi_+} \sin^2 (\chi)\right]}_{\approx \tilde G_{22}} \Psi_2 - i \underbrace{\left[ G_{00}^{\Phi_+}-G_{00}^{\Phi_-}\right]\cos(\chi)\sin(\chi)}_{\approx \tilde G_{12}} \Psi_1 -i \underbrace{G_{10}^{\Phi_-} \cos(\chi)}_{\approx \tilde G_{42}}\Psi_4 -i \underbrace{G_{10}^{\Phi_+} \sin(\chi)}_{\approx \tilde G_{32}}\Psi_3
\end{align}
From above, we can read off the matrix elements  of $\tilde G$.

Now, let us calculate the QFI. We use the fact that only the two modes $\Psi_1$ and $\Psi_2$ are populated, and thus $s_n=0$ and $c_n=1$ for $n\neq 1,2$.
\begin{align} 
     \mathcal{F}(\vert \psi_\lambda\rangle) &= 4\sum_{i,j=1}^\infty s_i \tilde G_{ij} c_j s_j (\tilde G_{ji})^* c_i +4\sum_{i,j=1}^\infty c_i \tilde G_{ij} s_j s_j \tilde G_{ji} c_i \\
     &= 4 \left[ 2 \vert G_{11}\vert^2 s_1^4 + 2 \vert G_{22}\vert^2 s_2^4 + 2 s_1^2 s_2^2 ( \vert G_{12}\vert^2 + \text{Re}[G_{12}^2]) \right] +O(N_S) \\
     &= 8 \left[  \vert G_{11}\vert^2 s_1^4 +  \vert G_{22}\vert^2 s_2^4 +  s_1^2 s_2^2 ( \vert G_{12}\vert^2 + \text{Re}[G_{12}^2]) \right] +O(N_S) .
\end{align}
Let us now also calculate the resource parameters and then express the QFI in terms of these. We have
\begin{align}
    N_S = s_1^2+ s_2^2
\end{align}
and
\begin{align}
    \bar g  &= \frac{1}{N_S}\sum_{n m}  \tilde G_{nm}   \left( s_n^2 \delta_{nm} +\alpha_n^* \alpha_m \right) = \frac{\tilde G_{11}  s_1^2 + \tilde G_{22} s_2^2}{s_1^2+s_2^2} .
\end{align}
For the variance term, we have
\begin{align}
   \left( \Delta g^2 +\bar g^2 \right)  &=  \frac{1}{N_S}\sum_{n m} (\tilde G^2)_{nm}   \left( s_n^2 \delta_{nm} +\alpha_n^* \alpha_m \right) \\
   &=   \frac{1}{N_S}\sum_{n m} \sum_i \tilde G_{ni}\tilde G_{im}   \left( s_n^2 \delta_{nm} +\alpha_n^* \alpha_m \right) \\
   &= \frac{1}{N_S}  \sum_{n=1}^2\sum_{i=1}^\infty \tilde G_{ni} \tilde G_{in}  s_n^2   = \frac{1}{N_S}  \sum_{n=1}^2\sum_{i=1}^\infty \vert \tilde G_{ni} \vert^2  s_n^2 \\
   &=  \frac{1}{N_S}   \sum_{i=1}^\infty \left[ \vert \tilde G_{1i} \vert^2  s_1^2  +  \vert \tilde G_{2i} \vert^2  s_2^2 \right]  \\
   &= \frac{1}{N_S} \left[\left( \vert \tilde G_{11} \vert^2 + \vert\tilde G_{12}\vert^2 + \vert\tilde G_{13}\vert^2 + \vert\tilde G_{14}\vert^2  \right)  s_1^2 + \left( \vert \tilde G_{21}\vert^2 + \vert \tilde G_{22}\vert^2  + \vert \tilde G_{23}\vert^2 + \vert \tilde G_{24}\vert^2 \right) s_2^2     \right] .
\end{align}

\subsubsection{QFI of fully optimal regularized state}\label{app:FullyOptimalRegularizedQFI}
Above, we have given the calculation of the QFI of  state with squeezing  matrix as given in Eq.~\eqref{app:eq:SqueezingFunctionRegularizedSpace}. 

Let us now evaluate the QFI of a fully optimal state for which we generally have $r_1\neq r_2$. Let us assume we are in regime 2)  with $\vert r_+-r_-\vert \gg 2r_- \vert\mathcal{S}\vert^2$. Assuming $\vert\mathcal{S}\vert$ is small enough, we may approximate $\sin(\chi)\approx 0$ and $\cos (\chi)\approx1$.

 Inspired by the choice of modes in Eqs.~\eqref{eq:opteigenval1} and \eqref{eq:opteigenval2} for the optimal state in with a finite  number of modes presented in Sec.~\ref{app:CompletlyOptimalState}, we set
\begin{align}
    \delta_{+,p} &=  \Delta \cdot \sqrt{\frac{s_2^2}{s_1^2}}   \\
     \delta_{-,p} &=  \Delta \cdot \sqrt{\frac{s_1^2}{s_2^2}} ,
\end{align}
where $\Delta >0$.  

Now, we find
\begin{align}
    \tilde G_{11} &\approx G_{00}^{\Phi_+} = p_0+\delta_{+,p}  \\
    \tilde G_{22} &\approx G_{00}^{\Phi_-} = p_0-\delta_{-,p} \\
     \tilde G_{12}   &\approx 0 + O(\vert\mathcal{S}\vert) \\
      \tilde G_{23}   &\approx 0 + O(\vert\mathcal{S}\vert) \\
      \tilde G_{14}   &\approx 0 + O(\vert\mathcal{S}\vert) \\
      \tilde G_{31}   &\approx  G_{10}^{\Phi_+} = +  \frac{i}{\sqrt{2}\sigma}  \frac{1}{\sqrt{2}} = +  \frac{i}{2\sigma_z}   \\
       \tilde  G_{42}   &\approx G_{10}^{\Phi_-}= + \frac{i}{\sqrt{2}\sigma}  \frac{1}{\sqrt{2}} = + \frac{i}{2\sigma_z}   .
\end{align}
Completely analogous to the finite case in Sec.~\ref{app:CompletlyOptimalState}, we find
\begin{align}
    \bar g = + p_0
\end{align}
and 
\begin{align}
     \Delta g^2 + \bar g^2 &= \frac{1}{N_S} \left[\left( (p_0+\delta_{+,p})^2 +  0 + \frac{1}{4\sigma_z^2} +0\right)  + \left( 0 + (p_0-\delta_{-,p})^2 +  0+\frac{1}{4\sigma_z^2}  \right)      \right] \frac{N_S}{2} \\
     & = p_0^2 + \Delta^2 + \frac{1}{4\sigma_z^2}
\end{align}
from which follows
\begin{align}
    \Delta g^2 = \Delta ^2 +\frac{1}{4\sigma_z^2} .
\end{align}
In contrast to the finite dimensional case in Sec.~\ref{app:CompletlyOptimalState}, here the extra term $\frac{1}{4\sigma_z^2}$ appears due to regularization, which we assume to be negligible.

Until now, we have not specified the squeezing parameter. Inspired by Eqs.~\eqref{eq:OptSqueezing1} and \eqref{eq:OptSqueezing2}, we choose $r_1,r_2$ such that
\begin{align}
    s_i^2  &=  \frac{N_S}{2} \left( 1 - \frac{\bar g  }{\sqrt{\bar g^2+\Delta  ^2}} \right)  \\
    s_j^2  &=  \frac{N_S}{2} \left( 1 + \frac{\bar g  }{\sqrt{\bar g^2+\Delta  ^2}} \right) . 
\end{align}
With this, the calculation for the QFI is equivalent to the one in Sec.~\ref{app:CompletlyOptimalState}
and we find
\begin{align}
    \mathcal{F} &= \left( 8 p_0^2  + 4 \Delta^2 \right) N_S^2 + O(N_S) \\
    &=  \left( 8 p_0^2  + 4 \left[\Delta g^2 - \frac{1}{4\sigma_z^2}\right] \right) N_S^2 + O(N_S) ,
\end{align}
where again compared to the finite dimensional case in Sec.~\ref{app:CompletlyOptimalState} the extra term $\frac{1}{4\sigma_z^2}$ appears due to regularization.

\subsubsection{QFI of variance optimal regularized state}\label{app:VarianceOptimalRegularizedQFI}
Let us consider the regularized variance-optimal state. A variance-optimal state employs equal squeezing in both modes, i.e., $r_+=r_-$ from which follows $r_1=r_2$. We also have $\delta_{+,p}=\delta_{-,p}=:\delta$. In this regime, we have $\chi \approx \pi/4$ (assuming $\vert\mathcal{S}\vert$ sufficiently small) and thus $\cos(\chi)\approx\sin(\chi) \approx 1/\sqrt{2}$. The photon number in both populated Schmidt modes is then
\begin{align}
    s_1^2  =  s_2^2 = N_S/2 .
\end{align}
Now, we find that $\tilde G_{11} = \tilde G_{22} = (G_{00}^{\Phi_+}+ G_{00}^{\Phi_-}) (\frac{1}{\sqrt{2}})^2= (p_0 + \delta + p_0 -\delta)\frac{1}{2} = p_0$. From this follows
\begin{align}
    \bar g \approx +p_0 .
\end{align}
Note that we have $\tilde G_{12}=\tilde G_{21} \approx  (G_{00}^{\Phi_+} - G_{00}^{\Phi_+}) (\frac{1}{\sqrt{2}})^2= (p_0+\delta -(p_0-\delta)) \frac{1}{2} = \delta$. We also have $\vert \tilde G_{13}\vert^2 \approx \vert \tilde G_{14}\vert^2 \approx \vert \tilde G_{24}\vert^2\approx \vert \tilde G_{23}\vert^2 \approx \vert G_{10}^{\Phi_{\pm}}\vert^2 (\frac{1}{\sqrt{2}})^2 = \frac{1}{2\sigma_z^2} (\sqrt{\frac{1}{2}})^2(\sqrt{\frac{1}{2}})^2 = \frac{1}{2^3\sigma^2} $

We then find
\begin{align}
    \Delta g^2 + \bar g^2 &= \frac{1}{N_S} \left[\left( \vert \tilde G_{11} \vert^2 + \vert\tilde G_{12}\vert^2 + \vert\tilde G_{13}\vert^2 + \vert\tilde G_{14}\vert^2  \right)  s_1^2 + \left( \vert \tilde G_{21}\vert^2 + \vert \tilde G_{22}\vert^2  + \vert \tilde G_{23}\vert^2 + \vert \tilde G_{24}\vert^2 \right) s_2^2     \right] \\
    &=\frac{1}{N_S} \left[ \left( p_0^2 +\delta^2+ \frac{1}{2^3\sigma^2} + \frac{1}{2^3\sigma^2}  \right) s_1^2 + \left( p_0^2 +\delta^2+ \frac{1}{2^3\sigma^2} + \frac{1}{2^3\sigma^2}  \right) s_2^2\right] \\
    &=\frac{1}{N_S} \frac{N_S}{2} \left[ \left( p_0^2 +\delta^2  + \frac{1}{2^2\sigma^2}  \right)   + \left( p_0^2 +\delta^2+\frac{1}{2^2\sigma^2}  \right)  \right] \\
    &= p_0^2 + \delta^2 +\frac{1}{2^2\sigma^2} 
\end{align}
so that
\begin{align}
   & \Delta g^2 =  \delta^2 + \frac{1}{4\sigma_z^2}    \\
    &\Leftrightarrow \delta^2 = \Delta g^2 -\frac{1}{4\sigma_z^2} .
\end{align}
Now, the QFI is 
\begin{align}
    \mathcal{F}(\vert\psi_\lambda\rangle) &\approx 8 \left[  \vert G_{11}\vert^2 s_1^4 +  \vert G_{22}\vert^2 s_2^4 +  s_1^2 s_2^2 ( \vert G_{12}\vert^2 + \text{Re}[G_{12}^2]) \right] +O(N_S) \\
    &= 8 \frac{N_S^2}{4} \left[ p_0^2 +p_0^2+ \delta^2+\delta^2\right] + O(N_S) = 4  N_S^2\left[ p_0^2+\delta^2\right] + O(N_S) \\
    &= 4 N_S^2 \left[ \bar g^2 + \Delta g^2 - \frac{1}{4\sigma_z^2} \right]  + O(N_S)
\end{align}
In the limit of $\Delta g^2 \gg \frac{1}{4\sigma_z^2}$, the state is variance optimal.

Above we have assumed that $r_1=r_2$ from which follows $\chi\approx\pi/4$. Due to experimental imperfections, the equality $r_1-r_2 $ may be slightly perturbed so that $\vert r_1-r_2 \vert =\epsilon$. If $\vert \mathcal{S}\vert$ is small enough so that $\epsilon \gg 2r_- \vert\mathcal{S}\vert^2$ we have $\chi \ll 1$ so that $\sin(\chi)\approx 0 $ and $\cos (\chi)\approx 1$. In that case, the state is still variance optimal, as can be shown with a similar calculation to the one presented in the previous Section \ref{app:FullyOptimalRegularizedQFI}.

\section{Optimal homodyne measurement for regularized state}\label{app:OptimalMeasurementRegularizedState}

Let us now propose an optimal measurement for the regularized optimal two-mode squeezed state discussed in Appendix~\ref{app:QFIregularized} for the mode transform $\hat U_{a_z} = e^{-a_z i \hat G_{a_z}}$. The measurement scheme is analogous to the optimal homodyne scheme for the discrete case presented in \ref{app:OptimalMeasurementVarianceCompletly}.

The squeezing matrix  is $ f(z, z') = r_{-} \Phi_{-,0}(z) \Phi_{-,0} (  z') + r_{+} \Phi_{+,0}(z) \Phi_{+,0} (  z')$  with $\Phi_{\pm,n}(z)$ as defined in Eq.~\eqref{app:eq:PhiPmDef}.
Recall that the two modes $\Phi_{+,0}(z)$ and $\Phi_{-,0}(z)$ are orthogonal to a very good approximation under   our assumptions given in \ref{app:QFIregularized}.

 We denote $\hat c_{\pm,n}^\dag ( a_z) = \int \mathrm{d}z \, \Phi_{\pm,n}(z+a_z) \hat D^\dag(z)$ and note that the transformed state is given by $\vert \psi_{a_z}\rangle \approx \hat S_{\hat c_{+,0}(a_z)}(r_+) \hat S_{\hat c_{-,0}(a_z)}(r_-)\vert 0\rangle$, where the approximate sign is due to the fact that the modes are only approximately orthogonal.

 Now, let us assume we have a prior guess $a_{z,\text{prior}}$ of $a_z$. 
 We propose to homodyne the modes $\hat c_{+,0}(a_{z,\text{prior}})$ and $\hat c_{+,0}(a_{z,\text{prior}})$. We assume that our prior guess $a_{z,\text{prior}} = a_z +\epsilon$ is good enough so that 
 \begin{align}
     \hat c_{\pm,0}(a_{z,\text{prior}}) &\approx \hat c_{\pm,0}(a_{z}) + \epsilon \partial_{y}  \hat c_{\pm,0}(y)\Big\vert_{y=a_{z}} \\
     &=\hat c_{\pm,0}(a_{z}) + \epsilon  \, (-i)^* \left[ G^{\Phi_{\pm}*}_{00}   \hat c_{\pm,0}(a_{z}) + G^{\Phi_{\pm}*}_{10}   \hat c_{\pm,1}(a_{z})   \right] \\
     &=\hat c_{\pm,0}(a_{z}) + \epsilon  \, (-i)^* \left[ (p_0\pm \delta_{\pm,p})  \hat c_{\pm,0}(a_{z}) +  \frac{-i}{\sqrt{2}\sigma_z} \sqrt{\frac{1}{2}} \hat c_{\pm,1}(a_{z})   \right]  \\
     &\approx \hat c_{\pm,0}(a_{z}) e^{+i\epsilon (p_0\pm \delta_{\pm,p})}  + \epsilon     \frac{1}{2\sigma_z}  \hat c_{\pm,1}(a_{z}) , 
 \end{align}
 where we used Eq.~\eqref{app:eq:Gphi} and Eq.~\eqref{app:eq:GeneratorModeRelation}. 

 Now, we propose to measure the two quadratures
 \begin{align}
     \hat x_{ \hat c_{\pm,0}(a_{z,\text{prior}})e^{-i\varphi_\pm}} := \frac{\hat c_{\pm,0}(a_{z,\text{prior}})e^{-i\varphi_\pm}+ \hat c_{\pm,0}^\dag(a_{z,\text{prior}})e^{+i\varphi_\pm}}{\sqrt{2}}
 \end{align}

Now, from this point on,  we neglect all terms of order $\epsilon^2$, which is legitimate assuming we have a good enough prior. Then,  the analysis is completely analogous to the analysis of the finite mode setting discussed in the Appendix~\ref{app:OptimalMeasurementVarianceCompletly}. This is due to the fact that the expectation of terms like $\hat c_{\pm,1}$, $\hat c_{\pm,1}\hat c_{\pm,0}$, $\hat c_{\pm,1}\hat c_{\mp,0}$ and $\hat c_{\pm,1}\hat c_{\mp,1}$  is zero, as our squeezed state has no population in $\hat c_{\pm,1}$. The expectation of terms quadratic in $\hat c_{\pm,1}$ is non zero, however, these terms are always accompanied by a prefactor $\epsilon^2$, which is why we can neglect them.
Thus, the only relevant operators in the analysis become $\hat c_{\pm,0}(a_{z}) e^{+i\epsilon (p_0\pm \delta_{\pm,p})} $ like in Appendix~\ref{app:OptimalMeasurementVarianceCompletly}.

Repeating all the steps from Appendix~\ref{app:OptimalMeasurementVarianceCompletly} then leads us to 
\begin{align}
   \max_{\varphi_{+},\varphi_{-}} F_{\text{hom}} (\vert\psi_{a_z}\rangle) = 8 (p_0 + \delta_{+,p})^2 s_{+}^2 (s_{+}^2 +1) + 8 (p_0- \delta_{-,p})^2 s_{-}^2 (s_{-}^2 +1) 
\end{align}
with $s_\pm =\sinh(r_\pm)$. 

Now by using the relations between the parameters $p_0$, $\delta_{\pm,p}$ and the resource parameters $\bar p$, $\Delta p$ given in \ref{app:FullyOptimalRegularizedQFI} for the optimal state and given in \ref{app:VarianceOptimalRegularizedQFI}, we can show that the above FI equals its respective QFI. Thus, the homodyne measurement scheme presented here is optimal for the regularized optimal state from \ref{app:FullyOptimalRegularizedQFI} and the variance-optimal state from \ref{app:VarianceOptimalRegularizedQFI}.

We can also introduce loss by modeling it as a beam splitter of transmissivity $\eta_{\pm}$, and the analysis is completely analogous to the one presented in Appendix~\ref{app:Loss}.

\section{Variance-optimal measurement for shift mode transforms}
Recall from Definition~\ref{def:OptimalMeasurements} that
variance-optimal measurements are measurements whose FI fully attains the variance contribution of the QFI. So, say we have a state $\vert\psi_\lambda\rangle$ whose QFI is $\mathcal{F}(\vert\psi_\lambda\rangle) = c_{\bar g} \bar g^2 N_S^2 + c_{\Delta g} \Delta g^2 N_S^2 + O(N_S)$. A variance-optimal measurement then attains an FI of $c_{\bar g}' \bar g^2 N_S^2 + c_{\Delta g}' \Delta g^2 N_S^2 + O(N_S)$ with $c_{\Delta g}' =c_{\Delta g}$ and $c_{\bar g}' \leq c_{\bar g}$.

Let us consider the mode parameter shift transformation in the $z$ domain. The analysis of a shift in the $p$ domain is analogous due to Fourier duality.
So we consider
\begin{align}
    \hat U_{a_z} = e^{-i a_z \int\mathrm{d}p \, p \hat F^\dag (p)\hat F(p)}
\end{align}
which has the effect
\begin{align}
    \hat U_{a_z}^\dag \hat F (p) \hat U_{a_z} &= \hat F (p) e^{-ia_z p}\\
    \hat U_{a_z}^\dag \hat D(z) \hat U_{a_z} &= \hat D (z+a_z) .
\end{align}
Also recall the unitary of  a shift in the $p$ domain with unitary $\hat U_{a_p} = e^{-i a_p \int\mathrm{d}z \, z \hat D^\dag (z)\hat D(z)}$ that acts as $    \hat U_{a_p}^\dag \hat D(z)\hat U_{a_p} = \hat D (z) e^{-ia_p z}$ and $\hat U_{a_p}^\dag \hat F (p)\hat U_{a_p} = \hat F(p-a_p)$.

Now, any state $\vert \psi\rangle$ is characterized by mean parameters $\bar z,\bar p$, one in $z$ and the other in its Fourier dual $p$ space.  We have
\begin{align}
    \bar z &= \int \mathrm{d}z \, z \frac{\langle \psi  \vert \hat D^\dag (z)\hat D(z)\vert \psi \rangle}{N_S} \\
    \bar p &= \int \mathrm{d}p \, p \frac{\langle \psi  \vert \hat F^\dag (p)\hat F(p)\vert \psi \rangle}{N_S}
\end{align}
So, let us denote 
\begin{align}
    \vert \psi\rangle =: \vert\psi (\bar z,\bar p)\rangle = \hat U_{-\bar z} \hat U_{+\bar p} \vert \psi (0,0)\rangle ,
\end{align}
explicitly indicating the mean resource parameters of the respective state.
Here, $\vert\psi (0,0)\rangle$ is a version of the state  $\vert\psi (\bar z,\bar p)\rangle$ whose mean resource parameters are both zero. Note that
\begin{align}
    \vert \psi (0,0)\rangle = \hat U_{-\bar p} \hat U_{+\bar z} \vert \psi (\bar z,\bar p)\rangle
\end{align}

Let us show this:
\begin{align}
    &\int \mathrm{d}z \, z \langle \psi (0,0)  \vert \hat D^\dag (z)\hat D(z)\vert \psi(0,0) \rangle \\
    &= \int \mathrm{d}z \, z \langle \psi (\bar  z,\bar p)  \vert \hat U_{+\bar z}^\dag \hat U_{-\bar p}^\dag \hat D^\dag (z)\hat D(z)\hat U_{-\bar p} \hat U_{+\bar z} \vert \psi (\bar z,\bar p)\rangle \\
    &=   \int \mathrm{d}z \, z \langle \psi (\bar  z,\bar p)  \vert \hat U_{+\bar z}^\dag  \hat D^\dag (z)\hat D(z)  \hat U_{+\bar z} \vert \psi (\bar z,\bar p)\rangle e^{-i\bar p z} e^{+i\bar p z} \\
    &= \int \mathrm{d}z \, z \langle \psi (\bar  z,\bar p)  \vert \hat U_{+\bar z}^\dag  \hat D^\dag (z)\hat U_{+\bar z}  \hat U_{+\bar z}^\dag \hat D(z)  \hat U_{+\bar z} \vert \psi (\bar z,\bar p)\rangle  \\
    &= \int \mathrm{d}z \, z \langle \psi (\bar  z,\bar p)  \vert   \hat D^\dag (z+\bar z)    \hat D(z+\bar z)   \vert \psi (\bar z,\bar p)\rangle  \\
    &=  \int \mathrm{d}z \, (z-\bar z) \langle \psi (\bar  z,\bar p)  \vert   \hat D^\dag (z )    \hat D(z )   \vert \psi (\bar z,\bar p)\rangle  \\
    &= (\bar z  - \bar z) N_S \\
    &=0
\end{align}

Similarly, we have
\begin{align}
    &\int \mathrm{d}p \, p  \langle \psi (0,0)  \vert \hat F^\dag (p)\hat F(p)\vert \psi (0,0) \rangle  \\
    &= \int \mathrm{d}p \, p  \langle \psi  (\bar z,\bar p)   \vert  \hat U_{+\bar z}^\dag \hat U_{-\bar p}^\dag \hat F^\dag (p)\hat F(p)\hat U_{-\bar p} \hat U_{+\bar z} \vert \psi (\bar z,\bar p)\rangle \\
    &= \int \mathrm{d}p \, p  \langle \psi  (\bar z,\bar p)   \vert  \hat U_{+\bar z}^\dag   \hat F^\dag (p- (-\bar p))\hat F(p- (-\bar p))  \hat U_{+\bar z} \vert \psi (\bar z,\bar p)\rangle \\
      &= \int \mathrm{d}p \, p  \langle \psi  (\bar z,\bar p)   \vert    \hat F^\dag (p + \bar p)\hat F(p + \bar p)    \vert \psi (\bar z,\bar p)\rangle e^{-i \bar z (p+\bar p)} e^{+i \bar z (p+\bar p)} \\
      &= \int \mathrm{d}p \, (p- \bar p)  \langle \psi  (\bar z,\bar p)   \vert    \hat F^\dag (p )\hat F(p )    \vert \psi (\bar z,\bar p)\rangle  \\
      &= (\bar p - \bar p) N_S\\
      &=0 .
\end{align}

\subsection{The variance-optimal measurement condition}\label{app:VarianceOptimalMeasurementCondition}
In the following we want to show that resolved photon counting in the $z$ domain can be optimal if the state satisfies the variance-optimal measurement condition we are about to introduce. Resolved photon counting in the $z$ domain corresponds to projecting the quantum state onto 
 \begin{align}
     \vert \boldsymbol z\rangle =  \frac{1}{\sqrt{K!}}\otimes_{i=1}^K \hat D^\dag (z_i) \vert 0\rangle
 \end{align}
 where $\boldsymbol z= (z_1,\ldots, z_K)^T \in \mathds{R}^K$ and $K\in\mathds{N}_0$.

The variance-optimal measurement condition is as follows:
\begin{lemma} \label{lemma:VarianceOptimal}
 If a quantum state of interest $\vert \psi(\bar z,\bar p )\rangle$ satisfies 
      \begin{align}
         \partial_{a_z}\arg [ \langle \boldsymbol{z} \vert \hat U_{a_z}\vert \psi(\bar z,0)\rangle ] =0
      \end{align}
     for all $\boldsymbol z \in \mathds{R}^K$ and all $K\in \mathds{N}$, then
     \begin{align}
         F(\hat U_{a_z}\vert \psi(\bar z,\bar p)\rangle,\{ \vert \boldsymbol z\rangle \langle \boldsymbol z\vert \}) = \mathcal{F}(\hat U_{a_z} \vert\psi(\bar z, 0)).
     \end{align}
\end{lemma}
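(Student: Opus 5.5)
The strategy is to exploit the equality condition of the classical-versus-quantum Fisher information bound for pure states. For a pure state $\vert\phi_{a}\rangle$ and a projective measurement onto an orthonormal, complete family $\{\vert\boldsymbol z\rangle\}$, I write $\langle\boldsymbol z\vert\phi_a\rangle=\sqrt{p(\boldsymbol z\vert a)}\,e^{i\theta(\boldsymbol z,a)}$. The first step is the standard decomposition
\[
\mathcal F(\vert\phi_a\rangle)=F+4\Big(\textstyle\sumint_{\boldsymbol z} p\,(\partial_a\theta)^2-\big(\sumint_{\boldsymbol z}p\,\partial_a\theta\big)^2\Big),
\]
where $F=\sumint p^{-1}(\partial_a p)^2$ is the classical FI. I obtain it by inserting the resolution of the identity $\sum_K\int\mathrm d^K z\,\vert\boldsymbol z\rangle\langle\boldsymbol z\vert=\mathds 1$ (with the $1/K!$ symmetrization) into Eq.~\eqref{eq:QFIdefinition} and expanding $\partial_a(\sqrt p\,e^{i\theta})$. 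The bracket is a variance of $\partial_a\theta$ under $p$, so it vanishes whenever the phase derivative is zero. The hypothesis $\partial_{a_z}\arg\langle\boldsymbol z\vert\hat U_{a_z}\vert\psi(\bar z,0)\rangle=0$ therefore gives $F(\hat U_{a_z}\vert\psi(\bar z,0)\rangle,\{\vert\boldsymbol z\rangle\langle\boldsymbol z\vert\})=\mathcal F(\hat U_{a_z}\vert\psi(\bar z,0)\rangle)$.

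The second step transfers this equality from $\vert\psi(\bar z,0)\rangle$ to $\vert\psi(\bar z,\bar p)\rangle$, which has the same photon-counting statistics. From the definitions, $\vert\psi(\bar z,\bar p)\rangle=\hat U_{-\bar z}\hat U_{+\bar p}\vert\psi(0,0)\rangle$, and $\vert\psi(\bar z,0)\rangle=\hat U_{-\bar z}\vert\psi(0,0)\rangle$. I will check that $\hat U_{-\bar z}$ and the $p$-shift $\hat U_{\bar p}=e^{-i\bar p\int z\hat D^\dag\hat D}$ commute up to a global phase, using the canonical commutator between $\int p\,\hat F^\dag\hat F$ and $\int z\,\hat D^\dag\hat D$, which is proportional to the number operator. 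Modulo that phase, $\hat U_{a_z}\vert\psi(\bar z,\bar p)\rangle=\hat U_{\bar p}\hat U_{a_z}\vert\psi(\bar z,0)\rangle$. Since $\hat U_{\bar p}^\dag\hat D(z)\hat U_{\bar p}=\hat D(z)e^{-i\bar p z}$, the amplitude $\langle\boldsymbol z\vert\hat U_{\bar p}\vert\phi\rangle$ equals $\langle\boldsymbol z\vert\phi\rangle$ times the phase $e^{-i\bar p\sum_k z_k}$ (together with $a_z$-dependent number-operator phases, which are constant on each $K$ sector). These phases drop out of $p(\boldsymbol z\vert a_z)$, so
\[
F(\hat U_{a_z}\vert\psi(\bar z,\bar p)\rangle)=F(\hat U_{a_z}\vert\psi(\bar z,0)\rangle).
\]
Chaining this with the first step gives the claim.

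The main obstacle is the bookkeeping in the second step. The Weyl-type commutation phase $e^{i a_z\bar p\hat N}$ depends on $a_z$ and on the photon number $K$. Because the measurement resolves $K$, this phase only multiplies each $K$-sector amplitude by a modulus-one factor, so it leaves $p$ invariant, and I expect no issue there. The delicate point is rigor in the continuum and $K$-sector decomposition: the resolution of identity must be handled with the symmetrized states, and the derivative of $\sqrt p$ must be controlled where $p=0$. I would address this through the discretized modes of Appendix~\ref{app:subsec:TruncationProcedure}, where all sums are finite. There I would argue pointwise in $\boldsymbol z$ and treat zeros of $p$ in the usual way, by restricting the integrals to the support of $p$.
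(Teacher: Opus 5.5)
Your proposal is correct and is essentially the paper's proof. The paper likewise first shows that $\bar p$ enters $\langle\boldsymbol z\vert\hat U_{a_z}\vert\psi(\bar z,\bar p)\rangle$ only through the unimodular factor $\prod_{i=1}^K e^{-i\bar p(z_i+a_z-\bar z)}$, obtained by Heisenberg-transforming the $\hat D(z_i)$ rather than by commuting the unitaries, and then uses the vanishing phase derivative with the resolution of identity to turn $4\sumint(\partial_{a_z}q)^2$ into $4\langle\partial\psi\vert\partial\psi\rangle$ with $\langle\psi\vert\partial\psi\rangle=0$, which is your identity $\mathcal F=F+4\,\mathrm{Var}_p(\partial_a\theta)$ unpacked. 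One correction: $\hat U_{-\bar z}$ and $\hat U_{\bar p}$ commute only up to $e^{ic\hat N}$, not a global phase, but as you note yourself this is a unimodular factor fixed on each $K$ sector (the $e^{-i\bar pK(a_z-\bar z)}$ part of the paper's phase), so it drops out of the counting probabilities.
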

Let us explain the above lemma. The condition is essentially that the measurement amplitudes' phase is independent of the parameter $a_z$ to be estimated. If a state satisfies this condition, then the corresponding FI of this state $\hat U_{a_z}\vert \psi(\bar z,\bar p)\rangle $ (centered at $\bar p$ in the $p$ domain) for the measurement $\{ \vert \boldsymbol z\rangle \langle \boldsymbol z\vert \}$ equals the QFI of the state $\hat U_{a_z}\vert \psi(\bar z,0)\rangle $ centered at $0$ in the $p$ domain. To clarify what this means, suppose that the state under consideration has a QFI of $\mathcal{F}(\hat U_{a_z} \vert\psi(\bar z, 0)) = c_{\bar p} \bar p^2 N_S^2 + c_{\Delta p}\Delta p^2 N_S^2+ O(N_S)$ (see Eq.~\eqref{eq:QFIoptimalityDef}). If this state satisfies our condition, then the FI of this state for the  measurement $\{ \vert \boldsymbol z\rangle \langle \boldsymbol z\vert \}$ is $F(\hat U_{a_z}\vert \psi(\bar z,\bar p)\rangle,\{ \vert \boldsymbol z\rangle \langle \boldsymbol z\vert \}) =  c_{\Delta p}\Delta p^2 N_S^2+ O(N_S)$. That is, the FI attains the variance term of the QFI, but does not attain the mean term. Such a measurement is called variance optimal according to our Definition~\ref{def:OptimalMeasurements}.\\

Let us now proof lemma~\ref{lemma:VarianceOptimal}.
\begin{proof}
Let us consider
\begin{align}
    &\langle \boldsymbol z \vert  \hat U_{a_z} \vert \psi(\bar z,\bar p)\rangle =  \langle \boldsymbol z \vert \hat U_{a_z} \hat U_{-\bar z}\hat U_{\bar p}   \vert  \psi(0, 0)\rangle  = \langle \boldsymbol z \vert \hat U_{a_z-\bar z}  \hat U_{\bar p}   \vert  \psi(0, 0)\rangle\\
    &= \frac{1}{\sqrt{K!}} \langle 0\vert [\otimes_{i=1}^K \hat D(z_i)]\hat U_{a_z-\bar z}  \hat U_{\bar p}    \vert  \psi(0, 0)\rangle= \frac{1}{\sqrt{K!}} \underbrace{\langle 0\vert \hat U_{a_z-\bar z}}_{=\langle 0\vert} [\otimes_{i=1}^K\hat U_{a_z-\bar z}^\dag \hat D(z_i)\hat U_{a_z-\bar z}]  \hat U_{\bar p}   \vert  \psi(0, 0)\rangle \\
    = &\frac{1}{\sqrt{K!}}  \langle 0\vert   [\otimes_{i=1}^K    \hat D(z_i+a_z -\bar z)  ]  \hat U_{\bar p}   \vert  \psi(0, 0)\rangle = \frac{1}{\sqrt{K!}}  \underbrace{\langle 0\vert\hat U_{\bar p}}_{=\langle 0\vert}   [\otimes_{i=1}^K  \hat U_{\bar p}^\dag  \hat D(z_i+a_z -\bar z)  \hat U_{\bar p} ]     \vert  \psi(0, 0)\rangle \\
    &= \frac{1}{\sqrt{K!}}   \langle 0\vert\hat      [\otimes_{i=1}^K   e^{-i\bar p (z_i+a_z-\bar z)}  \hat D(z_i+a_z -\bar z)  ]     \vert  \psi(0, 0)\rangle \\
    &= \left(\prod_{i=1}^K  e^{-i\bar p (z_i+a_z-\bar z)} \right) \frac{1}{\sqrt{K!}}   \langle 0\vert      [\otimes_{i=1}^K   \hat D(z_i)  ]   \hat U_{a_z -\bar z}   \vert  \psi(0, 0)\rangle \\
    &=   \left(\prod_{i=1}^K  e^{-i\bar p (z_i+a_z-\bar z)} \right)   \langle \boldsymbol z \vert  \hat U_{a_z}   \vert  \psi(\bar z, 0)\rangle
\end{align}
From the above, we see that $\bar p$ only enter as a phase into the amplitudes, and thus the measurement probabilities will be independent of $\bar p$, i.e., 
\begin{align}
    \vert \langle \boldsymbol z \vert  \hat U_{a_z} \vert \psi(\bar z,\bar p)\rangle\vert^2 = \vert  \langle \boldsymbol z \vert  \hat U_{a_z}   \vert  \psi(\bar z, 0)\rangle\vert^2 =: p(\boldsymbol{z}\vert a_z) .
\end{align}
Let us denote
\begin{align}
  \langle \boldsymbol z \vert  \hat U_{a_z}   \vert  \psi(\bar z, 0)\rangle =: q e^{i\xi}  ,
\end{align}
where we defined $q = \vert\langle \boldsymbol z \vert  \hat U_{a_z}   \vert  \psi(\bar z, 0)\rangle \vert$ as the magnitude and $\xi =\arg [ \langle \boldsymbol z \vert  \hat U_{a_z}   \vert  \psi(\bar z, 0)\rangle]$ as the phase. The argument function is defined as$\arg [z] = \arg [\vert z\vert e^{i\phi}] = \phi$.

Next, we want to calculate the FI of the state $    \hat U_{a_z} \vert \psi(\bar z,\bar p)\rangle$ using resolved photon counting in the $z$ domain. First note that we have
\begin{align}
     F(\hat U_{a_z}\vert \psi(\bar z,\bar p)\rangle,\{ \vert \boldsymbol z\rangle \langle \boldsymbol z\vert \}) &=  F(\hat U_{a_z}\vert \psi(\bar z,0)\rangle,\{ \vert \boldsymbol z\rangle \langle \boldsymbol z\vert \}) \\
     &= \int\kern-1.35em\sum \mathrm{d}^k z \frac{(\partial_{a_z}p(\boldsymbol{z}\vert a_z))^2}{p(\boldsymbol{z}\vert a_z)} = \int\kern-1.35em\sum \mathrm{d}^k z \frac{(\partial_{a_z} q^2)^2}{q^2}=   \int\kern-1.35em\sum \mathrm{d}^k z \frac{( 2 q \partial_{a_z} q)^2}{q^2}  =   4 \int\kern-1.35em\sum \mathrm{d}^k z  (\partial_{a_z}q)^2 \\ 
     &=   4 \int\kern-1.35em\sum \mathrm{d}^k z  \left(\partial_{a_z} \big\vert  \langle \boldsymbol z \vert  \hat U_{a_z}   \vert  \psi(\bar z, 0)\rangle \big\vert \right)^2 \\
     &=     4 \int\kern-1.35em\sum \mathrm{d}^k z    \big\vert  \langle \boldsymbol z \vert  \partial_{a_z}\hat U_{a_z}   \vert  \psi(\bar z, 0)\rangle \big\vert^2 =   4      \langle\psi (\bar z,0) \vert(\partial_{a_z}\hat U_{a_z})^\dag \left[ \int\kern-1.35em\sum \mathrm{d}^k z \vert \boldsymbol z\rangle \langle \boldsymbol z \vert \right] \partial_{a_z}\hat U_{a_z}   \vert  \psi(\bar z, 0)\rangle   \\
     &=4      \langle\psi (\bar z,0) \vert(\partial_{a_z}\hat U_{a_z})^\dag \partial_{a_z}\hat U_{a_z}   \vert  \psi(\bar z, 0)\rangle \\
     &= 4      \langle\psi (\bar z,0) \vert(\partial_{a_z}\hat U_{a_z})^\dag \partial_{a_z}\hat U_{a_z}   \vert  \psi(\bar z, 0)\rangle  - 4       \underbrace{ \vert \langle\psi (\bar z,0) \vert \hat U_{a_z}^\dag \partial_{a_z}\hat U_{a_z}   \vert  \psi(\bar z, 0)\rangle\vert^2}_{=0} \\
     &= \mathcal{F}(\hat U_{a_z} \vert\psi(\bar z, 0))
\end{align}
where the symbol $\int\kern-1.0em\sum \mathrm{d}^k z$ means that we sum the integrals over all the photon number subspaces. In the fourth line we used $\left(\partial_{a_z} \big\vert  \langle \boldsymbol z \vert  \hat U_{a_z}   \vert  \psi(\bar z, 0)\rangle \big\vert \right)^2 = \big\vert  \langle \boldsymbol z \vert  \partial_{a_z}\hat U_{a_z}   \vert  \psi(\bar z, 0)\rangle \big\vert^2$. This follows from $\langle \boldsymbol z \vert \partial_{a_z} \hat U_{a_z} \vert \psi(\bar z,0 )\rangle = \partial_{a_z} \langle \boldsymbol z \vert \hat U_{a_z} \vert \psi(\bar z,0 )\rangle = \partial_{a_z} (q e^{i\xi}) = e^{i\xi} \partial_{a_z} q = e^{i\xi} \partial_{a_z}\vert \langle \boldsymbol z \vert  \hat U_{a_z} \vert \psi(\bar z,0 )\rangle\vert$. Taking the squared modulus $\vert \cdot \vert^2$ of this chain of equalities and noting that $\vert e^{i\xi} \partial_{a_z}\vert \langle \boldsymbol z \vert  \hat U_{a_z} \vert \psi(\bar z,0 )\rangle\vert \vert^2 = \vert   \partial_{a_z}\vert \langle \boldsymbol z \vert  \hat U_{a_z} \vert \psi(\bar z,0 )\rangle\vert \vert^2 = (  \partial_{a_z}\vert \langle \boldsymbol z \vert  \hat U_{a_z} \vert \psi(\bar z,0 )\rangle\vert )^2 $ proves the used relation.
We also used in the fourth line we used $\int\kern-1.0em\sum \mathrm{d}^k z \vert \boldsymbol z\rangle \langle \boldsymbol z\vert = \hat{\mathds{1}}$. In the sixth line we used $ \vert \langle\psi (\bar z,0) \vert \hat U_{a_z}^\dag \partial_{a_z}\hat U_{a_z}   \vert  \psi(\bar z, 0)\rangle\vert^2 = 0$ which follows from $ 0 = \partial_{a_z} \int\kern-1.0em\sum \mathrm{d}^k z\,  p(\boldsymbol z\vert a_z) =   \int\kern-1.0em\sum \mathrm{d}^k z\, \partial_{a_z} q^2 =   \int\kern-1.0em\sum \mathrm{d}^k z\, 2 q \partial_{a_z} q  =   \int\kern-1.0em\sum \mathrm{d}^k z\, 2 qe^{-i\xi} \partial_{a_z} (q e^{i\xi}) = 2    \int\kern-1.0em\sum \mathrm{d}^k z\,   \langle \psi(\bar z,0) \vert \hat U_{a_z}^\dag\vert \boldsymbol z \rangle \langle\boldsymbol z \vert \partial_{a_z}\hat U_{a_z}   \vert  \psi(\bar z, 0)\rangle = \langle \psi(\bar z,0) \vert \hat U_{a_z}^\dag \partial_{a_z}\hat U_{a_z}   \vert  \psi(\bar z, 0)\rangle$, where we used the compleness relation and $\partial_{a_z}\xi =0$. With this, our proof is complete.    
\end{proof}

\subsection{Determining if a state satisfies the variance-optimal measurement condition}\label{app:VarianceOptimalMeasurementForHGState}
So, let us now check when the condition $ \partial_{a_z}\arg [ \langle \boldsymbol z \vert \hat U_{a_z}\vert \psi(\bar z,0)\rangle ] =0$ is satisfied for some arbitrary multi-mode squeezed vacuum state. Let us write the state in the Schmidt mode basis $\hat c_n^\dag = \int\mathrm{d}z\, \Psi_n (z) \hat D^\dag (z)$ in which the state appears in product form: 
\begin{align}
  \vert \psi(\bar z,\bar p)\rangle &=  \bigotimes_{n} \exp \left( +\frac{r_n e^{i\phi_n}}{2} \hat c_n^{\dag2} -\frac{r_n e^{-i\phi_n}}{2} \hat c_n^{2}\right)\vert 0\rangle \\
  &=\bigotimes_{n} \exp \left( +\frac{\tanh (r_n)e^{i\phi_n}}{2} \hat c_n^{\dag2} \right)\vert 0\rangle  \\
   &=\bigotimes_{n} \exp \left( +\frac{\tanh (r_n)e^{i\phi_n}}{2}  \int \mathrm{d}z \, \Psi_n (z)\hat D^\dag (z) \int \mathrm{d}\tilde z \, \Psi_n (\tilde z)\hat D^\dag (\tilde  z) \right)\vert 0\rangle  \\
   &= \exp \left( \int \mathrm{d}z\int \mathrm{d}\tilde z \left[\sum_{n}\frac{\tanh (r_n)e^{i\phi_n}}{2} \Psi_n (z)\Psi_n (\tilde z)\right]   \hat D^\dag (z)   \hat D^\dag (\tilde z) \right)\vert 0\rangle \\
   &= \exp \left( \int \mathrm{d}z\int \mathrm{d}\tilde z  \, g(z,\tilde z)   \hat D^\dag (z)   \hat D^\dag (\tilde z) \right)\vert 0\rangle
\end{align}
Now, let us gain some insight and  first look at a simple two-photon example:
\begin{align}
    \langle (z,\tilde z)^T\vert \psi(\bar z,\bar p)\rangle &= \frac{1}{\sqrt{2!}}\langle 0 \vert \hat D(z) \hat D(\tilde z) \vert \psi(\bar z,\bar p)\rangle \\
    &= \frac{1}{\sqrt{2!}}\langle 0 \vert \hat D(z) \hat D(\tilde z) \int \mathrm{d}z'\int \mathrm{d}\tilde z'  \, g(z',\tilde z')   \hat D^\dag (z')   \hat D^\dag (z') \vert 0\rangle \\
    &= \frac{1}{\sqrt{2!}} \int \mathrm{d}z'\int \mathrm{d}\tilde z'  \, g(z',\tilde z')   \langle 0 \vert \hat D(z) \hat D(\tilde z)\hat D^\dag (z')   \hat D^\dag (z') \vert 0\rangle \\
      &= \frac{1}{\sqrt{2!}} \int \mathrm{d}z'\int \mathrm{d}\tilde z'  \, g(z',\tilde z')    \left[ \delta (z-z')\delta (\tilde z-\tilde z')+  \delta (z-\tilde z')\delta (\tilde z-  z')\right] \\
      &= \frac{1}{\sqrt{2!}} \left[g(z,\tilde z)+g(\tilde z,  z)\right] .
\end{align}
For the parameter imprinted state we have
\begin{align}
    \langle (z,\tilde z)^T  \vert\hat U_{a_z}\vert \psi(\bar z,\bar p)\rangle &= \frac{1}{\sqrt{2!}} \left[g(z+a_z,\tilde z+a_z)+g(\tilde z+a_z,  z+a_z)\right] .
\end{align}

Now, for an arbitrary $z$-resolving photon counting measurement, we find
\begin{align}
    \langle \boldsymbol z\vert \psi(\bar z,\bar p)\rangle &=\frac{1}{\sqrt{(2K)!}}\langle 0 \vert [\otimes_{i=1}^{2K}\hat D(z_i)]    \vert \psi(\bar z,\bar p)\rangle \\
    &= \frac{1}{\sqrt{(2K)!}} \frac{1}{K!} \int \mathrm{d}z_1'\ldots \int \mathrm{d}z_{2K}' \,\,   g(z_1',z_2') \cdots g(z_{2K-1}',z_{2K}') \langle 0 \vert [\otimes_{i=1}^{2K}\hat D(z_i) ]  [\otimes_{i=1}^{2K}\hat D^\dag(z_i') ] \vert 0\rangle \\
    &= \frac{1}{\sqrt{(2K)!}} \frac{1}{K!}  \sum_{\pi\in S_{2K}} \prod_{b=1}^{K}  g(z_{\pi(2b-1)}, z_{\pi(2b)})  .
\end{align}
Here, $\sum_{\pi\in S_{2K}}  $ means to sum over all permutations $\pi$ of the indices $\{1,\ldots,2K \}$.

Let us now consider various states and see if they satisfy the condition for a variance-optimal measurement.

\subsubsection{Phase-insensitive resolved photon counting measurement for product of single-mode squeezed vacuua in the HG mode basis}
Let us consider 
\begin{align}
    \vert \psi (\bar z,\bar p)\rangle &=  \bigotimes_{n} \exp \left( +\frac{r_n e^{i\phi_n}}{2} \hat c_n^{\dag2} -\frac{r_n e^{-i\phi_n}}{2} \hat c_n^{2}\right)\vert 0\rangle
\end{align}
where 
\begin{align}
    \hat c_n^\dag = \int\mathrm{d}z\, \Phi_n (z)  \hat D^\dag (z)
\end{align}
is the HG mode basis with
\begin{align}
    \Phi_n (z) = \Phi (z; z_0, p_0,\sigma,\theta)
\end{align}
It is straightforward to show that
\begin{align}
    \bar z &= z_0 \\
    \bar p &= p_0
\end{align}
regardless of how the squeezing $r_n$ is distributed among the modes. 

Now, it follows that the state $\hat U_{a_z}\vert \psi (\bar z, 0) \rangle =  \bigotimes_{n} \exp \left( +\frac{r_n e^{i\phi_n}}{2} \hat c_n'^{\dag2} -\frac{r_n e^{-i\phi_n}}{2} \hat c_n'^{2}\right)\vert 0\rangle$ has the same form, but with the mode operators
\begin{align}
    \hat c_n'^\dag = \int \mathrm{d}z \,  \Phi_n (z+a_z; z_0, 0,\sigma,\theta_n) \hat D^\dag(z)
\end{align}
Now, let us consider:
\begin{align}
    \langle \boldsymbol z\vert \hat U_{a_z} \vert \psi (\bar z, 0)\rangle  = \frac{1}{\sqrt{(2K)!}} \frac{1}{K!}  \sum_{\pi\in S_{2K}} \prod_{b=1}^{K}  g(z_{\pi(2b-1)}+a_z, z_{\pi(2b)}+a_z)  .
\end{align}
We then find
\begin{align}
    g(z +a_z,  \tilde z+a_z) =   \sum_{n}\frac{\tanh (r_n)}{2} \Phi_n (z+a_z; z_0, 0,\sigma,\theta_n) (z) \Phi_n (\bar z+a_z; z_0, 0,\sigma,\theta_n)
\end{align}
Now, for $\theta =0$, $g(z +a_z,  \tilde z+a_z)$ is real, and thus, $\langle \boldsymbol z\vert \hat U_{a_z} \vert \psi (\bar z, 0)\rangle$ is real valued as well. So the condition $ \partial_{a_z}\arg [ \langle \boldsymbol z \vert \hat U_{a_z}\vert \psi(\bar z,0)\rangle ] =0$ is satisfied.

Similarly, $\theta_n =\theta$ for all $n$, we have $\arg g(z+a_z,\tilde z+a_z) = -2\theta$ from which follows $\arg [\langle \boldsymbol z\vert \hat U_{a_z} \vert \psi (\bar z, 0)\rangle] = - 2K\theta$, and the condition is also satisfied. In addition, one can show that the condition can be satisfied if   $\theta_n$ differ from one another by integer multiples of $\pi$.

 For the general case $\theta_n$, the condition is no longer generally satisfied.

\subsubsection{Phase-insensitive measurement for the optimal state}\label{app:VarianceOptimalMeasurementForVarianceOptimalState}

Let us now consider our optimal state and examine under what conditions phase-insensitive measurements can be variance optimal.
Again, the state can be written as
\begin{align}
      \vert \psi (\bar z, \bar p)\rangle &=  \bigotimes_{n=\{i,j\}} \exp \left( +\frac{r_n e^{i\phi_n}}{2} \hat c_n^{\dag2} -\frac{r_n e^{-i\phi_n}}{2} \hat c_n^{2}\right)\vert 0\rangle 
\end{align}
with 
\begin{align}
    \hat c_n^\dag = \int \mathrm{d}z \, \Psi_n (z) \hat D^\dag (z) .
\end{align}
Recall that the optimal states from Sec.~\ref{app:QFIregularized} has  Schmidt modes fo the form
\begin{align}
    \Psi_1 (z) &=    \Phi_{0} (z;  z_1,  p_1,\sigma_z,\theta_1) + O(\vert\mathcal{S}\vert) \\
     \Psi_2 (z) &= \Phi_{0} (z;  z_2,  p_2,\sigma_z,\theta_2) + O(\vert\mathcal{S}\vert) \\
     \Psi_3 (z) &=    \Phi_{1} (z;  z_1,  p_1,\sigma_z,\theta_1) + O(\vert\mathcal{S}\vert) \\
     \Psi_4 (z) &= \Phi_{1} (z;  z_2,  p_2,\sigma_z,\theta_2) + O(\vert\mathcal{S}\vert) .
\end{align}
Now assume that like in Sec.~\ref{app:QFIregularized} that
\begin{align}
    p_1 &= p_0 + \delta_{+,p} \\
    p_2 &= p_0 - \delta_{-,p} ,
\end{align}
and our choice of $r_n$ shall ensure that 
\begin{align}
    \bar p = p_0 .
\end{align}
How to achieve this was discussed in Sections~\ref{app:FullyOptimalRegularizedQFI} and \ref{app:VarianceOptimalRegularizedQFI}.

Now again, the state  $\hat U_{a_z}\vert \psi (\bar z, 0) \rangle =  \bigotimes_{n} \exp \left( +\frac{r_n e^{i\phi_n}}{2} \hat c_n'^{\dag2} -\frac{r_n e^{-i\phi_n}}{2} \hat c_n'^{2}\right)\vert 0\rangle$ has the same form, but with the mode operators
\begin{align}
    \hat c_n'^{\dag} = \int \mathrm{d}z \, \Psi_n (z+a_z) e^{+ i \bar p (z+a_z) } \hat D^\dag (z)  = \int \mathrm{d}z \, \Psi_n (z+a_z) e^{+ i p_0 (z+a_z) } \hat D^\dag (z)
\end{align}
and we have 
\begin{align}
    \Psi_1 (z+a_z) e^{+ i p_0 (z+a_z) } &= \Phi_0 (z;z_1, +\delta_{+,p},\sigma_z,\theta_1) \\
     \Psi_2 (z+a_z)e^{+ i p_0 (z+a_z) } &= \Phi_0 (z;z_2,-\delta_{-,p},\sigma_z,\theta_2) \\
      \Psi_3 (z+a_z) e^{+ i p_0 (z+a_z) } &= \Phi_1 (z;z_1, +\delta_{+,p},\sigma_z,\theta_1) \\
     \Psi_4 (z+a_z)e^{+ i p_0 (z+a_z) } &= \Phi_1 (z;z_2,-\delta_{-,p},\sigma_z,\theta_2) .
\end{align}
Now, let us consider:
\begin{align}
    \langle \boldsymbol z\vert \hat U_{a_z} \vert \psi (\bar z, 0)\rangle  = \frac{1}{\sqrt{(2K)!}} \frac{1}{K!}  \sum_{\pi\in S_{2K}} \prod_{b=1}^{K}  g(z_{\pi(2b-1)}+a_z, z_{\pi(2b)}+a_z) 
\end{align}
Now, let us assume $z_1=z_2=z_0$ (as we will see, this choice ensures that certain terms can be factored out): 
\begin{align}
    g(z +a_z,  \tilde z+a_z) &=   + \frac{\tanh (r_1)}{2} \Psi_1 (z+a_z) e^{+ i p_0 (z+a_z) } \Psi_1 (\tilde z+a_z) e^{+ i p_0 (\tilde z+a_z) } \\
     &\quad \,\,+ \frac{\tanh (r_2)}{2} \Psi_2 (z+a_z) e^{+ i p_0 (z+a_z) } \Psi_2 (\tilde z+a_z) e^{+ i p_0 (\tilde z+a_z) } \\
     &=  \left( \frac{1}{2\sigma_z^2 \pi}\right)^{2/4} \sqrt{\frac{2^{-n}}{n!}}^2   e^{-\frac{1}{2}\left( \frac{z+a_z-z_0}{\sqrt{2}\sigma_z}\right)^2}e^{-\frac{1}{2}\left( \frac{\tilde z +a_z-z_0}{\sqrt{2}\sigma_z}\right)^2}  \\
     & \times \Big[ t_1  e^{-i(p_0+\delta_{+,p})(z+\tilde z+2a_z-2z_0)} e^{-i2\theta_1} \cdot e^{+ i p_0 (z+ \tilde z+ 2a_z)} + t_2  e^{-i(p_0-\delta_{-,p})(z+\tilde z +2a_z-2z_0)} e^{-i2\theta_2} \cdot e^{+ i p_0 (z+ \tilde z+ 2a_z)} \Big] \\
     &=  \left( \frac{1}{2\sigma_z^2 \pi}\right)^{2/4} \sqrt{\frac{2^{-n}}{n!}}^2   e^{-\frac{1}{2}\left( \frac{z+a_z-z_0}{\sqrt{2}\sigma_z}\right)^2}e^{-\frac{1}{2}\left( \frac{\tilde z +a_z-z_0}{\sqrt{2}\sigma_z}\right)^2}  \\
     & \times e^{+2ip_0 z_0}\Big[ t_1  e^{- i \delta_{+,p} (z+\tilde z+2a_z - 2z_0)} e^{-i 2\theta_1} + t_2   e^{+i\delta_{-,p} (z+\tilde z+2a_z-2z_0)} e^{-i 2\theta_2} \Big]
\end{align}
where $t_n = \tanh (r_n)$.

Now, if we pick $\delta_{+,p} = \delta_{-,p} =\delta$, $r_1 =  r_2$, and also $\theta_1= \theta_2 = \theta$ , it follows that
\begin{align}
       g(z +a_z,  \tilde z+a_z) &=   \left( \frac{1}{2\sigma_z^2 \pi}\right)^{2/4} \sqrt{\frac{2^{-n}}{n!}}^2   e^{-\frac{1}{2}\left( \frac{z+a_z-z_0}{\sqrt{2}\sigma_z}\right)^2}e^{-\frac{1}{2}\left( \frac{\tilde z +a_z-z_0}{\sqrt{2}\sigma_z}\right)^2} \tanh (r)  \\
     & \times e^{+i2p_0 z_0} e^{-i2\theta}\Big[    e^{+ i \delta (z+\tilde z+2a_z - 2z_0)}   +     e^{-i\delta  (z+\tilde z+2a_z-2z_0)}   \Big] \\
      &=   \left( \frac{1}{2\sigma_z^2 \pi}\right)^{2/4} \sqrt{\frac{2^{-n}}{n!}}^2   e^{-\frac{1}{2}\left( \frac{z+a_z-z_0}{\sqrt{2}\sigma_z}\right)^2}e^{-\frac{1}{2}\left( \frac{\tilde z +a_z-z_0}{\sqrt{2}\sigma_z}\right)^2} \tanh (r)  \\
     & \times e^{+i2p_0 z_0} e^{-i2\theta}\cdot  2 \cos ( \delta (z+\tilde z+ 2a_z- 2z_0)) .
\end{align}

We thus find that $\arg[g(z+a_z, \tilde{z}+a_z)] = 2(p_0 z_0 - \theta)$, and therefore $\arg[\langle\boldsymbol{z}\vert\hat U_{a_z}\vert\Psi(\bar{z},0)\rangle] = 2K(p_0 z_0 - \theta)$, from which it follows that $\partial_{a_z}\arg[\langle\boldsymbol{z}\vert\hat U_{a_z}\vert\psi(\bar{z},0)\rangle] = 0$. Note that the choice $\theta_1 = \theta$ and $\theta_2 = \theta + \pi$ leads to an analogous result, with a $\sin$ dependence replacing the $\cos$.

The conditions for a variance-optimal measurement thus require equal squeezing $r_1 = r_2$, together with $\delta_- = \delta_+$ and $z_1 = z_2$. For $z_1 \neq z_2$, one generally has $\partial_{a_z}\arg[\langle\boldsymbol{z}\vert\hat U_{a_z}\vert\psi(\bar{z},0)\rangle] \neq 0$, and the variance-optimality condition is no longer satisfied.

\end{document}